\def\doi{8 (1:18) 2012}
\theoremstyle{definition}\newtheorem{example}[thm]{Example}
\theoremstyle{definition}\newtheorem{definition}[thm]{Definition}
\theoremstyle{plain}
\theoremstyle{definition}\newtheorem{specification}[thm]{Specification}
\newcommand{\bugnelcodice}[1]{}
\newcommand{\andalso}[0]{\quad}
\newcommand{\Prod}[0]{\mathrm{\Pi}}
\newcommand{\Type}[0]{\mbox{\verb[language=grafite]{Type}}}
\newcommand{\Prop}[0]{\mbox{\verb[language=grafite]{Prop}}}
\newcommand{\match}[3]{\ensuremath{
 \mbox{\verb[language=grafite]+match+}~#1~
 \mbox{\verb[language=grafite]+in+}~#2~
 \mbox{\verb[language=grafite]+return+}~#3~
}}
\newcommand{\letin}[4]{\ensuremath{
 \mbox{\verb[language=grafite]+let+}~(#1:#2) := #3~
 \mbox{\verb[language=grafite]+in+}~#4
}}
\newcommand{\definit}[3]{\ensuremath{
\mbox{\verb[language=grafite]+definition+}~#1 : #2 := #3
}}
\newcommand{\axiom}[2]{\ensuremath{
\mbox{\verb[language=grafite]+axiom+}~#1 : #2
}}
\newcommand{\letr}[7][]{\ensuremath{
\hspace{-0.46em}\begin{array}{l}
\mbox{\verb[language=grafite]+let rec+}~#2 : #3 := #4
    ~\mbox{\verb[language=grafite]+and+}~\ldots #1
    ~\mbox{\verb[language=grafite]+and+}~ #5 : #6 := #7
\end{array}
}}
\newcommand{\letcr}[7][]{\ensuremath{
\hspace{-0.46em}\begin{array}{l}
\mbox{\verb[language=grafite]+let corec+}~#2 : #3 := #4
    ~\mbox{\verb[language=grafite]+and+}~\ldots #1
    ~\mbox{\verb[language=grafite]+and+}~ #5 : #6 := #7
\end{array}
}}
\newcommand{\inductive}[9]{\ensuremath{
\hspace{-0.46em}\begin{array}{l}
 #1\textrm{\verb[language=grafite]+inductive+}
\quad #2~:~#3 := #4 \;|\;\ldots\;|\; #5\\
\qquad\qquad \textrm{\verb[language=grafite]+with+}~\ldots\\
\qquad\qquad \textrm{\verb[language=grafite]+with+}~ 
#6 ~:~ #7 := #8 \;|\;\ldots\;|\; #9 
\end{array}
}}
\newcommand{\inductiveX}[4]{\ensuremath{
\hspace{-0.46em}\begin{array}{l}
 #1\textrm{\verb[language=grafite]+inductive+} #2~:~#3 := #4
\end{array}
}}
\newcommand{\subst}[2]{[{#2}/{#1}]}
\newcommand{\mwhd}{\triangleright_{\mathrm{whd}}}
\renewcommand{\verb}{\lstinline}
\newcommand{\raisedrule}[2][0em]{\leaders\hbox{\rule[#1]{1pt}{#2}}\hfill}
\newcommand{\placeholder}{\stackrel{\raisedrule[0.165em]{0.3pt}\!\rightarrowtriangle}{\mathrm{placeholder}}}
\renewcommand{\vector}[1]{\overrightarrow{#1}}
\newcommand{\lland}[0]{\quad \land \quad}
\newcommand{\refineslabel}{\leq}
\newcommand{\refines}[2]{#1 \refineslabel #2}
\newcommand{\mldots}{\ensuremath{\stackrel{\rightarrowtriangle}{?}}}
\newcommand{\WF}[1]{\ensuremath{\mathcal{WF}(#1)}}
\newcommand{\h}[1]{} 
\newcommand{\nat}{\ensuremath{\mathbb{N}}}
\newcommand{\env}[0]{\ensuremath{\mathsf{Env}}}
\newcommand{\pts}[0]{\ensuremath{\mathsf{PTS}}}
\newcommand{\elim}[0]{\ensuremath{\mathsf{elim}}}
\newcommand{\eatsep}{|\hspace{-0.4em}_{\stackrel{~}{\blacktriangle}}}
\newcommand{\fooA}[1]{\ensuremath{\stackrel{\mathcal{FOOA}}{~\leadsto}}}
\newcommand{\fooB}[1]{\ensuremath{\stackrel{\mathcal{FOOB}}{~\leadsto}}}
\newcommand{\reclabel}{\ensuremath{\smash{}\mathcal{R^{\smash{\Uparrow}}}}}
\newcommand{\recElabel}{\ensuremath{\smash{}\mathcal{R^{\smash{\Downarrow}}}}}
\newcommand{\rec}[1]{\ensuremath{\stackrel{\,\reclabel}{~\leadsto}}}
\newcommand{\recE}[1]{\ensuremath{\stackrel{\recElabel}{~\leadsto}}}
\newcommand{\recO}[1]{\ensuremath{\stackrel{\mathcal{R}}{~\leadsto}}}
\newcommand{\xP}[0]{\ensuremath{\Sigma}}
\newcommand{\xG}[0]{\ensuremath{\Gamma}}
\newcommand{\xS}[0]{\ensuremath{\Phi}}
\newcommand{\metas}[0]{\ensuremath{\mathcal{M}}}
\newcommand{\xPSG}[3]{\xP{#1},~\xS{#2},~\xG{#3}}
\newcommand{\xPSGp}[3]{(\xP{#1},~\xS{#2})~\xG{#3}}
\newcommand{\xPS}[2]{\xP{#1},~\xS{#2}}
\newcommand{\xPSp}[2]{(\xP{#1},~\xS{#2})}
\newcommand{\UNI}{\stackrel{?}{\equiv}}
\newcommand{\unilabel}{\ensuremath{\mathcal{U}}}
\newcommand{\uni}[1]{\ensuremath{\stackrel{\unilabel#1}{~\leadsto~}}}
\newcommand{\eatT}{\ensuremath{\mathcal{E}^T}}
\newcommand{\eatt}{\ensuremath{\mathcal{E}_t}}
\newcommand{\unimany}{\ensuremath{\stackrel{\eatt}{~\leadsto~}}}
\newcommand{\coer}{\ensuremath{\stackrel{\mathcal{C}}{~\leadsto}}}
\newcommand{\forcetotypelabel}{\ensuremath{\stackrel{\mathcal{F}}{~\leadsto}}}
\newcommand{\whdlabel}{\mwhd}
\newcommand{\eatprodslabel}{\ensuremath{\stackrel{\eatT}{~\leadsto~}}}
\newcommand{\infrule}[3][]{#1\quad\frac{\begin{array}{l}#2\end{array}}{#3}}
\newcommand{\eatprods}[9]{\ensuremath{\xPSGp{#1}{#2}{#3} \vdash #4 ~\eatsep~ #5 \eatprodslabel{} #6 :\; #7~ (\xP{}#8,~ \xS{}#9)}}
\newcommand{\refinexbase}[9]{\ensuremath{\xPSGp{#1}{#2}{#3} \vdash #4 #5 \rec{}~ #6 #7 ~ (\xP{}#8,~ \xS{}#9)}}
\newcommand{\refinexbaseE}[9]{\ensuremath{\xPSGp{#1}{#2}{#3} \vdash #4 #5 \recE{}~ #6 #7 ~ (\xP{}#8,~ \xS{}#9)}}
\newcommand{\refinex}[8]{\refinexbase{#1}{#2}{#3}{#4}{}{#5}{:\; #6}{#7}{#8}}
\newcommand{\refinexE}[8]{\refinexbaseE{#1}{#2}{#3}{#4}{:\; #5}{#6}{}{#7}{#8}}
\newcommand{\unifx}[8]{\ensuremath{\xPSGp{#2}{#3}{#4} \vdash #5 \UNI #6 \uni{#1} (\xP{}#7,~ \xS{}#8)}}
\newcommand{\Unimanyx}[9]{\ensuremath{\XPSGp{#1}{#2}{#3} \vdash #4 \UNI #5 \unimany{} #6~\eatsep~#7\h{(\xP{}#8,~ \xS{}#9)}}}
\newcommand{\unimanyx}[9]{\ensuremath{\xPSGp{#1}{#2}{#3} \vdash #4 \UNI #5 \unimany{} #6~\eatsep~#7 ~ (\xP{}#8,~ \xS{}#9)}}
\newcommand{\unifcoercex}[9]{\ensuremath{\xPSGp{#1}{#2}{#3} \vdash #4 : #5 \UNI #6 \coer{}~ #7~ (\xP{}#8,~ \xS{}#9)}}
\newcommand{\forcetotype}[8]{\ensuremath{\xPSGp{#1}{#2}{#3} \vdash #4 \forcetotypelabel{}~ #5 :\; #6~ (\xP{}#7,~ \xS{}#8)}}
\newcommand{\whd}[5]{\ensuremath{\xPSG{#1}{#2}{#3} \vdash #4 \whdlabel{} #5}}
\newcommand{\lookforcx}[9]{\ensuremath{\xPSGp{#1}{#2}{#3} \vdash #4 \rightarrowtail #5 \stackrel{\Delta}{~\leadsto~} #6,~ #7 :\; #8~ (\xP{}#9,\xS{})}}
\newcommand{\XPS}[2]{\h{\xP{#1},~\xS{#2}}}
\newcommand{\XPSG}[3]{\h{\xP{#1},~\xS{#2},~}\xG{#3}}
\newcommand{\XPSGp}[3]{\h{(\xP{#1},~\xS{#2})~}\xG{#3}}
\newcommand{\Eatprods}[9]{\ensuremath{\XPSG{#1}{#2}{#3} \vdash #4 ~\eatsep~ #5 \eatprodslabel{} #6 :\; #7\h{,~ \xP{}#8,~ \xS{}#9}}}
\newcommand{\Refinexbase}[9]{\ensuremath{\XPSG{#1}{#2}{#3} \vdash #4 #5 \rec{}~ #6 #7\h{,~ \xP{}#8,~ \xS{}#9}}}
\newcommand{\RefinexbaseE}[9]{\ensuremath{\XPSG{#1}{#2}{#3} \vdash #4 #5 \recE{}~ #6 #7\h{,~ \xP{}#8,~ \xS{}#9}}}
\newcommand{\Refinex}[8]{\Refinexbase{#1}{#2}{#3}{#4}{}{#5}{:\; #6}{#7}{#8}}
\newcommand{\RefinexE}[8]{\RefinexbaseE{#1}{#2}{#3}{#4}{:\; #5}{#6}{}{#7}{#8}}
\newcommand{\Extendx}[8]{\ensuremath{\xP{}\h{#8} \leadsto \xP{} \cup \h{= \xP{}#1 \land}\{ \Gamma{}#2 \vdash #3 :\; #4 \h{\land}\;,\; \Gamma{}#5 \vdash #6 :~ #7\}}}
\newcommand{\Extendxs}[5]{\ensuremath{\xP{}\h{#5} \leadsto \xP{} \cup \h{= \xP{}#1 \land}\{ \Gamma{}#2 \vdash #3 :\; #4 \}}}
\newcommand{\Unifx}[8]{\ensuremath{\XPSG{#2}{#3}{#4} \vdash #5 \UNI #6 \uni{#1} \h{\xP{}#7,~ \xS{}#8}}}
\newcommand{\Unifcoercex}[9]{\ensuremath{\XPSG{#1}{#2}{#3} \vdash #4 : #5 \UNI #6 \coer{}~ #7\h{,~ \xP{}#8,~ \xS{}#9}}}
\newcommand{\Forcetotype}[8]{\ensuremath{\XPSG{#1}{#2}{#3} \vdash #4 \forcetotypelabel{}~ #5 :\; #6\h{,~ \xP{}#7,~ \xS{}#8}}}
\newcommand{\Whd}[5]{\ensuremath{\XPSG{#1}{#2}{#3} \vdash #4 \whdlabel{} #5}}
\newcommand{\Lookforcx}[9]{\ensuremath{\XPSG{#1}{#2}{#3} \vdash #4 \rightarrowtail #5 \stackrel{\Delta}{~\leadsto~} #6,~ #7 :\; #8\h{,~ \xP{}#9}}}
\begin{document}

\title[A Bi-Directional Refinement Algorithm for CIC]{A Bi-Directional Refinement Algorithm for the Calculus of (Co)Inductive Constructions}

\author[A.~Asperti]{Andrea Asperti\rsuper a}	
\address{{\lsuper{a,b,c}}Dipartimento di Scienze dell'informazione\\
 Mura Anteo Zamboni 7\\
 40127, Bologna, Italy}	
\email{\{asperti,ricciott,sacerdot\}@cs.unibo.it}  

\author[W.~Ricciotti]{Wilmer Ricciotti\rsuper b}	
\address{\vskip-6 pt}	

\author[C.~Sacerdoti Coen]{Claudio Sacerdoti Coen\rsuper c}	
\address{\vskip-6 pt}	

\author[E.~Tassi]{Enrico Tassi\rsuper d}	
\address{{\lsuper d}Microsoft Research-INRIA Joint Centre\\
Building I, Parc Orsay Universit\'e \\
28, rue Jean Rostand, 91893 Orsay Cedex}	
\email{enrico.tassi@inria.fr}  



\keywords{refiner, type inference, interactive theorem prover, calculus of inductive constructions, Matita}
\subjclass{D.3.1, F.3.0}


\begin{abstract}
The paper describes the refinement algorithm for the Calculus of
(Co)Inductive Constructions (CIC) implemented in the interactive theorem prover
Matita.

The refinement algorithm is in charge of giving a meaning to the terms,
types and proof terms directly written by the user or generated by using
tactics, decision procedures or general automation. The terms are written
in an ``external syntax'' meant to be user friendly that allows omission
of information, untyped binders and a certain liberal use of user defined
sub-typing. The refiner modifies the terms to obtain related well typed terms
in the internal syntax understood by the kernel of the ITP. In particular,
it acts as a type inference algorithm when all the binders are untyped.

The proposed algorithm is bi-directional: given a term in external syntax
and a type expected for the term, it propagates as much typing information
as possible towards the leaves of the term. Traditional mono-directional
algorithms, instead, proceed in a bottom-up way by inferring the type of
a sub-term and comparing (unifying) it with the 
type expected by its context only at the end.
We propose some novel bi-directional rules for CIC that are particularly
effective. Among the benefits of bi-directionality we have better error
message reporting and better inference of dependent types.
Moreover, thanks to bi-directionality, the coercion system for sub-typing
is more effective and type inference generates simpler unification
problems that are more likely to be solved by the inherently 
incomplete higher order unification algorithms implemented.

Finally we introduce in the external syntax
the notion of vector of placeholders that enables to omit at once an arbitrary
number of arguments. Vectors of placeholders allow a trivial implementation
of implicit arguments and greatly simplify the implementation of primitive
and simple tactics.
\end{abstract}

\maketitle

\bugnelcodice{ scompare nella versione definitva
\newpage
\tableofcontents{}
\newpage}

\section{Introduction}\label{S:one}

In this paper we are interested in describing one of the key ingredients in the
implementation of Interactive Theorem Provers (ITP) based on type theory.  

The architecture of these tools is usually organized in layers 
and follows the so called de Bruijn principle: the correctness of the whole
system solely depends on the innermost component called kernel. 
Nevertheless, from a user perspective, the most interesting layers are the
external ones, the ones he directly interacts with. Among these, the
\emph{refiner} is the one in charge of giving a meaning to the terms and types
he writes. The smarter the refiner is, the more freedom the user has
in omitting pieces of information that can be reconstructed.
The refiner is also the component generating the majority of error
messages the user has to understand and react to
in order to finish his proof or definition.

This paper is devoted to the description of a refinement algorithm
for the Calculus of (Co)Inductive Constructions, the type theory on
which the Matita~\cite{matita-jar-uitp}, Coq~\cite{coq} and 
Lego~\cite{lego} ITPs are based on.

\subsection{Refinement}
In this and in the previous paper~\cite{ck-sadhana} we are interested in the
implementation of interactive theorem provers (ITP) for dependently typed languages
that are 
heavily based on the Curry-Howard isomorphism. Proofs are represented
using lambda-terms. Proofs in progress are represented using lambda-terms
containing metavariables that are implicitly existentially quantified.
Progression in the proof is represented by instantiation of metavariables
with terms. Metavariables 
are also useful to represent
missing or partial information, like
untyped lambda-abstractions or instantiation of polymorphic functions to
omitted type arguments.

Agda~\cite{agda} and Matita~\cite{matita-jar-uitp} are examples of systems
implemented in this way. Arnaud Spiwack in his Ph.D. thesis~\cite{Spiwack} partially
describes a forthcoming release of Coq 8.4 that will be implemented on the same
principles.

The software architecture of these systems is usually built in layers.
The innermost layer is the \emph{kernel} of the ITP. The main algorithm
implemented by the kernel is the \emph{type checker}, which is based in turn
on \emph{conversion} and \emph{reduction}. The type checker takes as input
a (proof) term possibly containing metavariables and it verifies if the 
partial term is correct so far. 
To allow for type-checking, metavariables are associated to
sequents, grouping their types together with the context (hypotheses) 
available to inhabit the type.
The kernel does not alter metavariables since no
 instantiation takes place during reduction, conversion or type checking.
 
The kernel has the important role of reducing the trusted code base of the ITP. 
Indeed, the kernel eventually verifies all proofs produced by the outer layers,
detecting incorrect proofs generated by bugs in those layers. Nevertheless,
the user never interacts directly with the kernel and the output of the
kernel is just a boolean that is never supposed to be false when the rest of
the system is bug free. The most interesting layers from the user point of
view are thus the outer layers. The implementation of a kernel for a variant of
the Calculus of (Co)Inductive Constructions (CIC) has been described
in~\cite{ck-sadhana} down to the gory details that make the 
implementation efficient.

The next layer is the \emph{refiner} and is the topic of this paper.
The main algorithm implemented by the refiner is the refinement algorithm
that tries to infer as much information as it is needed to make its input
meaningful. In other words it
takes
as input a partial term, written in an ``external syntax'', and tries to obtain
a ``corresponding'' well typed term. The input term can either be user
provided or it can be a partial proof term generated by some 
proof command (called tactic) or automation procedure. 
The gap between the external and internal syntax is rather
arbitrary and system dependent. Typical examples of external syntaxes 
allow for:
 \begin{iteMize}{$\bullet$}
  \item Untyped abstractions. Hence the refiner must perform type inference
   to recover the explicit types given to bound variables. The polymorphism of
   CIC is such that binders are required to be typed to make type checking
   decidable.
  \item Omission of arguments, in particular omission of types used to
   instantiate polymorphic functions. Hence the refiner must
   recover the missing information during type inference to turn implicit
   into explicit polymorphism.
  \item Linear placeholders for missing terms that are not supposed to 
   be discovered during
   type inference. For instance, a placeholder may be inserted by a tactic to
   represent a new proof obligation. Hence the refiner must turn the placeholder
   into a metavariable by constraining the set of free variables that may occur
   in it and the expected type.
  \item Implicit ad-hoc sub-typing determined by user provided cast functions (called coercions)
   between types or type families. Hence the refiner must modify the user
   provided term by explicitly inserting the casts in order to let
   the kernel completely ignore sub-typing.
 \end{iteMize}
Coercions are user provided functions and are thus free to completely ignore
their input. Thus a refiner that handles coercions is actually able to
arbitrarily patch wrong user provided terms turning them into arbitrarily
different but well typed terms. Moreover, the insertion of a coercion between
type families can also introduce new metavariables (the family indexes) that
play the role of proof obligations for pre-conditions of the coercion. For
instance, a coercion from lists to ordered lists can open a proof obligation
that requires the list to be sorted. 

The refiner is the most critical system component from the user point of view
since it is responsible for the ``intelligence'' of the ITP: the more powerful
the refiner is, the less information is required from the user and the simpler
the outer layers become. For instance, a series of recent techniques that
really improve the user experience have all been focused in the direction of
making the refiner component more powerful and extensible by the user.
Canonical structures~\cite{canonical-structures}, 
unification hints~\cite{unification-hints} and 
type classes~\cite{SozeauO08} are devices that let the user drive some 
form of proof search that is seamlessly integrated in the refinement process. 
While the latter device is directly integrated into the refinement algorithm,
the first two are found in the unification algorithm used by the refiner.

They all make it possible to achieve similar objectives, the second being more general than
the first and the last two being incomparable from the point of view of
efficiency (where the second is best) and expressiveness (where the third is
more flexible). The implementation of type classes done in Coq is actually
provided by an additional layer outside the refiner for historical reasons.

In this paper we will describe only the refinement algorithm implemented in a
refiner for a variant of the Calculus of (Co)Inductive Constructions. The
algorithm is used in the forthcoming major release of the 
Matita\footnote{Matita is free software available at \url{http://matita.cs.unibo.it}} ITP (1.0.x). The
algorithm calls a unification algorithm that will be specified in this paper and
described elsewhere.
We do not consider type classes in our refinement algorithm since we prefer to
assume the unification algorithm to implement unification hints. Nevertheless,
type classes can be easily added to our algorithm with minor modifications and
indeed the relevant bits that go into the refiner are implemented in Matita.

Before addressing bi-directionality, which is a peculiarity of the algorithm
that has not been fully exploited yet\footnote{The refinement algorithm of Coq
8.3, the most widespread implementation of CIC, is almost mono-directional with only the lambda-abstraction case handled in a
bi-directional way. Many other interesting cases of bi-directionality are obtained in this paper for inductive types and constructors.} for the CIC, we just conclude our overview of an ITP architecture by talking about the next layer. The next layer after the refiner is that
of \emph{tactics}. This layer is responsible for implementing commands that
help the user in producing valid proof terms by completely hiding to him the
proof terms themselves. Tactics range from simple ones that capture the
introduction and elimination rules of the connectives (called primitive tactics) to complicated proof automation procedures. The complexity of proof automation is inherent in the problem.
On the other hand, primitive tactics should be as simple as building small
partial proof terms. For instance, to reduce a proof of $A \Rightarrow B$ to
a proof of $B$ given $A$ it is sufficient to instantiate the metavariable
associated to the sequent $\vdash A \Rightarrow B$ with the term
$\lambda x.?$ in external syntax where $?$ is a placeholder for a new proof
obligation.
This is possible when the refinement algorithm is powerful enough to
refine $\lambda x.?$ to $\lambda x:A.?_1$ where $?_1$ is a new metavariable
associated to the sequent $x:A \vdash B$. When this is not the case or when
the refiner component is totally missing, the
tactic is forced to first perform an analysis of the current goal, then explicitly
create a new metavariable and its sequent, and then emit the new proof
term $\lambda x:A.?_1$ directly in the internal syntax. 

\subsection{Bi-directionality}

When the external syntax of our ITP allows to omit types in binders, the
refinement algorithm must perform type inference. Type inference was originally
studied in the context of lambda-calculi typed a la Curry, where no type
information can be attached at all to the binders. The traditional algorithm
for type inference, now called uni-directional, performs type inference by
first traversing the term in a top-down way.
When a binder is met, a new metavariable (usually called type or unification
variable in this context)
is introduced for the type of the bound variable. 
Then type constraints are solved traversing the term in a bottom-up way. 
When the variable or, more
generally, a term is used in a given context, its type (called inferred type)
is constrained to be compatible with the one expected by the context (called
expected type). This triggers a unification problem.

Type inference, especially for the Hindley-Milner type system, gives the possibility to write
extremely concise programs by omitting all types. Moreover, it often detects
a higher degree of polymorphism than the one expected by the user.
Unluckily, it has some drawbacks. A minor one is that types are
useful for program documentation and thus the user desires to add types at
least to top level functions. In practice, this is always allowed by concrete
implementations. Another problem is error reporting: a typing error always
manifests itself as a mismatch between an inferred and an expected type.
Nevertheless, an error can be propagated to a very distant point in the code
before being detected and the position where it is produced. The
mismatch itself can be non informative about where the error actually is. 
Finally, unification quickly becomes undecidable when the expressive power of
the type system increases. In particular, it is undecidable for higher order
logic and for dependent types.

To avoid or mitigate the drawbacks of type inference, bi-directional
type-checking algorithms have been introduced in the 
literature~\cite{piercelocaltype}. 
These algorithms take
as input a $\lambda$-term 
typed a la Curry and an expected top-level type and they
proceed in a top-down manner by propagating the expected type towards the leaves of
the term. Additional expected types are given in local definitions, so that
all functions are explicitly documented. Error detection is improved by making
it more local. The need for unification is reduced and, for simple type systems,
unification is totally avoided. Some terms, in particular $\beta$-redexes, 
are no
longer accepted, but equivalent terms are (e.g. by using a local definition
for the head). An alternative consists of accepting all terms by re-introducing a
dependency over some form of unification.

Bi-directionality also makes sense for languages typed \`a la Church, like the
one we consider here. In this case the motivations are slightly different. First
of all, typing information is provided both in the binders
and at the top-level, in the form of an expected type. Hence information can
flow in both direction and, sooner or later, the need to compare the expected
and inferred types arises. In the presence of implicit polymorphism, unification
is thus unavoidable. Because of dependent types and metavariables for
proof obligations, we need the full power of higher order unification.
Moreover, again because of
unification, the problem remains undecidable also via using a bi-directional
algorithm. Hence avoiding unification is no longer a motivation for bi-directionality. The remaining motivations for designing a bi-directional refinement algorithm for CIC are the following:

\subsubsection*{Improved error messages.} A typing error is issued every time a
   mismatch is found between the inferred and expected type. With a
   mono-directional algorithm, the mismatch is always found at the end,
   when the typing information reaches the expected type. In a bi-directional
   setting the expected type is propagated towards the leaves and the inferred
   type towards the root, the mismatch is localized in smaller sub-terms and
   the error message is simpler. For instance, instead of the message
   ``\emph{the provided function has type $A \Rightarrow List~B$ but it is supposed
     to have type $A \Rightarrow List~C$}'' related to a whole function
   definition one could get the simpler message ``\emph{the list element
   has type $B$ but it is supposed to have type $C$}'' related to one particular
   position in the function body.

\subsubsection*{Improvement of the unification algorithm.} To make the system responsive,
   the semi-decidable unification algorithm is restricted to always give an
   answer in a finite amount of time. Hence the algorithm could fail to find
   a solution even when a solution exists. For instance, the algorithms
   implemented in Coq and Matita are essentially backtracking free and they
   systematically favor projections over mimics: when unifying an applied
   metavariable $?_1~a~b~c$ with a $b$ (for some $a,b,c$ closed in a context
   $\Gamma$), the system instantiates $?_1$ with $\lambda x,y,z.y$ rather than
   $\lambda x,y,z.b$ (where $x,y,z \notin \mathit{dom}(\Gamma)$). Moreover, 
   unification for CIC does not
   admit a most general unifier and it should recursively enumerate the set
   of solutions. However, it is usual in system implementations to let
   unification return just one solution and to avoid back-tracking in the
   refinement algorithm\footnote{To the authors knowledge, Isabelle~\cite{isabelle} is the only interactive prover implementing Huet's algorithm~\cite{huet2order} capable of generating all second order unifiers}. 
   Thus, if the solution found by unification is correct
   locally, but not globally, refinement will fail.
   Thanks to bi-directionality, 
   unification problems often become more instantiated and thus simpler, and
   they also admit fewer solutions.
   In particular, in the presence of dependent
   types, it is easy to find practical examples where the unification algorithm
   finds a solution only on the problems triggered by the bi-directional
   algorithm.

   An interesting and practical example that motivated our investigation of
   bi-directionality is the following.
   Consider a dependently typed data-type $(\mbox{Term}~S)$ that represents
   the syntax of a programming language with binders. Type dependency is
   exploited to make explicit the set $S$ of variables bound in the term and
   every variable occurrence must come with a proof that the variable occurs
   in the bound variables list: $(\mbox{Var}~S~x~\mbox{I})$ has type $(\mbox{Term}~S)$ where $x$ is
   a variable name, $\mbox{I}$ is a proof of $\mbox{True}$ and $\mbox{Var}$ has type
   $\forall S.\forall a:\mbox{String}. x \in S \to \mbox{Term}~S$ where $x \in S$ is
   a computable function that reduces to $\mbox{True}$ when $x$ belongs to $S$ and to
   $\mbox{False}$ otherwise. Consider now the term $(\mbox{Lambda}~?~x~(\mbox{Var}~?~x~\mbox{I}))$ in concrete
   syntax that represents $\lambda x.x$ in our programming language. Note
   that no information about the set of bound variables has been provided
   by the user. Thus it is possible to simply define notational macros so that
   the user actually writes $\lambda x.x$ and this is expanded\footnote{User provided notational macros are used to extend the external syntax of an ITP and they are expanded before refinement, yielding a term in external syntax to be refined.} to $\mbox{Lambda}~?~x~(\mbox{Var}~?~x~\mbox{I})$.
  A uni-directional refiner is unlikely to
 accept the given term since it should guess the right value for the second placeholder $?$ such that $x \in~?$ reduces to $\mbox{True}$ and $?$ is the set
 of variables actually bound in the term. The latter information is not local
 and it is still unknown in the bottom-up, uni-directional approach. On the other hand, a
 bi-directional refiner that tries to assign type $\mbox{Term}~\emptyset$ to the term
 would simply propagate $\emptyset$ to the first placeholder and then
 propagate $\emptyset \cup \{x\}$ to the second one, since $\mbox{Lambda}$, which is
 a binder, has type $\forall S.\forall x. \mbox{Term}~(S \cup \{x\}) \to \mbox{Term}~S$.
 Finally, $\mbox{True}$ is the inferred type for $\mbox{I}$, whose expected
type is $x \in \emptyset \cup \{x\}$. The two types are convertible and the input is now accepted without any guessing.
  
\subsubsection*{Improvement of the coercion mechanism.} Coercions are triggered when
   unification fails. 
   They are explicit cast functions, declared by the user, used to fix the
   type of sub-terms.
   Simplifying the unification problem allows to retrieve
   more coercions. For instance, consider a list $[1;2;3]$ of natural numbers
   used as a list of integer numbers and assume the existence of a coercion
   function $k$ from natural to integers.
   In the mono-directional problem, the
   failing unification problem is $(\mbox{List}~\nat{})$ vs $(\mbox{List}~\mathbb{Z})$. The coercion required
   is the one obtained lifting $k$ over lists. The lifting has to be performed
   manually by the user or by the system. In the latter case, the system
   needs to recognize that lists are containers and has to have code to
   lift coercions over containers, like in~\cite{chenPHD}. 
   In the bi-directional case, however, the expected type 
   $(\mbox{List}~\mathbb{Z})$ would
   propagate to assign to each list element the expected type $\mathbb{Z}$ and
   the coercion $k$ would be applied to all integers in the list without need
   of additional machinery. The bi-directional algorithm presented in this paper
   does not allow to remove the need for the coercion over lists in all
   situations, but it is sufficient in many practical ones, like the one just
   considered.

\subsubsection*{Introduction of vectors of placeholders (``\ldots'') in the external syntax.}
   A very common use of dependently typed functions consists in explicitly passing to them
   an argument which is not the first one and have the system infer the
   previous arguments using type dependencies. For instance, if
   $\mbox{Cons}: \forall A. A \to \mbox{List}~A \to \mbox{List}~A$ and $l$ is a list of integers,
   the user can simply write $(\mbox{Cons}~?~2~l)$ and have the system infer that
   $?$ must be instantiated with the type of $2$, which is $\nat{}$.

   This scenario is so common that many
   ITPs allow to mark some function arguments as implicit arguments and let the
   user systematically avoid passing them. This requires additional
   machinery implemented in the ITP and it has the unfortunate drawback that
   sometimes the user needs to explicitly pass the implicit arguments
   anyway, in particular in case of partial function applications. This special
   situation requires further ad-hoc syntax to turn the implicit argument
   into an explicit one. For instance, if we declare the first argument of $\mbox{Cons}$ implicit,
   then the user can simply write $(\mbox{Cons}~2~l)$ for the term presented above,
   but has to write something like $(@\mbox{Cons}~\nat{})$, in Coq
   syntax, to pass the partial function
   application to some higher order function expecting an argument of type
   $\nat{} \to \mbox{List}~\nat{} \to \mbox{List}~\nat{}$.

   An alternative to implicit arguments is to let the user
   explicitly insert the correct number of placeholders ``?'' to be inferred
   by the system. Series of placeholders are neither aesthetic nor robust to
   changes in the type of the function.

   A similar case occurs during the implementation of tactics. Given a lemma
   $L : H_1 \to \ldots \to H_n \to C$, to apply it the tactic
   opens $n$ new proof obligations by refining the term $(L~?~\ldots~?)$
   where the number of inserted placeholders must be exactly $n$.

   In this paper we propose a new construct to be added to the external
   syntax of ITPs: a vector of placeholders to be denoted by $\mldots$
   and to be used in argument position only. In the actual external syntax of
   Matita we use the evocative symbol ``\ldots'' in place of $\mldots$.
   The semantics associated to $\mldots$ is lazy: an $\mldots$ will be
   expanded to the sequence of placeholders of minimal length that makes the
   application refineable, so that its inferred type matches its
   expected type. In a uni-directional setting no expected type is known in
   advance and the implementation of the lazy semantics would require
   computationally expensive non-local backtracking, which is not necessary
   in the bi-directional case.

   \label{ex:notation}
   Thanks to vectors of placeholders the analysis phase of many primitive
   tactics implementation that was aimed at producing terms with the correct
   number of placeholders can now be totally omitted. Moreover, according to
   our experience, vectors of placeholders enable to avoid the implementation
   of implicit arguments: it is sufficient for the user to insert manually
   or by means of a notation a $\mldots$ before the arguments explicitly passed,
   with the benefit that the $\mldots$ automatically adapts to the case of
   partial function application. For example, using the infix notation $::$
   for $(\mbox{Cons}~\mldots)$, the user can both write $2::l$, which is expanded to
   $(\mbox{Cons}~\mldots~2~l)$ and refined to $(\mbox{Cons}~\nat{}~2~l)$, and pass $::$ to an
   higher order function expecting an argument of type
   $\nat{} \to \mbox{List}~\nat{} \to \mbox{List}~\nat{}$. In the latter case, $::$ is expanded to
   $(\mbox{Cons}~\mldots)$ that is refined to $(\mbox{Cons}~\nat{})$ because of the expected type.
   If $::$ is passed instead to a function expecting an argument of type
   $\forall A. A \to \mbox{List}~A \to \mbox{List}~A$, then $(\mbox{Cons}~\mldots)$ will be expanded
   simply to $\mbox{Cons}$ whose inferred type is already the expected one.\bigskip

The rest of the paper explains the bi-directional refinement algorithm
implemented in Matita~\cite{matita-jar-uitp}. The algorithm is presented
in a declarative programming style by means of deduction rules. Many of the
rules are syntax directed and thus mutually exclusive. The implementation
given for Matita in the functional OCaml language takes advantage of the
latter observation to speed up the algorithm. We will clarify in the text
what rules are mutually exclusive and what rules are to be tried in sequence
in case of failure.

The refinement algorithm is presented progressively and in a modular way.
In Section~\ref{sec:mono} we introduce the mono-directional
type inference algorithm for CIC implemented following the
kernel type checker code (that coincides with type inference if the term
is ground)
of Matita.  The presentation is already adapted to be extended in
Section~\ref{sec:bi} to bi-directional refinement. In these two sections
the external and internal syntaxes coincides. In Section~\ref{sec:raw}
we augment the external syntax with placeholders and vectors of placeholders.
Finally, in Section~\ref{sec:coercions} we add support for coercions.
In all sections we will prove the correctness of the refinement algorithms
by showing that a term in external syntax accepted by the refiner is turned
into a new term that is accepted by the kernel and that has the expected type.
Moreover, a precise correspondence
is established between the input and output term to grant that the refined
term corresponds to the input one.

For the sake of the reader, Appendix~\ref{sec:app} is taken from~\cite{ck-sadhana} with minor modifications and it shows the
type checking algorithm implemented by the kernel. The syntax of the calculus
and some preliminary notions are also introduced in Section~\ref{sec:pre}
before starting the description of the refinement algorithm.

\section{Preliminaries}\label{sec:pre}

\subsection{Syntax}

We begin introducing the syntax for CIC terms and objects 
in Table~\ref{tab:terms} and some naming conventions. 

To denote constants we shall use 
$c, c_1, c_2$ \ldots; the special case of (co)recursively defined constants will
be also denoted using $f, f_1, f_2$ \ldots;
we reserve $x, y, x_1, x_2$ \ldots for variables;
$t, u, v, t', t'', t_1, t_2$ \ldots for terms;
$T, U, V, E, L, R, T', T'', T_1, T_2$ \ldots for types and we use
$s, s', s_1$ \ldots for sorts.  


We denote by \xG{} a context made of variables declarations ($x : T$) or 
typed definitions ($x := t : T$). We denote the capture avoiding substitution
of a variable $x$ for a term $t$ by $[x/t]$. The notation $[x_1/t_1;\ldots;x_n/t_n]$ is for simultaneous parallel substitution.

To refer to (possibly empty) sequences of entities of the same nature, we use an
arrow notation (e.g. $\vector{t}$). For the sake of conciseness, it is sometimes
convenient to make the length of a sequence explicit, while still referring to
it with a single name: we write $\vector{t_n}$ to mean that $\vector{t_n}$ is
a sequence of exactly $n$ elements and, in particular, that it is a shorthand
for $t_1~t_2 \ldots t_n$; the index $n$ must be a natural number (therefore the
notation $\vector{t_{n+1}}$ refers to a non-empty sequence). The arrow notation is
extended to telescopes as in $\vector{(x : t)}$ or $\vector{(x_n : t_n)}$ and
used in binders, (co)recursive definitions and pattern matching branches.

As usual, $\Pi x:T_1.T_2$ is abbreviated to $T_1 \rightarrow T_2$ when
$x$ is not a free variable in $T_2$. 
Applications are n-ary, consisting of a term applied to a non-empty sequence of
terms.

\begin{table}[tbp]
\begin{displaymath}
\begin{array}{llll}
 t & ::= & x & \mbox{identifiers}\\
   &  |  & c & \mbox{constants}\\
   &  |  & I_l & \mbox{inductive types}\\
   &  |  & k & \mbox{inductive constructors}\\
   &  |  & \Prop~|~\Type_u & \mbox{sorts}\\
   &  |  & t~\vector{t_{n+1}} & \mbox{n-ary application}\\
   &  |  & \lambda x:t.t & \mbox{$\lambda$-abstraction}\\
   &  |  & \letin{x}{t}{t}{t} & \mbox{local definitions}\\
   &  |  & \Pi x:t.t & \mbox{dependent product}\\
   &  |  & \match{t}{I_l}{t}
           [k_1~\vector{(x : t)}\Rightarrow
   t~|~\ldots~|~k_n~\vector{(x : t)}\Rightarrow t] & \mbox{case analysis} \\
   &  |  & ?_j [t \;;\; \ldots \;;\; t] & \mbox{metavariable occurrence}\\
   &     & \\
 o & ::= & \letr{f_1~\vector{(x : t)}}{t}{t}{f_n~\vector{(x : t)}}{t}{t} & \mbox{recursive definitions}\\
   &  |  & \letcr{f_1~\vector{(x : t)}}{t}{t}{f_n~\vector{(x : t)}}{t}{t} & \mbox{co-recursive definitions}\\
   &  |  & \definit{c}{t}{t} & \mbox{definitions}\\
   &  |  & \axiom{c}{t} & \mbox{axioms}\\
   &  \hspace{-0.46em}\begin{array}{l}|\\~\\~\\\end{array}  
         & \inductive{\Prod \vector{x_l:t_l}.~}
                {I^1_l}{A}{k_{1,1} : t}{k_{1,m_1} : t}
                {I^n_l}{A}{k_{n,1} : t}{k_{n,m_n} : t}
 & \hspace{-0.46em}\begin{array}{l}\mbox{inductives}\\~\\~\\\end{array}
\end{array}
\end{displaymath}
\caption{CIC terms and objects syntax
\label{tab:terms}}
\end{table}


Inductive types $I_l$ are annotated with the number $l$ of arguments that are
homogeneous in the types of all constructors. For example consider the
inductive type of vectors $\mbox{Vect}$ of arity 
$\Pi A : \Type. \nat{} \to \Type$. It takes two
arguments, a type and a natural number representing the length of the vector.
In the types of the two constructors, $\mbox{Vnil} : \mbox{Vect}~A~0$ and
$\mbox{Vcons} : \Pi m, \mbox{Vect}~A~m \to A \to \mbox{Vect}~A~(m+1)$, every
occurrence of $\mbox{Vect}$  is applied to the same argument $A$, that is
also implicitly abstracted in the types of the constructors.
Thus
$\mbox{Vect}$ has one homogeneous argument, and will be represented by the 
object
\[
\inductiveX{\Prod A : \Type.~}{~\mbox{Vect}}{\nat{}\to \Type}
{\\\qquad\mbox{Vnil} : \mbox{Vect}~A~0 ~~|~~ 
 \mbox{Vcons} : \Pi m, A \to \mbox{Vect}~A~m \to \mbox{Vect}~A~(m+1)}
\]
and referred to with $\mbox{Vect}_1$.
This is relevant for the pattern matching construction, since 
the homogeneous arguments are not bound in the patterns because they are
inferred from the type of the matched term. For example, to pattern match over
a vector $v$ of type $(\mbox{Vect}~\nat{}~3)$ the user writes
 \[\match{v}{\mbox{Vect}_1}{T}
    [\mbox{Vnil} \Rightarrow t_1 ~|~ 
     \mbox{Vcons}~(m:\nat{})~(x:\nat{})~(v':\mbox{Vect}~\nat{}~m) \Rightarrow t_2]\]

\noindent The inductive type $I_l$ in
the pattern matching constructor is (almost) redundant, since distinct 
inductive types have distinct constructors; it is given for the sake
of readability and to distinguish the inductive types with no
constructors. In a concrete implementation it also allows to totally drop
the names of the constructors by fixing an order over them: the $i$-th pattern
will be performed on the $i$-th constructor of the $I_l$ inductive type.

Since inductive
types may have non homogeneous arguments, not every branch is required to have
exactly the same type. The term introduced with the \verb+return+ keyword
is a function that computes the type expected by a particular branch and also
the type of the entire pattern matching.
Variables $\vector{(x : t)}$ are abstracted in the right hand side terms of 
$\Rightarrow$.

The definitions of constants $c$ (including (co)recursive constants $f$),
inductive types $I_l$ and constructors $k$ are collected in the syntactic
category of CIC objects $o$.

Metavariable occurrences, 
represented with $?_j [t_1 \;;\; \ldots \;;\; t_n]$, are missing typed
terms equipped with an explicit local substitution. The index $j$ enables
metavariables to occur non-linearly in the term. 
To give an intuition of the role played by the local substitution, the
reader can think of $?_j [t_1 \;;\; \ldots \;;\; t_n]$ as a call to the,
still unknown, function $?_j$ with actual arguments $t_1\ldots t_n$.
The terms $t_1\ldots t_n$ will be substituted for the formal arguments
of the $?_j$ function inside its body only when it will be known.

We omit to write the local substitution when it is the identity substitution
that sends all variables in the current context with themselves. Thus
$?_j$ will be a shorthand for $?_j[x_1 \;;\; \ldots \;;\; x_n]$ when
$x_1,\ldots,x_n$ are the variables bound in the right order in the context
of the metavariable occurrence.

The CIC calculus extended with metavariables 
has been studied in~\cite{munoz} and the flavor of metavariables 
implemented in Matita is described in~\cite{csc-phd}.

\subsection{Typing rules}

The kernel of Matita is able to handle the whole syntax presented in the
previous section, metavariables included. While we report in the
Appendix~\ref{sec:app} the full set of typing rules implemented by the kernel, 
here we summarise only the ones that will be reused by the refinement 
algorithm. We will give a less formal but more intuitive presentation of
these rules, defining them with a more concise syntax. Moreover, we will
put our definition in linear order, while most of them are actually mutually
recursive. 

\begin{definition}[Proof problem $(\xP{})$]\label{def:pp}
A \emph{proof problem} $\xP{}$ is a finite list of typing declarations of
the form $\xG{_{?_j}} \vdash ?_j : T_{?_j}$.
\end{definition}

A proof problem, as well as a CIC term, can refer to constants, that
usually live in an environment that decorates every typing rule
(as in the Appendix~\ref{sec:app}). In
the following presentation we consider a global well formed
environment $\env{}$, basically a collections of CIC objects defining all
constants and inductive types and associating them to their respective types. No
refinement rule will modify this environment that plays no role in this
presentation. In fact it is the task of the kernel to enable well typed
definitions, inductive types and (co-)recursive functions to enter the
environment.

We thus omit the environment $\env{}$ from the input of every judgment.
We will fetch from it the type $T$ of a constant, inductive type or
constructor $r$ writing $(r : T) \in \env{}$.

We regard CIC as a Pure Type System~\cite{BarendregtH:lawcwt}, and
we denote
by \pts{} the set of axioms. We denote by
$s \in \pts{}$ any sort of the PTS, with $(s_1 : s_2) \in \pts{}$ the fact that
$s_2$ types $s_1$, and with $(s_1, s_2, s_3) \in \pts{}$ the fact that a
product over $s_1$ to $s_2$ has sort $s_3$.  CIC is a full but not functional
PTS: all products are well formed but in
$(s_1, s_2, s_3) \in \pts{}$ it may be $s_2 \neq s_3$. This is because the
calculus is parameterized over a predicative hierarchy $\Type_u$ for $u$
in a given set of universe indexes. In a predicative setting, given
$s_1 = \Type_{u_1}$ and $s_2 = \Type_{u_2}$, $s_3$ is defined as
$\Type_{\max{\{u_1,u_2\}}}$ according to some bounded partial order on the
universe indexes. The details for the actual PTS used in Matita are given
in~\cite{ck-sadhana}. We will often write simply $\Type$ when we are not
interested in the universe index (e.g. in examples).
We also write $\Type_{\top}$ for the biggest sort in the hierarchy, if
any,  or a variable universe to be later fixed to be big enough to
satisfy all the required constraints.

We also write $(s_1, s_2) \in \elim(\pts)$ to check if an element of an
inductive type 
of sort $s_1$ can be eliminated to inhabit a type whose sort is $s_2$. This is relevant for CIC
since the sort of propositions, $\Prop$, is non informative and cannot be
eliminated to inhabit a data type of sort $\Type_u$ for any $u$ (but for few exceptions
described in~\cite{ck-sadhana} Section 6).

Proof problems do not only declare missing proofs (i.e. not all
$T_{?_j}$ have sort $\Prop$) but also missing terms and,
of particular interest for this paper, missing types. 

\begin{definition}[Metavariable substitution environment $(\xS{})$]
A \emph{metavariable substitution environment} $\xS{}$ (called simply
 \emph{substitution} when not ambiguous) is a list of judgments of the
 form
\[\xG{_{?_j}} \vdash ?_j := t_{?_j} : T_{?_j}\]
stating that the term $t_{?_j}$ of type $T_{?_j}$ in $\xG{_{?_j}}$ has been 
assigned to $?_j$.
\end{definition}

We now anticipate the typing judgment of the kernel. A formal definition of
well formedness for $\xP{}$ and $\xS{}$ will follow.

\begin{definition}[Typing judgment]
\label{typj}
Given a term $t$, a 
proof problem $\xP{}$ and a substitution $\xS{}$,
all assumed to be well formed, 
we write
\[\xPSG{}{}{} \vdash t : T\]
to state that $t$ is well typed of type $T$.
\end{definition}

When $\xPSG{}{}{} \vdash t:T$ the type $T$ is well typed and its type
is either a metavariable or a sort $s \in \pts$.

The typing judgment implemented in our kernel is an extension
of the regular typing judgment for CIC~\cite{Werner,mohring,CC}.
It is described in~\cite{ck-sadhana} and reported in the Appendix~\ref{sec:app}.
Here we recall the main differences:

\begin{iteMize}{$\bullet$}
\item Substitution of a regular variable $x$ for a term $t$ 
      is extended with the following rule for metavariables:
\[
  ?_j[t_1\;;\;\ldots\;;\;t_n][x/t] = ?_j[t_1[x/t]\;;\;\ldots\;;\;t_n[x/t]]
\]
\item The conversion relation (denoted by $\downarrow$) 
is enlarged allowing reduction to be performed inside explicit
substitution for metavariables:
\[
 \frac
 {\xG{} \vdash t_i \downarrow t'_i \qquad i \in \{ 1 \ldots n \}}
 {\xG{} \vdash ?_j[t_1 \;;\; \ldots \;;\; t_n] \downarrow ?_j[t'_1 \;;\; \ldots \;;\; t'_n]}
\]
\item The following typing rules for metavariables are added:
\[
\frac{
 \begin{array}{l}
 y_1 : T_1 \;;\; \ldots \;;\; y_n:T_n \vdash ?_j:T_{?_j} \in \xP{} \\
 \xG{} \vdash t_i : T_i[y_1/t_1\;;\; \ldots \;;\; y_{i-1}/t_{i-1}]
  \quad\hfill i \in \{1\ldots n\}
 \end{array}
 }{ \xG{} \vdash \WF{?_j[t_1\;;\;\ldots\;;\;t_n]} }
\]
\[
\frac{
 (y_1 : T_1 \;;\; \ldots \;;\; y_n:T_n \vdash ?_j:T_{?_j}) \in \xP{}
 \quad \xG{} \vdash \WF{?_j[t_1\;;\;\ldots\;;\;t_n]}
 }{\xG{}\vdash ?_j[t_1\;;\;\ldots\;;\;t_n] :
 T_{?_j}[y_1/t_1\;;\;\ldots\;;\;y_n/t_n]}
\]
\end{iteMize}

\noindent Moreover, in many situations a metavariable occurrence is also accepted
as a valid sort, marking it so that it cannot be instantiated with anything
different from a sort. This additional labelling will be omitted, being marginal for the refinement algorithm.

The technical judgment $\xG{} \vdash \WF{?_j[t_1\;;\;\ldots\;;\;t_n]}$ states
that a metavariable occurrence $?_j[t_1\;;\;\ldots\;;\;t_n]$ 
is well formed in $\xG{}$.

In all the previous rules we assumed access to a global well formed proof
problem $\xP{}$ and substitution $\xS{}$. Both $\xP{}$ and $\xS{}$ are
never modified by the judgments implemented in the kernel.

We now present the well formedness conditions, corresponding to the judgments
\mbox{$~\vdash \mbox{WF}$} presented in the Appendix~\ref{sec:app}.

\begin{definition}[Metavariables of term/context ($\metas$)]
\label{def:metasofterm}
Given a term $t$, $\metas(t)$ is the set of metavariables
occurring in $t$. Given a context $\xG{}$, $\metas(\xG{})$
 is the set of metavariables occurring in $\xG{}$.
\end{definition}

The function $\metas$ is at the base of the order relation
defined between metavariables.

\begin{definition}[Metavariables order relation ($\ll_\xP{}$)]
Let $\xP{}$ be a proof problem. Let $<_\xP{}$ be the
relation defined as: $?n_1 <_\xP{} ?n_2$ iff $?n_1 \in \metas(\xG{_{?n_2}})
\cup \metas(T_{?n_2})$. Let $\ll_\xP{}$ be the transitive closure
of $<_\xP{}$.
\end{definition}

\begin{definition}[Valid proof problem]
A proof problem $\xP{}$ is a \emph{valid proof problem} if and only
if $\ll_\xP{}$ is a strict partial order (or, equivalently, if and only if
$\ll_\xP{}$ is an irreflexive relation).
\end{definition}

The intuition behind $\ll_\xP{}$ is that the smallest $?_j$  (or one of them
since there may be more than one) does not
depend on any other metavariable (e.g. $\metas(\xG{_{?_j}}) =
\emptyset$ and $\metas(T_{?_j}) = \emptyset$ where $\xG{_{?_j}}
\vdash ?_j : T_{?_j} \in \xP{}$). Thus instantiating every minimal $?_j$ with
a metavariable free term will give a new $\xP{}$ in which there is at
least one $?_j$ not depending on any other metavariable (or $\xP{}$ is
empty).
This definition is the key to avoid circularity in the following definitions.

In the rules given in Appendix~\ref{sec:app} the partial order is left
implicit by presenting $\Sigma$ as an ordered list. However, as proved by
Strecker in his Ph.D. thesis~\cite{strecker}, the order is not preserved by
unification and thus in any realistic implementation $\Sigma$ is to be
implemented as a set and the fact that $\ll_\xP{}$ remains a partial order
must be preserved as an invariant.

\begin{definition}[Well formed context (\WF{\xG{}})]
Given a well formed proof problem $\xP{}$, a context 
$\xG{} = y_1 : T_1, \ldots, y_n : T_n$ is well formed 
(denoted by \WF{\xG{}}) if
$\metas(\xG{}) \subseteq \xP{}$ and for every $i$ 
\[y_1 : T_1, \ldots y_{i-1} : T_{i-1} \vdash y_i : T_i\]
\end{definition}

\begin{definition}[Well formed proof problem $(\WF{\xP{}})$]
A valid proof problem $\xP{}$ is a well-formed proof problem
(denoted by \WF{\xP{}})
if an only if for all $(\xG{_{?_j}} \vdash ?_j : T_{?_j}) \in
\xP{}$ we have $\xP{}, \xG{_{?_j}} \vdash T_{?_j} : s$ and $s \in \pts{}$.
\end{definition}

\begin{definition}[Well formed substitution (\WF{\xS{}})]
Given a well formed proof problem $\xP{}$, a substitution
$\xS{}$ is well formed (denoted by \WF{\xS{}}) 
if for every $(\Gamma_{?_j} \vdash ?_j := t_{?_j} : T_{?_j}) \in \xS{}$ we have
 $\xP,\emptyset,\xG{_{?_j}} \vdash t_{?_j} : T_{?_j}$.
\end{definition}

The well formedness definitions given so far are actually implemented by the
kernel in a more precise but less intuitive way. We thus refer to the kernel
judgments in the following definition, that will be used in the specification
of all refinement rules.

\begin{definition}[Well formed status (\WF{\xPSG{}{}{}})]
Given a proof problem $\xP{}$, a substitution $\xS{}$ and a
context $\xG$, the triple $\xPSG{}{}{}$ is well formed
(denoted by $\WF{\xPSG{}{}{}}$) when $\WF{\xP{}}$ and
$\WF{\xS{}}$ and $\WF{\xG{}}$.
\end{definition}

We shall sometimes omit \xG{}, considering it equal to a default, well
formed context, like the empty one.
The recursive operation of applying a substitution $\xS{}$ to a  term $t$ 
is denoted by $\xS{(t)}$ and acts as the identity for any term but
metavariables contained in $\xS{}$, on which it behaves as follows:
\[\xS{(?_j[t_1\;;\;\ldots;\;t_n])} = t_{?_j}[y_1/t_1\;;\;\ldots;\;y_n/t_n]
\quad \mbox{when} \quad 
(y_1:T_1; \ldots; y_n:T_n \vdash ?_j := t_{?_j} : T_{?_j}) \in \xS{}
\]
\noindent Note that, thanks to the extensions to the type checking rules made in
Definition~\ref{typj}, substitution application is type preserving.
Substitutions do apply also to well formed proof problems in the following way:
\[
 \xS{}(\xG{_{?_j}} \vdash ?_j : T_{?_j}) = \xS{}(\xG{_{?_j}}) \vdash ?_j :
   \xS{}(T_{?_j}) \qquad (\mbox{for each }?_j \in \xP{})
\]

The substitution application operation is seldom used explicitly, since
all judgments take as input and give back a substitution. Nevertheless 
it will be used in the examples.

\begin{definition}[Weak-head normalization ($\whdlabel$)]
Given a context $\xG{}$, substitution $\xS{}$ and proof problem $\xP{}$,
all assumed to be well formed, 
it computes the weak head normal form of a well typed term 
$t$ according to the reduction rules of CIC. It is denoted by:
\[\whd{}{}{}{t}{t'}\]
\end{definition}

\noindent Note that $?_j$ is in weak head normal form iff $~?_j \not\in \xS{}$.


By abuse of notation we will write
$\whd{}{}{}{t_1}{\Pi x_1:T_1 \ldots \Pi x_n:T_n.t_{n+1}}$ to mean
that for all $i \in \{1 \ldots n\}$ $\whd{}{}{; x_1 : T_1; \ldots ;x_{i-1} : T_{i-1}}{t_i}{\Pi x_i:T_i.t_{i+1}}$
and $\whd{}{}{}{t_{n+1}}{t_{n+1}}$. Such repeated use of weak head computation
to produce spines of dependent products occur frequently in the kernel and
in the refinement rules, especially when dealing with inductive types.

\begin{definition}[Conversion ($\downarrow$)]
Given a proof problem $\xP{}$, substitution $\xS{}$ and
context $\xG{}$,
all assumed to be well formed, 
and two terms $t_1$ and $t_2$, it verifies if $t_1$ and $t_2$ have a common
normal form according to the rules of CIC given in Appendix~\ref{sec:app}.
It is denoted by: \[\xPSG{}{}{} \vdash t_1 \downarrow t_2\]
\end{definition}


\section{Mono-directional refinement}\label{sec:mono}

We now present the mono-directional refinement algorithm
for CIC implemented in the old versions of Matita (0.5.x) and directly
inspired by the rules for type checking implemented in the kernel.
In this section
we assume the external syntax to coincide with the syntax of terms. Hence
the algorithm actually performs just type inference. Nevertheless, we
already organize the judgments in such a way that the latter extension to
bi-directionality will be achieved just by adding new typing rules.

\subsection{Specification}

To specify what is a refinement algorithm we must first introduce the notion
of proof problem refinement. Intuitively, a pair (proof problem, substitution)
is refined by another pair when the second is obtained by reducing some
proof obligations to new ones. It thus represents an advancement in the proof
discovery process.

\begin{definition}[Proof problem refinement $(\refineslabel)$]
We say that $\xP{'},\xS{'}$ \emph{refines} $\xP{},\xS{}$
(denoted by $\refines{\xP{'},\xS{'}}{\xP{},\xS{}}$)
when
$\xS{} \subset \xS{'}$ and
for every $(\xG{_{?_j}} \vdash ?_j : T_{?_j}) \in \xP{}$ either
$(\xG{_{?_j}'} \vdash ?_j : T'_{?_j}) \in \xP{'}$ or 
$(\xG{_{?_j}'} \vdash ?_j := t_{?_j} : T'_{?_j}) \in \xS{'}$
where $\xG{_{?_j}'} = \Phi'(\xG{_{?_j}})$ and
$T'_{?_j} = \Phi'(T_{?_j})$.
\end{definition}




\begin{specification}[Refiner in type inference mode (\reclabel)]
A refiner algorithm $\mathcal{R}$ in type inference mode $\Uparrow$
takes as input a proof problem, substitution and context, all assumed to be
well formed, and a term $t$. It fails or gives in output a new proof problem,
a new substitution, a term $t'$ and a type $T'$.
It is denoted by:
\[\refinex{}{}{}{t}{t'}{T'}{'}{'}\]
Precondition:
\[ \WF{\xPSG{}{}{}} \]
Postcondition (parametric in $\preccurlyeq$):
\[\WF{\xPS{'}{'}} \lland \refines{\xP{'},\xS{'}}{\xP{},\xS{}} \lland \xPSG{'}{'}{} \vdash t' : T' \lland t' \preccurlyeq t
\]
\end{specification}

The specification is parametric in the $\preccurlyeq$ relation that establishes
a correspondence between the term $t$ to be refined and the refiner output
$t'$. In order to prove correctness, we are only interested in admissible
$\preccurlyeq$ relations defined as follows.

\begin{definition}[Admissible relations ($\preccurlyeq$)]
A partial order relation $\preccurlyeq$ is \emph{admissible} when for every term $t_1$ in
external syntax and $t_2$ and $T$ in internal syntax and 
for every variable $x$ occurring free only
linearly in $T$ we have that
$t_1 \preccurlyeq t_2$ implies $T[x/t_1] \preccurlyeq T[x/t_2]$.
\end{definition}
Admissibility for equivalence relations correspond to asking the equivalence
relation to be a congruence.

When the external syntax corresponds to the term syntax and coercions
are not considered, we can provide an implementation that satisfies the
specification by picking the identity for the $\preccurlyeq$ relation.
Combined with $\refines{\xP{'},\xS{'}}{\xP{},}{\xS{}}$, the two postconditions
imply that $\xS{}(t')$ must be obtained from $t$ simply by instantiating some
metavariables. In Sections~\ref{sec:raw} and~\ref{sec:coercions}, we shall use
weaker definitions of $\preccurlyeq$ than the identity, allowing
replacement of (vectors of) placeholders with (vectors of) terms and the
insertion of coercions as results of the refinement process. All the
$\preccurlyeq$ relations considered in the paper will be large partial orders
over terms of the external syntax (that always include the internal syntax).

We will now proceed in presenting an implementation of a refinement algorithm
in type inference mode~\rec{}. The implementation is directly inspired by
the type checking rules used in the kernel.
However, since refinement deals with terms containing flexible parts,
conversion tests need to be replaced with unification tests.  In a higher order and dependently
typed calculus like CIC, unification is in the general case undecidable. What is
usually implemented in interactive theorem provers is an essentially fist order
unification algorithm, handling only some simple higher order cases. The
unification algorithm implemented in Matita goes beyond the scope of this
paper, the interested reader can find more details
in~\cite{csc-phd,unification-hints}. Here we just specify the expected
behavior of the unification algorithm.

\begin{specification}[Unification ($\UNI-\,\unilabel$)]
An unification algorithm takes as input
a proof problem, a substitution and a context, all assumed to be
well formed, and two well typed terms 
$t_1$ and $t_2$. It fails or gives in output a new proof problem and 
substitution.
It is denoted using the following notation where $\bullet$ can either be
$=$ or be omitted. In the former case universe cumulativity (a form
of sub-typing) is not taken in account by unification.
\[\unifx{_\bullet}{}{}{}{t_1}{t_2}{'}{'}\]
Precondition:
 \[ \WF{\xPSG{}{}{}} \lland \xPSG{}{}{} \vdash t_1 : T_1 \lland \xPSG{}{}{} \vdash t_2 : T_2 \]
Postcondition:
 \[ \WF{\xPS{'}{'}} \lland \refines{\xP{'},\xS{'}}{\xP,\xS} \lland
    \xPSG{'}{'}{} \vdash t'_1 \downarrow_\bullet t'_2 \]
\end{specification}

\subsection{Implementation}

\subsubsection{Additional judgments}

For the sake of clarity we prefer to keep the same structure for the 
mono and bi-directional refiners. We thus give the definition of 
some functions that are trivial in the mono-directional case, but will 
be replaced by more complex ones in the following sections.

Even if we presented the syntax of CIC using the same category terms,
types and sorts, some primitive constructors (like the 
$\lambda$ and $\Pi$ abstractions) expect some arguments to be 
types or sorts, and not terms. A type level enforcing algorithm forces a term in
external syntax to be refined to a valid type.

\begin{specification}[Type level enforcing ($\mathcal{F}$)]
A type level enforcing algorithm takes as input a proof problem $\xP{}$,
a substitution $\xS{}$ and a context $\xG{}$, all assumed to be well formed,
and a term $T$. It fails or it returns a new term $T'$, a sort $s$,
a new substitution $\xS{'}$ and proof problem $\xP{'}$.
It is denoted by:\\
\[\forcetotype{}{}{}{T}{T'}{s}{'}{'}\]
Precondition:
 \[ \WF{\xPSG{}{}{}} \]
Postcondition (parametric in $\preccurlyeq$):
 \[ \WF{\xPS{'}{'}} \lland \refines{\xP{'},\xS{'}}{\xP,\xS} \lland
    \xPSG{'}{'}{} \vdash T' : s \lland s \in \pts \lland
    T' \preccurlyeq T
 \]
\end{specification}

Note that one may want to accept a metavariable as the sort $s$, eventually
labelling it in such a way that the unification algorithm will refuse to
instantiate it with a different term. The choice must be consistent with the
one taken in the implementation of the kernel.

The task of checking if a term has the right type is called refinement
in type forcing mode and it will be denoted by $\recE{}$.
In the mono-directional case, $\recE{}$ will be simply implemented
calling the $\coer{}$ algorithm that will handle coercions in Section~\ref{sec:coercions} but which, at the moment, only verifies that no coercion is needed
by calling the unification procedure.

\begin{specification}[Explicit cast ($\mathcal{C}$)]
A cast algorithm takes as input a proof problem $\xP{}$, a substitution
$\xS$ and a context $\xG{}$, all assumed to be well formed, and a term
$t$ with its inferred type $T$ and expected type $T'$.
It fails or it returns a new term $t'$ of type $T'$, a new
proof problem $\xP{'}$ and substitution $\xS{'}$.
It is denoted by:\\
\[\unifcoercex{}{}{}{t}{T}{T'}{t'}{'}{'}\]
Precondition:
 \[ \WF{\xPSG{}{}{}} \lland
    \xPSG{}{}{} \vdash t : T \lland
    \xPSG{}{}{} \vdash T' : s
 \]
Postcondition (parametric in $\preccurlyeq$):
 \[ \WF{\xPS{'}{'}} \lland \refines{\xP{'},\xS{'}}{\xP,\xS} \lland
    \xPSG{'}{'}{} \vdash t' : T' \lland
    t' \preccurlyeq t
 \]
\end{specification}

\begin{specification}[Refiner in type forcing mode (\recElabel)]
A refiner algorithm $\mathcal{R}$ in type forcing mode $\Downarrow$
takes as input a proof problem $\xP{}$, a substitution
$\xS$ and a context $\xG{}$, all assumed to be well formed, and a term
$t$ together with its expected well formed type $T$. 
It fails or returns a term $t'$
of type $T$, a new
proof problem $\xP{'}$ and substitution $\xS{'}$.
It is denoted by:\\
\[\refinexE{}{}{}{t}{T}{t'}{'}{'}\]
Precondition:
\[ \WF{\xPSG{}{}{}} \lland \xPSG{}{}{} \vdash T: s\]
Postcondition (parametric in $\preccurlyeq$):
\[\WF{\xPS{'}{'}} \lland \refines{\xP{'},\xS{'}}{\xP{},\xS{}} \lland \xPSG{'}{'}{} \vdash t' : T \lland t' \preccurlyeq t \]
\end{specification}

\subsubsection{Notational conventions}
The arguments \xP{} and \xS{} will be taken as input and returned as output
in all rules that define the refiner algorithm. To increase legibility
we adopt the following notation, letting \xP{} and \xS{} be implicit.
Each rule of the form

\[
\infrule[(rule)]{
  \xG{} \vdash t \leadsto t'\\
  \xG{} \vdash t' \leadsto t''
}{
  \xG{} \vdash t \leadsto t''
}
\]\smallskip

\noindent has to be interpreted as:
\[
\infrule[(rule)]{
  \xPSGp{}{}{} \vdash t \leadsto t'~ (\xP{'}, \xS{'}) \\
  \xPSGp{'}{'}{} \vdash t' \leadsto t''~ (\xP{''}, \xS{''}) 
}{
  \xPSGp{}{}{} \vdash t \leadsto t''~ (\xP{''}, \xS{''})
}
\]\smallskip

\noindent Moreover we will apply this convention also to rules not returning 
$\xP{}$ or $\xS{}$ as if they were returning the \xP{} or \xS{} 
taken as input.

Note that the $\xP{'}$ and $\xS{'}$ returned by
all rules considered in this paper are well formed and are also a proof
problem refinement of
the $\xP{}$ and $\xS{}$ provided as input. Being a proof problem
refinement is clearly a transitive relation. Thus we have for free that
all the omitted pairs (proof problem, substitution) are refinements of
the initial ones.

\subsubsection{Role of the relations and their interaction}
In this paragraph we shortly present the role played by the relations
$\rec{}$, $\recE{}$, $\coer{}$ and $\forcetotypelabel{}$ introduced so far
and the auxiliary ones $\eatprodslabel{}$ and $\unimany{}$ that will be
specified when needed.

The relation $\rec{}$ links a term with its inferred type, while $\recE{}$
links a term with the type expected by its context. $\recE{}$ will thus
exploit the extra piece of information not only checking that the inferred type
unifies with the expected one, but also propagating this information to
its recursive calls on subterms (when possible). $\rec{}$ and $\recE{}$ will
be defined in a mutually recursive way.

The relation $\forcetotypelabel{}$ links a term with its refinement
asserting that the refinement is a type. This is relevant when typing
binders like $(\lambda x : t. t')$, where $t$ is required to be a type.
In its simplest formulation the relation is a simple assertion,
linking a type with itself.
In Section~\ref{sec:coercions} the refinement relation $\preccurlyeq{}$ 
will admit to link a term $t$ that is not a type with a function applied to 
$t$ that turns its input into a type. 
For example $t$ may be a record containing a type and
$\forcetotypelabel{}$ may link it with $(\pi_n~t)$, where $\pi_n$ is the
projection extracting the type from the record. $\forcetotypelabel{}$
is recursively defined in terms of $\rec{}$

The relation $\coer{}$ links a term $t$, its inferred type $T_1$ and the type 
expected by its context $T_2$ with a refinement of the term $t'$ 
asserting that the refined term has type $T_2$. 
In its simple formulation the relation is a simple assertion that
$T_1$ and $T_2$ are the same and thus links $t$ with itself.
In Section~\ref{sec:coercions} the refinement relation $\preccurlyeq{}$
will admit to 
explicitly cast $t$. For example a natural number $n$ of type $\mathbb{N}$
may be casted into the rationals $\mathbb{Q}$ refining it to
$(\lambda x:\mathbb{N}.x/1)~n$. The $\coer{}$ relation is non recursive.

The relations $\eatprodslabel{}$ and $\unimany{}$ are auxiliary relations
only used to ease the presentation of the $\rec{}$ and $\recE{}$ relations
in the case of applications. Both auxiliary relations are thus recursively
defined with $\rec{}$ and $\recE{}$.

\subsubsection{Rules for terms}

We now give an implementation for the refiner in both modes and for the
auxiliary judgments. The implementation is parametric on the unification
algorithm, that is not described in this paper.
\[
\infrule[(\coer{}\mathrm{-ok})]
 {\Unifx{}{}{}{}{T_1}{T_2}{'}{'}}
 {\Unifcoercex{}{}{}{t}{T_1}{T_2}{t}{'}{'}} 
\qquad
\infrule[(\recE{}\mathrm{-default})]
 {\Refinex{}{}{}{t}{t'}{T'}{'}{'}\\
  \Unifcoercex{'}{'}{}{t'}{T'}{T}{t''}{''}{''}}
 {\RefinexE{}{}{}{t}{T}{t''}{''}{''}}
\]\bigskip
\bugnelcodice{nel codice si prova l'unificazione prima di chiamare la try\_coercions,ma questa inizia con un'unificazione! Quindi si spreca codice e tempo.}
\[
\infrule[(\forcetotypelabel{}\mathrm{-ok})]
 {\Refinex{}{}{}{T}{T'}{s}{'}{'} \qquad
   s_1 \in \pts\\
  \Unifcoercex{'}{'}{}{T'}{s}{s_1}{T''}{''}{''}}
 {\Forcetotype{}{}{}{T}{T''}{s}{''}{''}}\bigskip
\]

\noindent Note that $s_1$ is arbitrary, and the actual code prefers the predicative
sorts $\Type_u$ over \Prop{}. This is the only rule defined in this section
to be non syntax oriented: in case of an incorrect choice of $s_1$, backtracking
is required. The actual algorithm implemented in Matita performs the choice
of $s_1$ lazily to remain backtracking free\footnote{Laziness will be no longer
sufficient to avoid backtracking when we will add additional rules to handle coercions in
Section~\ref{sec:coercions}.}.


\[
\infrule[(\rec{}\mathrm{-variable})]
 {(x : T) \in \xG{} \quad \mbox{or} \quad (x := t : T) \in \xG{}}
 {\Refinex{}{}{}{x}{x}{T}{}{}}
\]\bigskip
\[
\infrule[(\rec{}\mathrm{-constant})]
 {(r : T) \in \env{} \qquad r \in \{ k, I, c \}}
 {\Refinex{}{}{}{r}{r}{T}{}{}}
\]\bigskip
\[
\infrule[(\rec{}\mathrm{-sort})]
 {(s_1 : s_2) \in \pts{}}
 {\Refinex{}{}{}{s_1}{s_1}{s_2}{}{}}
\]\bigskip
\[
\infrule[(\rec{}\mathrm{-meta})]
 {(\xG{_{?_j}} \vdash ?_j : T_{?_j}) \in \xP \quad \mbox{or} \quad
  (\xG{_{?_j}} \vdash ?_j := t_{?_j} : T_{?_j}) \in \xS\\
  \xG{_{?_j}} = \vector{x_n:T_n}\\
  \RefinexE{_i}{_i}{}{t_i}{T_i[\vector{x_{i-1}/t'_{i-1}}]}{t_i'}{_{i+1}}{_{i+1}}
  \quad i \in \{1 \ldots n\}\\
 }
 {\Refinex{_1}{_1}{}{?_j[\vector{t_n}]}{?_j[\vector{t'_n}]}{T_{?_j}[\vector{x_n/t'_n}]}{_{n+1}}{_{n+1}}}
\]\medskip

\noindent Note that the operation of firing a $\beta$-redex must commute with the
operation of applying a substitution \xS{}.
Consider for example the term $v = (\lambda x.?_j[x])~u$ and the substitution
$\xS{} = \{ x:T \vdash ?_j := t(x) : T(x)\}$. 
If one applies the substitution first, and then reduces the redex
obtains $t(u)$, whose type is $T(u)$.
If one fires the redex fist, the fact that $x$ is substituted by $u$ in 
$?_j$ is recorded in the local substitution attached to the metavariable
instance. Indeed 
$\emptyset,\emptyset,\emptyset \vdash v\; \whdlabel ?_j[u]$ and 
$\xS(?_j[u]) = t(u) : T(u)$. Therefore $?_j[u]$ is given the type
$T(u)$ by the rule $(\rec{}\mathrm{-meta})$.
\[
\infrule[(\rec{}\mathrm{-letin})]
 {\Forcetotype{}{}{}{T}{T'}{s}{'}{'} \\
  \RefinexE{'}{'}{}{t}{T'}{t'}{''}{''}\\
  \Refinex{''}{''}{; x := t' : T'}{u}{u'}{T_2}{'''}{'''}}
 {\Refinex{}{}{}{\letin{x}{T}{t}{u}}{\letin{x}{T'}{t'}{u'}}{T_2[x/t'_1]}{'''}{'''}}
\]\bigskip
\[
\infrule[(\rec{}\mathrm{-lambda})]
 {\Forcetotype{}{}{}{T_1}{T_1'}{s_1}{'}{'} \\
  \Refinex{'}{'}{; x : T_1'}{t}{t'}{T}{''}{''}}
 {\Refinex{}{}{}{\lambda x:T_1.t}{\lambda x:T_1'.t'}{\Pi x:T_1'.T}{''}{''}}
\]\bigskip
\[
\infrule[(\rec{}\mathrm{-product})]
 {\Forcetotype{}{}{}{T_1}{T_1'}{s_1}{'}{'} \\
  \Forcetotype{}{}{; x : T_1'}{T_2}{T_2'}{s_2}{''}{''} \\
  (s_1,s_2,s_3) \in \pts{}}
 {\Refinex{}{}{}{\Pi x:T_1.T_2}{\Pi x:T_1'.T_2'}{s_3}{''}{''}}
\]\smallskip

\noindent We now state the correctness theorem holding for all the
 rules presented so far and for the few ones that will follow. The
 proof is partitioned in the following way: here we state the theorem,
 introduce the proof method we adopted and prove the theorem for the
 simple rules presented so far. Then we will introduce more complex
 rules, like the rule for application, and we will prove for each of
 them the correctness theorem.

\begin{thm}[Correctness]
The \coer{}, \forcetotypelabel{}, \rec{}, \recE{}, \eatprodslabel{},
 and \unimany{} algorithms defined by the set of rules presented in
 this section obey their specification for all admissible
 $\preccurlyeq$ relations that include the identity for terms in the
 internal syntax. In particular, the algorithms are correct when the
 identity relation is picked for $\preccurlyeq$.
\end{thm}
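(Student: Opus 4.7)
The plan is to prove the theorem by mutual induction on the height of the derivation trees of the six judgments $\coer{}$, $\forcetotypelabel{}$, $\rec{}$, $\recE{}$, $\eatprodslabel{}$, $\unimany{}$ simultaneously. The induction hypothesis says that every recursive sub-call in the premises of a rule, being of strictly smaller height, already satisfies the specification of its respective judgment. For each rule it then suffices to check two things: (i) the preconditions of every premise hold, so that the IH is applicable; and (ii) the four conjuncts of the rule's conclusion postcondition --- well-formedness $\WF{\xPS{'}{'}}$, proof-problem refinement $\refines{\xP',\xS'}{\xP,\xS}$, the typing judgment $\xPSG{'}{'}{} \vdash t' : T'$, and the correspondence $t' \preccurlyeq t$ --- all follow. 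Well-formedness and refinement are compositional: the authors already observed that refinement is transitive and is threaded through the notational convention for $\xP{}$ and $\xS{}$, so chaining the IHs discharges these two conjuncts uniformly across all rules.

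For the typing conjunct I would argue rule by rule. The axiom-like rules ($\rec{}$-variable, $\rec{}$-constant, $\rec{}$-sort) are immediate from the corresponding kernel typing rules, and $t' \preccurlyeq t$ holds by reflexivity since $t' = t$. For ($\coer{}$-ok) and ($\recE{}$-default) the postcondition of unification (which we may assume) and of $\forcetotypelabel{}$ directly yields what is needed; for ($\forcetotypelabel{}$-ok) we additionally use the specification of $\coer{}$ to transfer the sort from $s$ to the chosen $s_1 \in \pts{}$. For the binder rules ($\rec{}$-letin, $\rec{}$-lambda, $\rec{}$-product) the main bookkeeping concern is that the context under which each recursive premise is called is still well formed: after refining $T_1$ to $T_1'$ with $\xPSG{'}{'}{} \vdash T_1' : s_1$, one has $\WF{\xPSG{'}{'}{}; x : T_1'}$, which licenses the next premise. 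The $\preccurlyeq$ conjunct then follows from admissibility applied to the binder skeleton, using the linear occurrence of $x$ in positions like $\lambda x:\bullet.t$ and $\Pi x:\bullet.T_2$.

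The most delicate rule among those already presented is $(\rec{}\mathrm{-meta})$, where I expect the main obstacle. There one must (a) show that each expected type $T_i[\vector{x_{i-1}/t'_{i-1}}]$ fed to $\recE{}$ is indeed well typed in $\xPSGp{_i}{_i}{}$, which requires invoking $\WF{\xP{}}$ (giving $\xG{_{?_j}} \vdash T_{?_j} : s$) together with the typing derivations for the already refined $t'_1,\ldots,t'_{i-1}$ supplied by previous IH applications, and then using that typing is stable under substitution and under refinement of $(\xP{},\xS{})$; and (b) show that the resulting metavariable occurrence $?_j[\vector{t'_n}]$ has type $T_{?_j}[\vector{x_n / t'_n}]$, which is exactly the typing rule for metavariables introduced at the start of Section~\ref{sec:pre}. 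The admissibility conjunct has to be iterated $n$ times along the spine, once per argument, exploiting linearity of each $x_i$ in $T_{?_j}$ and in the subsequent $T_k$'s.

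Finally, the theorem is stated parametrically in $\preccurlyeq$ so that the same proof will extend to the richer $\preccurlyeq$ relations introduced in Sections~\ref{sec:raw} and~\ref{sec:coercions}; the only clauses one will have to revisit are those whose rules genuinely modify the term (insertion of coercions in $\coer{}$, expansion of $\mldots$ in the application rule), since all other clauses rely only on admissibility and reflexivity on the internal syntax. The intended proof structure is therefore incremental: we record here the proof for the simple cases listed above, and defer the $(\rec{}\mathrm{-application})$, pattern-matching, $\eatprodslabel{}$ and $\unimany{}$ cases to the sections in which those rules are introduced, where the specific substitution-under-product reasoning needed for spines of dependent products can be carried out in situ.
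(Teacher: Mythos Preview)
Your proposal is correct and follows essentially the same approach as the paper: mutual induction on the derivation (proof) tree, with the per-rule obligation split into ``preconditions of the premises hold'' and ``postcondition of the conclusion follows,'' relying on the corresponding kernel rules $\mathcal{K}\mathrm{-name}$ for the typing conjunct and on transitivity of proof-problem refinement for the threading of $(\xP{},\xS{})$. The paper likewise treats the simple rules as routine and explicitly defers the application/$\eatprodslabel{}$ cases to where those rules are introduced; your additional remarks on $(\rec{}\mathrm{-meta})$ and on the role of admissibility of $\preccurlyeq$ are more detailed than the paper's own write-up but fully consistent with it.
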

\begin{proof}
We assume the unification algorithm to be correct w.r.t. its own
 specification.  For every judgment, the proof is by induction on the
 proof tree.  For each rule, we assume that the precondition of the
 judgment holds for the rule conclusion and that the appropriate
 postcondition holds by induction hypothesis for every hypothesis. We
 need to prove that the precondition of every hypothesis holds and
 that the postcondition of the conclusion holds too.  The proofs are
 mostly trivial for the rules presented so far.  In particular, the
 proof for each rule $\rec\mathrm{-name}$ or $\recE\mathrm{-name}$
 follows from the corresponding rule $\mathcal{K}\mathrm{-name}$
 reported in the Appendix~\ref{sec:app}.\\ We will shortly introduce
 the rules dealing with applicatios together with their correctness
 proofs since applications are handled slightly differently from the
 way they are processed by the kernel.
\end{proof}

The next rule deals with applications which are $n$-ary in our implementation
of CIC.
In a calculus without dependent types, $n$-ary applications could be
handled simply by putting the head function type in the form
of a spine of $n$ products and then by verifying that the type of each argument
matches the corresponding expected type. In the presence of dependent types,
however, it is possible to write functions whose arity depends on the
arguments passed to the function. For instance, a function $f$ could be
given type $\forall n:\nat{}. (\mathrm{repeat}~\nat{}~n)$ where 
$(\mathrm{repeat}~\nat{}~n)$ reduces to
$\nat{} \to \ldots \to \nat{}$ where the number of products is exactly $n$.
For this reason, the only possibility is to process applications one argument
at a time, checking at every step if the function still accepts more
arguments. We implement this with an additional judgment
\[\Eatprods{}{}{}{t~\vector{(x_i := v_i : T_i)} :\; T}{\vector{u_n}}{v}{V}{'}{'}\]
called ``eat products'' to be specified and implemented immediately after the
$(\rec{}\mathrm{-appl})$ rule.
\[
\infrule[(\rec{}\mathrm{-appl})]
 {\Refinex{}{}{}{t}{t'}{T}{'}{'}\\
  \Eatprods{'}{'}{}{t' :\; T}{\vector{u_{n+1}}}{v}{V}{''}{''}}
 {\Refinex{}{}{}{t~\vector{u_{n+1}}}{v}{V}{''}{''}}
\]\bigskip
\bugnelcodice{Codice useless: Caso vettore + tipo atteso: MANCA, ma credo sia codice useless perch\`e la eat\_prods farebbe la stessa cosa!}

\begin{specification}[Eat products (\eatT)]
The $\eatprodslabel$ algorithm refines an $n$-ary application by consuming
an argument at a time. It takes as input a proof problem $\xP{}$, a substitution
$\xS$ and a context $\xG{}$, all assumed to be well formed, the part of the
already processed application $t~(x_1 := v_1 : T_1)~\ldots~(x_r := v_r : T_r)$ together with its type $T$, and the list of arguments yet to be checked.
The notation $(x_i := v_i : T_i)$ means that the $i$-th already processed
argument has type $T_i$ and is consumed by a product that binds the variable
$x_i$.
The algorithm fails or returns the refined application $v$ together with its type $V,$
a new substitution $\xS{'}$ and proof problem $\xP{'}$.
It is denoted by:
\[\eatprods{}{}{}{t~\vector{(x_r := v_r : T_r)} :\; T}{\vector{u_k}}{v}{V}{'}{'}\]
Precondition:
 \[ \WF{\xPSG{}{}{}} \lland
    \xPSG{}{}{} \vdash v_i : T_i \quad i \in \{1 \ldots r \} \lland
    \xPSG{}{}{} \vdash t~v_1~\ldots~v_r : T
 \]
Postcondition (parametric in $\preccurlyeq$):
 \[ \WF{\xPS{'}{'}} \lland \refines{\xP{'},\xS{'}}{\xP,\xS} \lland
    \xPSG{'}{'}{} \vdash v : V \lland
    v \preccurlyeq t~v_1~\ldots~v_r~u_1~\ldots~u_k \]
\end{specification}

\noindent
The applicative case is one of the two most complicated rules. Moreover,
the refinement algorithm for the application does not mimic the one used
in the kernel. Therefore we show the correctness of the
$(\rec{}\mathrm{-appl})$ rule and of the implementation of the \eatprodslabel{}
algorithm.

\begin{proof}[Correctness of $(\rec{}\mathrm{-appl})$]
The only rule precondition is $\WF{\xPSG{}{}{}}$ that is also the
precondition for the first premise.
By induction hypothesis on the first premise we know
$\xPSG{'}{'}{} \vdash t': T$ where $\xPS{'}{'}$ are implicitly
returned by the first call and passed to the second one. Moreover
$\refines{\xPS{'}{'}}{\xPS{}{}}$ and $t' \preccurlyeq t$.
Therefore the preconditions for the second premise are satisfied.
By induction hypothesis on the second premise we know
$\xPSG{''}{''}{} \vdash v : V$ where $\xPS{''}{''}$ are implicitly
returned by the second call and by the rule as a whole. Moreover
$\refines{\xPS{''}{''}}{\xPS{'}{'}}$ and $v \preccurlyeq t'~\vector{u_{n+1}}$.
By transitivity of proof problem refinement, we also have
$\refines{\xPS{''}{''}}{\xPS{}{}}$. Moreover, since $\preccurlyeq$ is
admissible, we also have
$v \preccurlyeq t'~\vector{u_{n+1}} \preccurlyeq t~\vector{u_{n+1}}$.
All post-conditions have been proved and therefore the rule is correct.
\end{proof}

The \eatprodslabel{} algorithm is implemented as follows.

\bugnelcodice{Il codice \`e pi\`u semplice e molto pi\`u stupido
nell'inferenza di tipi dipendenti. Bug risolto nel papero e individuato da Enrico.}

\[
\infrule[(\eatT\mathrm{-empty})]
 {}
 {\Eatprods{}{}{}{t~\vector{(x_r := v_r : T_r)} :\; T}{}{t~\vector{v_r}}{T}{}{}}
\]\bigskip

\noindent Correctness of the $(\eatT{}\mathrm{-empty})$ rule is trivial.

\[
\infrule[(\eatT\mathrm{-prod})]
 {\Whd{}{}{}{T}{\Pi x:U_1.T_1}\\
  \RefinexE{}{}{}{u_1}{U_1}{u_1'}{'}{'}\\
  \Eatprods{'}{'}{}{t~\vector{(x_r := v_r : T_r)}~(x := u_1' : U_1) :\; T_1[x/u_1']}{\vector{u_n}}{v}{V}{''}{'}}
 {\Eatprods{}{}{}{t~\vector{(x_r := v_r : T_r)} :\; T}{u_1~\vector{u_n}}{v}{V}{''}{''}}
\]\smallskip

\begin{proof}[Correctness of $(\eatT\mathrm{-prod})$]
Let $\xPS{}{}$ be the well formed pair taken as input by the rule and passed
to the second premise, that returns the well formed pair $\xPS{'}{'}$. Similarly, let $\xPS{''}{''}$ be the well formed pair given in output by the second premise and by the whole rule. By induction hypotheses
$\refines{\xPS{'}{'}}{\xPS{}{}}$ and
$\refines{\xPS{''}{''}}{\xPS{'}{'}}$ and thus
$\refines{\xPS{''}{''}}{\xPS{}{}}$ as required.
By the rule pre-condition, $T$ is well typed in $\xPSG{}{}{}$ and so are
$U_1$ and $T_1$ obtained by reduction. Thus the premises of the second rule
are all satisfied and, by induction hypothesis, $\xPSG{'}{'}{} \vdash u'_1 : U_1 \lland u'_1 \preccurlyeq u_1$. By rules $\mathcal{K}\mathrm{-appl-rec}$ and $\mathcal{K}\mathrm{-appl-base}$ applied to the rule pre-condition
$t~v_1~\ldots~v_r : T$ we get $t~v_1~\ldots~v_r~u'_1 : T_1[x/u'_1]$. Since
all preconditions for the third premise are satisfied, by induction hypothesis
we know $\xPSG{''}{''}{} \vdash v : V$ and $v \preccurlyeq t~v_1~\ldots~v_r~u'_1~u_2~\ldots~u_n$. By admissibility of $\preccurlyeq$ we conclude also
$v \preccurlyeq t~v_1~\ldots~v_r~u_1~\ldots~u_n$. Since all post-conditions
have been proved, the rule is correct.
\end{proof}

\[
\infrule[(\eatT\mathrm{-flexible})]
 {\Whd{}{}{}{T}{?_j} \quad \mbox{or} \quad \Whd{}{}{}{T}{?_j~w_1~\dots w_l}\\
  \Refinex{}{}{}{u_1}{u_1'}{U_1}{'}{'}\\
  \xP{} \leadsto \xP{} \cup \{ \xG{};\vector{x_r:T_r};x:U_1 \vdash ?_k : \Type_\top\}\\
  \Unifx{}{''}{'}{}{T}{\Pi x:U_1.?_k[\vector{x_r/v_r};x/x]}{'''}{''} \\
  \Eatprods{'''}{''}{}{t~\vector{(x_r := v_r : T_r)}~(x:=u_1':U_1) :\; ?_k[\vector{x_r/v_r};x/u_1']}{\vector{u_n}}{v}{V}{''''}{'''}}
 {
\Eatprods{}{}{}{t~\vector{(x_r := v_r : T_r)} :\; T}{u_1~\vector{u_n}}{v}{V}{''''}{'''}
}
\]\smallskip

\begin{proof}[Correctness of $(\eatT\mathrm{-flexible})$]
The proof is similar to the one for the $\eatT\mathrm{-prod}$ rule.
We only list the major differences here. The fact $\xPSG{'}{'}{} \vdash u'_1 : U_1$ is now obtained by induction hypothesis on the second premise.
The role of $T_1$ is now played by $?_k[\vector{x_r/v_r};x/x]$.
The induction hypothesis on the third premise yields
$\xPSG{'''}{''}{} \vdash T \downarrow \Pi x:U_1.?_k[\vector{x_r/v_r};x/x]$
that was previously given directly by the rule pre-conditions (up to reduction
of $T$). The rest of the proof follows without any changes. The only
remaining check to be performed is the well-formedness of
$?_k[\vector{x_r/v_r};x/x]$ that follows from rule $\mathcal{K}\mathrm{-meta}$
using the rule pre-condition $\xPSG{}{}{} \vdash v_i : T_i \quad i \in \{1 \ldots r\}$.
\end{proof}

Another reason for the complexity of the \eatT algorithm is the need to 
infer a dependent type for the function $f$ when its type is flexible (a
metavariable). We now show an example of this scenario and an execution trace
for the algorithm.

\begin{example}[Inference of a maximally dependent type]
Consider the following input, where $c_1, c_2, c_3, P_1,P_2$ are such that
$\vdash c_1: \nat{}$ and $\vdash c_2: P_1(c_1)$ and $c_3: P_2(c_1,c_2)$:
\[ \{ \vdash ?_F : \Type \}, \emptyset, \emptyset
      \vdash \lambda f : ?_F. f~c_1~c_2~c_3\]
\noindent 
The rule $(\eatT\mathrm{-flexible})$ matches the input and since
the argument $c_1$ has type $\nat{}$, \xP{} is extended as follows:
\[ \begin{array}{lrl}
    \xP{} = \{ &  & \vdash ?_F : \Type; \\
               & x : \nat{} & \vdash ?_S : ?_T; \\
               & x : \nat{} & \vdash ?_T : \Type \}
   \end{array}\]
\noindent 
Then $?_F$ gets unified with $\Pi x : \nat{}.?_S$ obtaining
the following substitution:
\[ \xS{} = \{ \vdash ?_F := \Pi x: \nat{}. ?_S : \Type \} \]
\noindent 
The new type for the head of the application, morally $(f~c_1)$,
represented as $f~(x := c_1)$,
 is $?_S[c_1/x]$.
In the following call to $(\eatT\mathrm{-flexible})$, the argument
$c_2$ has type $P_1(c_1)$. $\xP{}$ is thus extended as follows:
\[ \begin{array}{lrl}
     \xP{} = \{ & x: \nat{}; y : P_1(c_1) & \vdash ?_U : ?_V; \\
              & x: \nat{}; y : P_1(c_1) & \vdash ?_V : \Type; \\
              & x : \nat{} & \vdash ?_S : ?_T; \\
              & x : \nat{} & \vdash ?_T : \Type \}
   \end{array}\]
\noindent 
Then $?_S[x/c_1]$ is unified with $\Pi y : P_1(c_1).?_U[x/c_1]$ obtaining
\[ \begin{array}{lrl}
      \xS{} = \{ & & \vdash ?_F := \Pi x: \nat{}. ?_S : \Type; \\
                 & x : \nat{} & \vdash ?_S := \Pi y : P_1(x).?_U : ?_T; \\
                 & x : \nat{} & \vdash ?_T := \Type : \Type \}
   \end{array}\]
\noindent
The new type for the head of the application $(f~c_1~c_2)$ is $?_U[c_1/x;c_2/y]$.
In the following call to $(\eatT\mathrm{-flexible})$, the argument
$c_3$ has type $P_2(c_1,c_2)$. $\xP{}$ is thus extended as follows:
\[ \begin{array}{lrl}
     \xP{} = \{ & x: \nat{}; y : P_1(c_1); z: P_2(c_1,c_2) & \vdash ?_Z : ?_W; \\
                & x: \nat{}; y : P_1(c_1); z: P_2(c_1,c_2) & \vdash ?_W : \Type; \\
                & x: \nat{}; y : P_1(c_1) & \vdash ?_U : ?_V; \\
                & x: \nat{}; y : P_1(c_1) & \vdash ?_V : \Type \} \\
   \end{array}\]
\noindent 
Then $?_U[x/c_1;y/c_2]$ is unified with $\Pi z : P_2(c_1,c_2).?_Z[x/c_1;y/c_2]$ obtaining
\[ \begin{array}{lrl}
    \xS{} = \{ & & \vdash ?_F := \Pi x: \nat{}. ?_S : \Type; \\
               & x : \nat{} & \vdash ?_S := \Pi y : P_1(x).?_U : ?_T; \\
               & x : \nat{} & \vdash ?_T := \Type : \Type; \\
               & x : \nat{}; y : P_1(c_1) & \vdash ?_U := \Pi z : P_2(x,y).?_Z : ?_V; \\
               & x : \nat{}; y : P_1(c_1) & \vdash ?_V := \Type : \Type \}
   \end{array}\]

\noindent The final instantiation for $?_F$ is thus the maximally dependent type
\[ \xS{}(?_F) = \Pi x:\nat{}.\Pi y : P_1(x). \Pi z : P_2(x,y). ?_Z : \Type \]
where
\[ \begin{array}{lrl}
     \xP{} = \{ & x: \nat{}; y : P_1(c_1); z: P_2(c_1,c_2) & \vdash ?_Z : ?_W; \\
                & x: \nat{}; y : P_1(c_1); z: P_2(c_1,c_2) & \vdash
?_W : \Type \}\rlap{\hbox to 93 pt{\hfill\qEd}} \\
   \end{array}\]
\end{example}

We conclude now the description of the refinement algorithm in type inference
mode. The final missing rule is the most complicated one and deals
with pattern matching. It is reported in Figure~\ref{fig:match}.

\begin{figure}
\[
 \infrule[(\rec{}\mathrm{-match})]
  {(I_l : \Pi \vector{x_l : F_{l}}.\Pi \vector{y_r : G_r}. s) \in \env \\
 (k_j: \Pi \vector{x_l:F_l}.\Pi \vector{y^j_{n_j} :
 T^j_{n_j}}.I_l~\vector{x_l}~\vector{M^j_r}) \in \env \quad \hfill 
 j \in \{1 \ldots n\}\\\\

  \Extendxs{_i}{}{?u_i}{F_i[\vector{x_{i-1}/?u_{i-1}}]} \quad \hfill i \in \{1\ldots l\}\\
  \Extendxs{_i}{}{?v_i}{G_i[\vector{x_l/?u_l};\vector{y_{i-1}/?v_{i-1}}]} \quad \hfill i \in \{1\ldots r\}\\
  \RefinexE{_{l+r+1}}{}{}{t}{I_l~\vector{?u_l}~\vector{?v_r}}{t'}{'}{'}\\\\

  G'_i = G_i[\vector{x_l/?u_l}] \quad \hfill i \in \{1 \ldots r\}\\
  {T'}^j_i = T^j_i[\vector{x_l/?u_l}] \quad \hfill j \in \{1 \ldots n\}, i \in \{1 \ldots n_j\}\\
  {M'}^j_i = M^j_i[\vector{x_l/?u_l}] \quad \hfill j \in \{1 \ldots n\}, i \in \{1 \ldots r\}\\\\

  \Extendxs{'}{'}{?_{1}}{\Type_\top}{''} \\
  \RefinexE{''}{'}{}{T}{\Pi \vector{y_r : G'_r}.
  \Pi x:I_l~\vector{?u_l}~\vector{y_r}. ?_1[]}{T'}{'_1}{'_1}\\
 (s,\xS(?_1)) \in \elim(\pts)\\\\

 \Unifx{}{}{}{;\vector{y^j_{n_j-1} : P^j_{n_j-1}}}{P^j_{n_j}}{{T'}^j_{n_j}}{}{}     \hfill j \in \{1\ldots n\}\\

  \RefinexE{'_i}{'_i}{; \vector{y^j_{n_j} : P^j_{n_j}}}{
t_j}{
T'~\vector{{M'}^j_r}~(k_j~\vector{?u_l}~
   \vector{y^j_{n_j}})}
  {
t'_j}{'_{i+1}}{'_{i+1}}\quad \hfill j \in \{1\ldots n\}
 }
 {\begin{array}{l}
  \Refinex{_1}{}{}{
  \left(\begin{array}{l}
    \match{t}{I_l}{T} \\
     \;\;[ k_1~(\vector{y^1_{n_1}:P^1_{n_1}})\Rightarrow t_1~|~\ldots~|~
           k_n~(\vector{y^n_{n_n}:P^n_{n_n}})\Rightarrow t_n ]
  \end{array}\!\!\right)\!}
  {\\\quad\quad\left(\begin{array}{l}
    \match{t'}{I_l}{T'} \\
     \;\;[ k_1~(\vector{y^1_{n_1}:{P'}^1_{n_1}})\Rightarrow t'_1~|~\ldots~|~
           k_n~(\vector{y^n_{n_n}:{P'}^n_{n_n}})\Rightarrow t'_n ]
  \end{array}\!\!\right)\!}
   {\!\!T'~\vector{?v_r}~t'}{'_{n+1}}{'_{n+1}}
 \end{array}}
\]
\caption{\label{fig:match}Rule for pattern matching ($\rec{}\mathrm{-match}$)}
\end{figure}


The rule has been slightly simplified: in the actual implementation of
Matita the test $(s,\Phi(?_1)) \in \elim(\pts)$ is relaxed to accept
elimination of inhabitants of non informative data types in all cases
under the restriction that the data type must be small. Intuitively, smallness
corresponds to the idea that the inhabitant of the data type would be non
informative even if declared in $\Type$. Typical examples are empty types and
the Leibniz equality type. A precise definition of smallness together with
the corresponding rules for pattern matching can be found in~\cite{ck-sadhana}.

Note that the return type $T'$ is usually an anonymous function, beginning
with lambda abstractions. Thus the type inferred for the pattern match
construct is a $\beta$-redex. In fact the actual code of Matita
post-processes that type firing $(r+1)$ $\beta$-redexes. 



\bugnelcodice{Nota: qui sto cambiando completamente il codice, ma quello ottenuto mi sembra meglio perch\`e usa l'altra direzione ed \`e pi\`u leggibile.}

\bugnelcodice{Nota: manca il caso outtype = implicit dove nel codice espandiamo con dei lambda per far inferire pi\`u tipi dipendenti (perch\`e?). TASSI: perchè coi lambda metti subito i destri e il matchato nel contesto della meta e (forse solo una volta) era piu potente... ora generiamo sempre i lambda nel caso meta applicata contro termine nell'unificazione}


\begin{thm}[Termination]
The \rec{} 
algorithm defined by the set of rules presented in this 
section including \recE{}, \coer{}, \forcetotypelabel{} and \eatprodslabel{}
is terminating.
\end{thm}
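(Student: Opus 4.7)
The plan is to exhibit a well-founded termination measure on the calls of $\rec{}$, $\recE{}$, $\forcetotypelabel{}$, $\coer{}$ and $\eatprodslabel{}$ and to verify, rule by rule, that every recursive premise has strictly smaller measure than the conclusion; well-founded induction then yields termination. Throughout, we rely on termination of unification (invoked only by $\coer{}$) and of weak head reduction, both of which are external procedures assumed to terminate.

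The measure I would use is a lexicographically ordered pair $\mu = (n, m)$, where $n$ is the syntactic size of the term material under analysis and $m \in \{0,1,2\}$ is a mode tag, defined as $\mu(\rec{}; t) = (|t|, 0)$; $\mu(\recE{}; t, T) = (|t|, 1)$; $\mu(\forcetotypelabel{}; T) = (|T|, 1)$; and $\mu(\eatprodslabel{}; t~\vector{(x_r := v_r : T_r)} :\; T, \vector{u_k}) = (\sum_{i=1}^{k} |u_i|, 2)$. The $\coer{}$ judgment is a base case because it dispatches only to unification. The ordering $<_\mu$ is the lexicographic order on $\nat \times \{0,1,2\}$, which is well-founded.

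The verification is essentially bookkeeping. For $(\recE{}\mathrm{-default})$ and $(\forcetotypelabel{}\mathrm{-ok})$, the recursive call to $\rec{}$ is on the same term, so the primary component is equal but the mode drops from $1$ to $0$. For $(\rec{}\mathrm{-appl})$, the premise on $t$ is a strict subterm, and the premise on $\eatprodslabel{}$ has primary size $\sum|u_i| < |t~\vector{u_{n+1}}|$. The structural rules $(\rec{}\mathrm{-letin})$, $(\rec{}\mathrm{-lambda})$, $(\rec{}\mathrm{-product})$, $(\rec{}\mathrm{-meta})$, and $(\rec{}\mathrm{-match})$ dispatch on strict subterms of the input term. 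For the two non-empty $\eatprodslabel{}$ rules, the recursive call to $\recE{}$ (respectively to $\rec{}$ in the flexible branch) on $u_1$ has primary value $|u_1| \leq \sum_{i=1}^{k}|u_i|$, with equality only when the list is a singleton; in that boundary case the mode drop ($1 < 2$, or $0 < 2$) provides the strict decrease. The inductive $\eatprodslabel{}$ call that has consumed $u_1$ has primary value $\sum_{i=2}^{k}|u_i|$, which is strictly less than $\sum_{i=1}^{k}|u_i|$.

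The main obstacle is precisely the coordination of the three algorithmic families under a single well-founded order: the calls $\recE{}\to\rec{}$ and $\forcetotypelabel{}\to\rec{}$ do not shrink the term at all, and the singleton-argument case of $\eatprodslabel{}$ has to pass the same term to $\recE{}$. The mode tag is introduced exactly to break these ties, and one must check that it is oriented consistently with every cross-family transition; the layering $0 < 1 < 2$ chosen above does the job. The finite iterations indexed by $i$ in $(\rec{}\mathrm{-meta})$ and $(\rec{}\mathrm{-match})$ produce only bounded fan-out at each node of the proof tree and therefore pose no termination issue. Observing that the growth of $\xP{}$ and $\xS{}$ during refinement plays no role in $\mu$, well-founded induction on $\mu$ then yields termination of the whole mutually recursive family.
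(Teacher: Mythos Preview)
Your lexicographic measure captures essentially the same argument as the paper: the paper proves termination by structural induction on the term after observing that $\coer{}$ is a leaf and then treating $\recE{}$ and $\forcetotypelabel{}$ as inlined aliases for $\rec{}$ (your mode tag achieves the same tie-breaking), and then by a separate induction on the argument list for $\eatprodslabel{}$ (your $\sum |u_i|$ component). So the overall strategy coincides; your packaging into a single well-founded order is somewhat cleaner.

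There is, however, one point you gloss over that the paper handles explicitly. You assume termination of weak head reduction as an external black box. In CIC, reduction is strongly normalizing only on \emph{well-typed} terms; on ill-typed input it may diverge. The rules $(\eatT\mathrm{-prod})$ and $(\eatT\mathrm{-flexible})$ invoke $\mwhd$ on the type $T$ carried along by $\eatprodslabel{}$, and this $T$ is not a subterm of the original input---it is produced by earlier refinement steps and then updated by substitution $T_1[x/u_1']$ at each iteration. The paper therefore invokes the already-established correctness theorem to guarantee that the $T$ handed to $\eatprodslabel{}$ is always a well-typed type, and checks that each rule preserves this invariant (e.g., that $T_1[x/u_1']$ remains well typed by subject reduction and the postcondition of $\recE{}$, and that $\Pi x:U_1.?_k[\ldots]$ is well typed because CIC is a full PTS and $?_k$ is declared of sort $\Type_\top$). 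Without making this invariant explicit, your blanket assumption on $\mwhd$ is not justified. As a minor side remark: unification is also called directly in $(\eatT\mathrm{-flexible})$ and in $(\rec{}\mathrm{-match})$, not only through $\coer{}$; this does not affect your argument since you already assume unification terminates.
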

\begin{proof}
The proof is by structural induction of the syntax of terms. 
The rules $(\coer{}\mathrm{-ok})$, $(\rec{}\mathrm{-variable})$, 
$(\rec{}\mathrm{-sort})$ and $(\rec{}\mathrm{-constant})$ are base cases. The
first one clearly terminates if the unification algorithm $\UNI{}$ terminates,
while the others terminate since \xG{} and \env{} are finite and 
the test $(s_1, s_2) \in \pts$ is also terminating.

Now that we proved that all the rules for \coer{}, amounting to 
only one for the mono directional refiner,
terminate, we can consider the $(\recE{}\mathrm{-default})$ and
$(\forcetotypelabel{}\mathrm{-ok})$ as aliases
for \rec{}, as if we were inlining their code.

By induction hypothesis \rec{} (and \recE{} being now an alias) 
terminates when called on smaller terms. 
The rule $(\rec{}\mathrm{-meta})$ terminates because
\xS{} and \xP{} are finite, thus lookups are terminating, and 
calls to \recE{} are done on smaller terms, so the induction hypothesis applies.
The rule $(\rec{}\mathrm{-letin})$ calls \forcetotypelabel{}, \recE{} and
\rec{} on smaller terms, thus terminates by induction hypothesis.
The same holds for $(\rec{}\mathrm{-lambda})$ and $(\rec{}\mathrm{-product})$.
To prove that $(\rec{}\mathrm{-appl})$ terminates we use the induction 
hypothesis on the first premise and we are left to prove that 
\eatprodslabel{} terminates as well. Note that \eatprodslabel{} calls
\recE{} and \rec{} on proper subterms of the n-ary application, thus
the induction hypothesis applies and will be used in the next paragraph.

We show \eatT{} terminates by induction on the list
of arguments (i.e. the list of terms after~$\eatsep$) assuming that
the input term $T$ is a well typed type. Thanks to the correctness property 
of \rec{}, $(\rec{}\mathrm{-appl})$ always passes to \eatT{} a well typed type.
The rule $(\eatT\mathrm{-empty})$ clearly terminates.
The recursive call in the rule $(\eatT\mathrm{-prod})$
is on a shorter list of arguments, thus is terminating, and
the call to \recE{} terminates by induction hypothesis. 
The term $T[x/u_1']$ is a
well typed type thanks to the subject reduction property of CIC, and the fact
that the variable $x$ and the term $u_1'$ have the same type (postcondition 
of \recE{}, called with expected type $U_1$).
The call to $\whdlabel{}$ is terminating because $T$ is well typed and 
CIC reduction rules, on well typed  terms, form a terminating rewriting system.

The rule $(\eatT\mathrm{-flexible})$ terminates because of the same
arguments. 
The only non obvious step is that 
$\Pi x: U_1. ?_k[\vector{x_r/v_r};x/x]$ is a well typed type.
The metavariable $?_k$ is declared in \xP{} of type $\Type_\top$,
thus cannot be instantiated with a term. Moreover,
since CIC is a full PTS,
the product $\Pi x: U_1. ?_k$ is well typed of sort $\Type_\top$.

The recursive calls in the last rule $(\rec{}\mathrm{-match})$ are
always on smaller terms. We are left to prove that the expected type
$(T'~\vector{{M'}^j_r}~(k_j~\vector{?u_l}~\vector{y^j_{n_j}}))$ passed to
the recursive call made on the last line is indeed a well typed type. 
The term $T'$
is obtained using \recE{}, and we thus know it is a function
of type $\Pi \vector{y_r : G'_r}.\Pi x:I_l~\vector{?u_l}~\vector{y_r}.?_1$.
The arguments $\vector{{M'}^j_r}$ are as many as expected and
have the right types according to the environment \env{} (first two lines)
and thanks to the fact that the substitutions 
${M'}^j_i = M^j_i[\vector{x_l/?u_l}]$
preserves their types. 
Finally, the term $(k_j~\vector{?u_l}~\vector{y^j_{n_j}})$
has type $I_l~\vector{?u_l}~\vector{M_r^j}$ that is the expected one.
Thus $(T'~\vector{{M'}^j_r}~(k_j~\vector{?u_l}~\vector{y^j_{n_j}}))$ 
has type $?_1$ that is a well typed type according to \xP{}.
\end{proof}
\subsection{Implementation remarks}

The choice of keeping $\xP{}$ and $\xS{}$ separate is important and motivated
by the fact that their size is usually very different.  While the number of
proof problems in $\xP{}$ is usually small, the substitution $\xS{}$ may
record, step by step, the whole proof input by the user and can grow to an
arbitrary size.  Each metavariable must be either declared in $\xP{}$ or
assigned in $\xS{}$, thus to know if a metavariable belongs to $\xS{}$ it is
enough to test if it does not belong to $\xP{}$.  Knowing if a metavariable is
instantiated is a very common operation, needed for example by weak head
normalization, and it must thus be possible to implement it efficiently. 

Another important design choice is to design the kernel of the system 
so that it handles
metavariables~\cite{ck-sadhana}. This enables to reuse a number of
functionalities implemented in the kernel also during the refinement process,
like an efficient reduction machinery.  Also note that the extensions made to
the type checker described in Definition~\ref{typj} become dead code when the
type checker is called on ground terms, and thus do not increase the size of
the trusted code of the system.

Last, it is worth pointing out that the algorithm is mostly independent from
the representation chosen for bound variables. Matita is entirely based on De
Bruijn indexes, but the tedious \emph{lift} function is mostly hidden inside
reduction and only directly called twice
in the actual implementation of this algorithm. In particular it is necessary
only to deal with the types of variables that must be pulled from the context.
This potentially moves the type under all the context entries following 
the variable declaration or definition, thus the type must be lifted 
accordingly.

\bugnelcodice{REGOLA PROD
Nota: qui nel codice non si usa la check\_type non si sa perch\`e. Immagino
perch\`e non torna la sorta, ma io l'ho modificata nel papero per questo.}

\bugnelcodice{
REAGOLA EAT-FLEX
Nota: nel codice della whd c'e' una linea commentata con (* next line grants that ty\_args is a type *): perch\'e non \`e un invariante rispettato?}


\subsubsection{Rules for objects}

Objects are declarations and definitions of constants, inductive types and
recursive and co-recursive functions that inhabit the environment $\env$.
Exactly like terms, the user writes objects down using the external syntax
and the objects need to be refined before passing them to the kernel for
the final check before the insertion in the environment.

\begin{definition}[Type checking for objects ($\env \vdash \mbox{WF}$)]
The type checking algorithm for CIC objects
takes as input a proof problem $\xP{}$ and a substitution
$\xS$, all assumed to be well formed, and an object
$o$.
It is denoted by:\\
\[\env \cup (\xP{},\xS{}, o) \vdash \mbox{WF}\]
and states that $o$ is well typed.
\end{definition}

This algorithm is part of the kernel, and described in Appendix~\ref{sec:app}.
It is the basis for the construction of the corresponding refinement
algorithm for objects, that is specified as follows.

\begin{specification}[Refiner for objects $(\mathcal{R})$]
A refiner algorithm $\mathcal{R}$ for CIC objects
takes as input a proof problem $\xP{}$ and a substitution
$\xS$, all assumed to be well formed, and an object
$o$. It fails or returns an object $o'$, a
proof problem $\xP{'}$ and substitution $\xS{'}$.
It is denoted by:\\
\[ \xPSp{}{} \vdash o \recO{} o' ~ (\xPS{'}{'})\]
Precondition:
\[ \WF{\xPS{}{}} \]
Postcondition (parametric in $\preccurlyeq$):
\[\WF{\xPS{'}{'}} \lland \refines{\xP{'},\xS{'}}{\xP{},\xS{}} \lland 
\env \cup (\xP{'},\xS{'}, o') \vdash \mbox{WF}
\lland o' \preccurlyeq o \]
\end{specification}

Note that an object can be a block of mutually recursive definitions or
declarations, each one characterized by a different type.
Thus the \recO{} rule does not return a single type, but a new object
together with a new metavariable environment \xP{'} and substitution \xS{'}.
When $\xP{'}$ is not empty, all the metavariables in $\xP{'}$ correspond to
proof obligations to be proved to complete the definition of the object,
necessary to convince the system to
accept the object definition. This is especially useful
for instance in the formalization of category theory where definitions
of concrete categories are made from definitions of terms (objects and morphisms)
together with proofs that the categorical axioms hold. In such definitions,
objects and functors are immediately fully specified, while the proof parts
are turned into proof obligations. Spiwack's Ph.D. thesis~\cite{Spiwack} discusses
this issue at length as a motivation for a complete re-design of the
data type for proofs in Coq. The new data type is essentially the one used
in this paper and it will be adopted in some forthcoming version of Coq.
The old data type, instead, did not take the Curry-Howard isomorphism seriously
in the sense that partial proofs were not represented by partial proof terms
and the refinement of an object could not open proof obligations. This
problem was already partially addressed by Sozeau~\cite{russell} where he
added a new system layer around the refiner to achieve the behavior that
our refiner already provides.

\[
\infrule[(\recO{}\mathrm{-axiom})]
 { \vdash T \forcetotypelabel T' : s }
 { \XPS{_1}{_1} \vdash \axiom{c}{T} \recO{} \axiom{c}{T'} }
\]\bigskip
\[
\infrule[(\recO{}\mathrm{-definition})]
 { \vdash T \forcetotypelabel T' : s \qquad
   \vdash t : T' \recE{} t' }
 { \XPS{_1}{_1} \vdash \definit{c}{T}{t} \recO{} \definit{c}{T'}{t'} }
\]\bigskip
\[
\hspace{-2em}
\infrule[(\recO{}\mathrm{-inductive})]
 { 
\left.
\begin{array}{l}
   \vdash \Prod \vector{x_l:L_l}.A_i \forcetotypelabel V_i' 
     \qquad\qquad\qquad\qquad\qquad\qquad\qquad\;\\
   \vdash V_i' \whdlabel{} \Prod \vector{x_l:L_l'}.A_i' \qquad
   \vdash A_i' \whdlabel{} \Prod \vector{y_{r_i} : R_{r_i}'}.s_i\\
\end{array}
\right\} \begin{array}{l} i \in \{1\ldots n\} \end{array}\\
\left.
\hspace{-0.46em}\begin{array}{l}
   \vector{x_l:L_l'}; \vector{I_n : V_n'} 
     \vdash T_{i,k} \forcetotypelabel T_{i,k}' : s_{i,k} \\
   \vector{x_l:L_l'}; \vector{I_n : V_n'} 
     \vdash T_{i,k}' \whdlabel 
   \Prod \vector{z_{p_{i,k}} : V_{p_{i,k}}}.V_{i,k}\\
   \xP{} \leadsto \xP{} \cup \{ \vector{x_l : L_l'};\vector{y_{j-1} : R_{j-1}'}\vdash ?_j : R_j' \} \quad j \in \{1 \ldots r_i\}\\
   \vector{x_l:L_l'}; \vector{I_n : A_n'} ;
     \vector{z_{p_{i,k}} : V_{p_{i,k}}} \vdash
     V_{i,k} \UNI I_i~\vector{x_l}~\vector{?_{r_{i}}} \uni{_=}
\end{array}
\right\}
\begin{array}{l} i \in \{1\ldots n\} \\ k \in \{1\ldots m_n\} \end{array}
 }{
\begin{array}{l}
\XPS{_1}{_1} \vdash 
\left(
       \inductive{\Prod \vector{x_l:L_l}.~}
                {I_1}{A_1}{k_{1,1} : T_{1,2}}{k_{1,m_1} : T_{1,m_1}}
                {I_n}{A_n}{k_{n,1} : T_{n,1}}{k_{n,m_n} : T_{n,m_n}}
\right)
\recO{}\\
\quad
\left(
       \inductive{\Prod \vector{x_l:L'_l}.~}
                {I_1}{A_1'}{k_{1,1} : T_{1,2}'}{k_{1,m_1} : T_{1,m_1}'}
                {I_n}{A_n'}{k_{n,1} : T_{n,1}'}{k_{n,m_n} : T_{n,m_n}'}
\right)
\end{array}
}
\]\smallskip

\noindent The loop from $1$ to $n$, ranges over all mutually inductive types
of the block using the index $i$ (for inductive). The other loop, 
from $1$ to $m_n$, ranges over the $m_n$ constructors of the $n$-th 
inductive type using index $k$ (for constructor). $T_{i,k}$ is the
type of the $k$-th constructor of the $i$-th inductive. Every inductive type
in the block has the same number of homogeneous parameters $l$ of type
$L_\alpha$ for some $\alpha$ in $1 \dots l$, and $r_i$ extra arguments of type 
$R_\beta$ for some $\beta$ in $1 \ldots r_i$. As explained in 
Section~\ref{sec:pre}, homogeneous arguments are not abstracted explicitly 
in the types of the constructors, thus their context includes not
only the inductive types $\vector{I_n}$ but also the homogeneous
arguments $\vector{x_l}$. Note that in this rule we used $I_n$ to
mean the $n$-th inductive, in contrast with the rest of the paper where
the index $n$ means the number of homogeneous arguments, $l$ here.
The complete arity of the inductive 
types $\vector{V'}$ is a closed term. In fact that type is generated in an
empty context in the first premise. This makes the context in which 
$T_{i,k}$ is processed valid: there is no variable capture when 
$\vector{x_l}$ is put before $\vector{I_n}$. 
Moreover the successful refinement of
$\Prod \vector{x_l:L_l}.A_i$ in the empty context grants that 
the types of the homogeneous arguments do not depend on the
inductive types $\vector{I_n}$.
The last three premises just check
that the type of each constructor is actually a product targeting the
inductive type. 

Whilst being already quite involved, this rule is only partial.
It lacks the checks for positivity conditions, that are only implemented
by the kernel. Since the kernel of Matita is able to deal with
metavariables we can test for these conditions using directly the kernel
after the refinement process. Nevertheless, when the inductive type fed to 
the kernel is partial, the checks cannot be
precise: all non positive occurrences will be detected, but nothing will
prevent the user from instantiating a missing part with a term containing
a non positive occurrence. One could label metavariables in a such a way
that the unification algorithm refuses to instantiate them with a term
containing non positive occurrences of the inductive type, but our current 
implementation does not. Anyway, once the definition is completed by the 
user, another call to the kernel is  made, and all non positive occurrences 
are detected.

Moreover, the kernel also checks that the
sort $s_{i,k}$ of the type of every constructor $T_{i,k}$ is properly 
contained in the sort of the corresponding inductive $s_i$. Finally,
one should also check that any occurrence of $I_i$ 
in the types $T_{i,k}$ of the constructors is applied to $\vector{x_l}$.
This test is also omitted since it is performed by the kernel during
the test for positivity.

\[
\infrule[(\recO{}\mathrm{-letrec})]
 {
\left.
\begin{array}{l}
  \vdash \Pi \vector{x^i_{p_i} : T^i_{p_i}}.T^i_{p_i+1} \forcetotypelabel T'_i : s_i \\
  \vdash T'_i \whdlabel \Pi \vector{x^i_{p_i} : {T'}^i_{p_i}}. {T'}^i_{p_i+1} \\
  \vector{f_n : T_n}; \vector{x^i_{p_i}:{T'}^i_{p_i}}
  \vdash t_i : {T'}^i_{p_i+1} \recE{} t_i'
\end{array}
\right\} i \in \{1\ldots n\} \\
} {
  \begin{array}{l}
   \XPS{_1}{_1} \vdash 
   \left(
   \letr[\\]{f_1 (\vector{x^1_{p_1} : T^1_{p_1}})}
        {T^1_{p_1+1}}{t_1}
        {f_n (\vector{x^n_{p_n} \!:\! T^n_{p_n}})}
        {T^n_{p_n+1}}{t_n}
   \right)\recO{}\!\!\\
   \quad
    \left(
   \letr[\\]{f_1 (\vector{x^1_{p_1} \!:\! {T'}^1_{p_1}})}
        {{T'}^1_{p_1+1}}{t'_1}
        {f_n (\vector{x^n_{p_n} \!:\! {T'}^n_{p_n}})}
        {{T'}^n_{p_n+1}}{t'_n}
    \right)
     \h{,~ \xP{_{2n+1}},~ \xS{_{2n+1}}}
   \end{array}
  }
\]\smallskip

\noindent As for inductive types, this rule is only partial: it lacks
 the checks for guardedness conditions (termination or productivity
 tests), that are delegated to the kernel. We omit the rule for
 co-recursive functions, since it is identical to the one presented
 above.

\section{Bi-directional refinement}\label{sec:bi}

To obtain a bi-directional implementation of the refiner, we add new
rules to the \recE{} algorithm. These ad-hoc rules for particular
cases must take precedence over the generic
$(\recE{}\mathrm{-default})$ rule. The ad-hoc rules are responsible for
propagating information from the expected type towards the leaves of the
term.

The new rule for lambda-abstraction is well known in the 
literature~\cite{piercelocaltype}
and it is also the only one implemented in Coq.
The rule for let-in statements is given to allow the system infer more
concise types. The one for application of constructors is completely novel
and it takes advantage of
additional knowledge on the constant parameters of an inductive type. It
is thus peculiar of the Calculus of (Co)Inductive Constructions. This is
also the rule that, according to our experience, mostly affects the behavior
of the refiner. It makes it possible to refine many more terms to be refined in frequently
occurring situations where, using a mono directional algorithm, 
more typing information had to be given by hand.

\begin{thm}[Correctness]
The new rules given in this section do not alter the correctness of the
$\recE{}$ algorithm w.r.t. its specification
for all admissible $\preccurlyeq$ relations that include the
identity for terms in the internal syntax. In particular, the algorithm is
correct when the identity relation is picked for $\preccurlyeq$. \qed
\end{thm}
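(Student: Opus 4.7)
The plan is to follow exactly the proof schema used for the mono-directional rules: induct on the refinement derivation and, for each newly added rule, show that (i) the preconditions stated for each premise are discharged by the rule's own precondition together with the postconditions of the preceding premises (these being available as induction hypotheses), and (ii) the postcondition of the conclusion is obtained by combining them. Since these new rules only extend $\recE{}$ (not $\rec{}$, $\coer{}$, $\forcetotypelabel{}$, $\eatprodslabel{}$ or $\unimany{}$), the existing correctness arguments for the other judgments remain untouched; only the new $\recE{}$ cases need to be verified.

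First I would handle the bi-directional $\lambda$-rule. The precondition of the conclusion provides $\WF{\xPSG{}{}{}}$ and well-typedness of the expected type $T$. Reducing $T$ to a product (or unifying it with a freshly-introduced product spine) yields a $\Pi x : T_1. T_2$ well typed in the refined status; at this point the precondition of $\forcetotypelabel{}$ on the binder annotation and of $\recE{}$ on the body, in the context extended with $x : T_1'$, hold by construction. The postcondition that the output term has type $\Pi x : T_1'. T_2'$ then follows, and convertibility with $T$ is ensured by the unification step performed up front. The let-in case is analogous, with the expected type propagated unchanged to the body and the substitution $[x/t']$ applied, using subject reduction and admissibility of $\preccurlyeq$ to rewrite the expected type of the body under the refined bound term.

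The most delicate case is the constructor application rule, where the expected type is of the form $I_l~\vector{u_l}~\vector{w_r}$ and its $l$ homogeneous parameters are exploited in place of the fresh metavariables that the generic rule would allocate via $\rec{}\mathrm{-appl}$ and $\eatT$. I would discharge each premise step by step, relying on the fact that, given $(k_j : \Pi \vector{x_l : F_l}.\Pi \vector{y^j_{n_j} : T^j_{n_j}}. I_l~\vector{x_l}~\vector{M^j_r}) \in \env$, substituting $\vector{x_l/u_l}$ in the constructor's type is type preserving. The remaining arguments of the constructor are then processed by $\eatT$, whose correctness has already been established; a final unification step matches the computed target $I_l~\vector{u_l}~\vector{M^j_r[\vector{x_l/u_l}]}$ against the expected $I_l~\vector{u_l}~\vector{w_r}$, so the postcondition on typing holds. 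Transitivity of $\refineslabel$ chains the successive proof-problem refinements into one, and admissibility of $\preccurlyeq$ (together with its containing the identity on internal syntax) propagates the correspondence through the reconstructed application.

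The main obstacle will be the constructor rule: one must argue carefully that substituting the inductive parameters extracted from the expected type on the left does not interact badly with the metavariables introduced by $\eatT$ for the non-homogeneous arguments, and that the final unification has exactly the shape required by the specification of $\UNI$. Once this is in place, the other two rules are straightforward adaptations of the mono-directional proof, and the overall statement follows by induction over the extended rule set.
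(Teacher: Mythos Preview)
Your proposal is correct and follows exactly the methodology the paper establishes in Section~\ref{sec:mono}: induction on the derivation, per-rule verification that preconditions of premises are discharged and that the conclusion's postcondition follows, chaining $\refineslabel$ by transitivity and using admissibility of $\preccurlyeq$. In fact the paper gives no proof at all for this theorem --- it is stated with a bare \qed{} --- so your write-up is already more detailed than what the paper provides, while remaining faithful to the intended (and previously demonstrated) proof schema.
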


\[
\infrule[(\recE{}\mathrm{-lambda})]
 {\Whd{}{}{}{E}{\Pi x:E_1.E_2} \\
  \Forcetotype{}{}{}{T}{T'}{s}{'}{'} \\
  \Unifx{_=}{'}{'}{}{T'}{E_1}{''}{''}\\
  \RefinexE{''}{''}{; x : T'}{t}{E_2}{t'}{'''}{'''}}
 {\RefinexE{}{}{}{\lambda x:T.t}{E}{\lambda x:T'.t'}{'''}{'''}}
\]\smallskip

\noindent Note that to type $t$ we push into the context the declared type
for $x$ and not its expected type $E_1$. This is to avoid displaying
a confusing error message in case $t$ is ill-typed, since the user
declared $x$ to have type $T$, and not $E_1$ (that in principle can be
arbitrarily different from $T$).

\bugnelcodice{mettiano nel contesto $x: T_1'$ invece di $x: E_1$? Vi \`e useless code poich\`e in ambo i casi mettiamo $T_1'$ quando il codice sembrava pensato per $E$.}

\[
\infrule[(\recE{}\mathrm{-letin})]
 {\Forcetotype{}{}{}{T}{T'}{s}{'}{'} \\
  \RefinexE{'}{'}{}{t}{T'}{t'}{''}{''}\\
  \RefinexE{''}{''}{; x := t' : T'}{u}{E[t'/x]}{u'}{'''}{'''}}
 {\RefinexE{}{}{}{\letin{x}{T}{t}{u}}{E}{\letin{x}{T'}{t'}{u'}}{'''}{'''}}
\]\\
Where we denote by $[t'/x]$ the operation of substituting
all occurrences of $t'$ with $x$. Note that this operation
behaves as an identity up to conversion (since $x$ holds the value $t'$).
Nevertheless, it enables the bi-directional type inference algorithm to propagate
smaller types towards the leaves and, according to our observation, it leads
to more readable inferred typed for sub-terms of $u'$.

\begin{thm}[Termination]
The \rec{} 
algorithm defined by the set of rules presented above with the 
addition of
$(\recE{}\mathrm{-letin})$ and $(\recE{}\mathrm{-lambda})$
terminates.
\end{thm}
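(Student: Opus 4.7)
The plan is to extend the structural induction on the syntax of terms used in the previous termination theorem, adding cases for the two new $\recE{}$ rules. As before, we treat $(\coer{}\mathrm{-ok})$ and $(\forcetotypelabel{}\mathrm{-ok})$ as essentially aliases whose termination follows from that of $\rec{}$, and we rely on the correctness theorem to guarantee that the preconditions of every recursive call are met (in particular that the types passed to $\recE{}$ are well-typed so that $\whdlabel{}$ and the subsequent calls terminate).

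First I would handle $(\recE{}\mathrm{-lambda})$. Given input $\lambda x:T.t$ with expected type $E$, the precondition of $\recE{}$ guarantees that $E$ is a well-typed type in the current status, so the call to $\whdlabel{}$ on $E$ terminates by the strong normalization of CIC on well-typed terms; by subject reduction, $\Pi x:E_1.E_2$ is well-typed, whence $E_1$ is a type and (after extending the context with $x:T'$ and applying correctness of $\forcetotypelabel{}$ for the unification to succeed on well-typed inputs) $E_2$ is a type in the extended context. The call to $\forcetotypelabel{}$ on $T$ is made on a proper subterm of $\lambda x:T.t$, so termination follows by the induction hypothesis on $\rec{}$ (which subsumes $\forcetotypelabel{}$ as in the previous proof); the call to $\UNI_=$ terminates by assumption on the unification algorithm; and the recursive call to $\recE{}$ on the body $t$ is on a proper subterm, hence terminates by the induction hypothesis.

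The case of $(\recE{}\mathrm{-letin})$ is analogous and slightly simpler, since the expected type $E[t'/x]$ is not destructured: $\forcetotypelabel{}$ is invoked on the proper subterm $T$, $\recE{}$ is invoked on the proper subterms $t$ and $u$, and the context extension $x := t' : T'$ is well-formed because the correctness of $\forcetotypelabel{}$ and the first recursive $\recE{}$ call ensure that $t'$ has type $T'$ and $T'$ has a sort. The substitution $[t'/x]$ applied to $E$ preserves well-typedness, so the precondition for the final recursive $\recE{}$ call is satisfied and the induction hypothesis applies.

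The main (minor) obstacle is checking that each recursive call is dispatched on a strictly smaller term of the external syntax, so that the structural induction goes through even though the expected type may \emph{grow} along the way (e.g., through the substitution $[t'/x]$ in the let-in rule). Since the termination measure is the size of the term being refined, not of its expected type, and the ad-hoc bi-directional rules only inspect the expected type without recurring on it, this measure strictly decreases at every recursive invocation of $\rec{}$ or $\recE{}$, and termination follows.
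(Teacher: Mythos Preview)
Your proof is correct and follows essentially the same approach as the paper: both argue that the recursive calls in $(\recE{}\mathrm{-lambda})$ and $(\recE{}\mathrm{-letin})$ are on strict subterms and that the auxiliary $\whdlabel{}$, $\forcetotypelabel{}$ and $\UNI{}$ calls terminate. The paper's proof is a single sentence, whereas you spell out the well-typedness preconditions and explicitly observe that the termination measure is the size of the term being refined rather than the expected type; this extra care is sound but not strictly required beyond what the paper states.
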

\begin{proof}
The two functions terminate because all recursive calls are on smaller
terms and because $\whdlabel{}$, $\forcetotypelabel{}$ and $\UNI{}$ terminate.
\end{proof}

The next rule deals with applications of constructors to arguments and it is
only triggered when the expected type is an inductive type. In that case the
application must be total. In CIC, the types of constructors of inductive types are constrained to have a particular shape. Up to reduction, their type must be of the form
$\Pi x_1:F_1 \ldots \Pi x_n:F_n.I~x_1~\ldots~x_l~t_{l+1}~\ldots~t_m$
where $l$ is the number of uniform parameters of the inductive type.
Therefore the application of a constructor to a list $u_1~\ldots~u_n$
of arguments has type $I~u_1~\ldots~u_l~v_{l+1}~\ldots~v_m$ for
some $v$s. Reversing the reasoning, once we know that the expected type for
the application of a constructor is $I~u_1~\ldots~u_l~v_{l+1}~\ldots~v_m$
we already know that the first $l$ arguments of the application must be
equal to $u_1~\ldots~u_l$ up to conversion. It is thus possible to propagate
them following the bi-directional spirit. This is achieved by the
following $(\recE{}\mathrm{-appl-}k)$ that calls a new function
denoted by $\eatt$ that consumes the first $l$ arguments unifying them with
the expected values. The remaining arguments are consumed as in the generic
case of applications.

\[
\infrule[(\recE{}\mathrm{-appl-}k)]
 {\Whd{}{}{}{E}{I_l~\vector{v_l}~\vector{w_n}} \\
  \Unimanyx{}{}{}{\vector{t_m}}{\vector{v_l}}{\vector{t_l'}}{\vector{u_o}}{'}{'}\\
  (k : T) \in \env{}\\
  \Whd{'}{'}{}{T}{\vector{\Pi x_l:S_l}.T'}\\
  \Eatprods{'}{'}{}{k~\vector{t'_l}:\; T'[\vector{x_l/t_l'}]}{\vector{u_o}}{r}{R}{''}{''}\\
  \Unifx{_=}{''}{''}{}{R}{I_l~\vector{v_l}~\vector{w_n}}{'''}{'''}}
 {\RefinexE{}{}{}{k~\vector{t_m}}{E}{r}{'''}{'''}}
\]\bigskip
Note that if $E$ does not reduce to an applied inductive type,
the implemented algorithm falls back to the standard rule for
application.

The rule presented only propagates information related to uniform parameters.
Uniform parameters must be used consistently in every occurrence of the
inductive type in the type of its constructors and not only in the
occurrence at the end of the product spine (i.e. in the return type of the
constructors). The variant of CIC implemented
in Coq also considers non uniform parameters. Non uniform parameters
must be used consistently only in the return type of the constructors and
not in the premises. We do not consider non uniform parameters in this paper,
but we remark that the $(\recE{}\mathrm{-appl-}k)$ rule is also valid when
the first $l$ parameters are non uniform.

\begin{specification}[Eat arguments ($\eatt$)]
The $\unimany$ algorithm takes a list of arguments for an application and a list
of terms, and it verifies that an initial prefix of the arguments is equal to
the given terms, up to unification.
It takes as input a proof problem $\xP{}$, a substitution
$\xS$ and a context $\xG{}$, all assumed to be well formed, the list of
arguments and the list of terms.
It fails or it returns the list of arguments split into the consumed
ones and the ones yet to be considered.
It is denoted by:
\[\unimanyx{}{}{}{t_1~\ldots~t_m}{v_1~\ldots~v_n}{t_1'~\ldots~t_n'}{u_1~\ldots~u_l}{'}{'}\]
Precondition:
 \[ \WF{\xPSG{}{}{}} \lland
    \xPSG{}{}{} \vdash v_i : T_i \quad i \in \{1 \ldots n\}
 \]
Postcondition (parametric in $\preccurlyeq$):
 \[ \WF{\xPS{'}{'}} \lland \refines{\xP{'},\xS{'}}{\xP,\xS} \lland
    \xPSG{'}{'}{} \vdash t'_i \downarrow v_i \quad i \in \{1 \ldots n\} \lland
\]
\[
    t'_1~\ldots~t'_n~u_1~\ldots~u_l \preccurlyeq t_1~\ldots~t_m
\]
\end{specification}

\[
\infrule[(\eatt\mathrm{-empty})]
 {}
 {\Unimanyx{}{}{}{\vector{u_l}}{}{}{\vector{u_l}}{}{}}
\]

\[
\infrule[(\eatt\mathrm{-base})]
 {\Refinex{}{}{}{t_1}{t_1'}{T_1}{'}{'}\\
  \Unifx{_=}{'}{'}{}{t_1'}{v_1}{''}{''}\\
  \Unimanyx{''}{''}{}{\vector{t_m}}{\vector{v_n}}{\vector{t_n'}}{\vector{u_l}}{'''}{'''}}
 {\Unimanyx{}{}{}{t_1~\vector{t_m}}{v_1~\vector{v_n}}{t_1'~\vector{t_n'}}{\vector{u_l}}{'''}{'''}}
\]

\begin{thm}[Termination]
The \rec{} 
algorithm defined by the set of rules presented above with the 
addition of
$(\recE{}\mathrm{-appl}-k)$,
$(\eatt\mathrm{-empty})$
and
$(\eatt\mathrm{-base})$
terminates.
\end{thm}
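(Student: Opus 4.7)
The plan is to reuse the structural induction scheme already adopted for the previous termination theorems and simply treat the two new rules for $\unimany$ and the new rule for $\recE$ as additional cases. As before, we may regard $(\coer{}\mathrm{-ok})$, $(\recE{}\mathrm{-default})$ and $(\forcetotypelabel{}\mathrm{-ok})$ as already proved terminating, and we can continue to reason by structural induction on the input term fed to $\rec$ and $\recE$, assuming that all calls made on strictly smaller subterms terminate.

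First I would establish termination of $\unimany$ separately, by a secondary induction on the length $n$ of the list $v_1 \ldots v_n$ of expected values. The rule $(\eatt\mathrm{-empty})$ is an immediate base case. In the rule $(\eatt\mathrm{-base})$ the recursive call to $\unimany$ is on the strictly shorter list $v_2 \ldots v_n$, so the secondary induction hypothesis applies; the call to $\rec$ on $t_1$ is on a proper subterm of the original application $k~\vector{t_m}$ passed to $\recE$, so the primary induction hypothesis on the structure of terms applies; and termination of $\UNI{}$ is assumed as usual.

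Next I would handle the rule $(\recE{}\mathrm{-appl-}k)$. The call to $\whdlabel$ on the expected type $E$ terminates because $E$ is well typed by the precondition of $\recE$. The call to $\unimany$ on $\vector{t_m}$ and $\vector{v_l}$ terminates by the lemma established in the previous step (noting that the $v_i$ are well typed, being arguments of the well typed applied inductive type $I_l~\vector{v_l}~\vector{w_n}$ obtained from $E$ by weak head reduction, whose well-typedness is preserved by subject reduction). The call to $\eatprodslabel$ on the remaining arguments $\vector{u_o}$ terminates by the same argument already used in the proof for $(\rec{}\mathrm{-appl})$: $\eatprodslabel$ recurses on a strictly shorter argument list while invoking $\recE$ only on proper subterms of the original $k~\vector{t_m}$, so the primary induction hypothesis applies. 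Finally, the concluding call to $\UNI{}$ terminates by assumption.

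The main obstacle I foresee is purely bookkeeping: one must check that each recursive invocation is either on a proper subterm of the original input of $\recE$ (so the structural induction hypothesis can be discharged) or on a strictly smaller measure for the secondary inductions on argument lists used by $\unimany$ and $\eatprodslabel$. In particular, the $v_i$ passed to $\unimany$ are \emph{not} subterms of the input and should not be passed to $\rec$; inspection of the rules confirms that $\unimany$ only unifies them, never refines them, so no spurious recursive call into $\rec$ is made on terms outside the subterm ordering of the original input. Once this is verified, termination of the extended algorithm follows by combining the primary structural induction on $\rec/\recE$ with the two auxiliary inductions on list length for $\eatprodslabel$ and $\unimany$.
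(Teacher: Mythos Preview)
Your approach matches the paper's: structural induction on the input term for $\rec{}/\recE{}$, a secondary induction on the length of the list $\vector{v_n}$ for $\unimany$, and the observation that all calls to $\rec{}$ issued from $\eatt$ are on proper subterms of the original application $k~\vector{t_m}$.

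There is one point you gloss over that the paper makes explicit. You dispatch the call to $\eatprodslabel$ by saying it ``terminates by the same argument already used in the proof for $(\rec{}\mathrm{-appl})$''. But the earlier termination proof for $\eatT$ was carried out \emph{under the assumption that the input type $T$ is a well typed type}; this is what guarantees that $\whdlabel$ terminates inside $(\eatT\mathrm{-prod})$ and $(\eatT\mathrm{-flexible})$. In $(\rec{}\mathrm{-appl})$ that assumption was discharged because the type came directly from the postcondition of $\rec{}$ on the head. Here the type passed to $\eatT$ is $T'[\vector{x_l/t'_l}]$, obtained by substituting the outputs $\vector{t'_l}$ of $\unimany$ into a prefix of the type of the constructor $k$, and its well-typedness is not immediate. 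The paper closes this gap by invoking the postcondition of $\unimany$: the $\vector{t'_l}$ are convertible with the $\vector{v_l}$, hence have the required types $F_i[\vector{x_{i-1}/t'_{i-1}}]$, so the substitution is type-preserving and $T'[\vector{x_l/t'_l}]$ is a well-typed type. Once you add this check, your argument is complete and coincides with the paper's.
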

\begin{proof}
The rule $(\recE{}\mathrm{-appl}-k)$ terminates because $\UNI{}$
and $\whdlabel{}$ terminate and $\eatT{}$ is called on smaller terms. Moreover
the term $T'[\vector{x_l/t_l'}]$ is a well typed type because $\eatt{}$
grants that $\vector{t_l'}$ are convertible with $\vector{v_l}$ and thus 
have the same types. Also notes that all the calls to \rec{} made by $\eatt{}$
are on sub-terms of the input of $(\recE{}\mathrm{-appl}-k)$.

We thus show that $\eatt$ terminates by induction on the second list of 
arguments (the one between $\UNI{}$ and $\stackrel{\eatt}{~\leadsto~}$).
Rule $(\eatt\mathrm{-empty})$ is the base case and clearly terminates.
Rule $(\eatt\mathrm{-base})$ terminates because \rec{} and $\UNI{}$ terminate
and because the recursive call terminates by induction hypothesis.
\end{proof}

\subsection{Remarks}
We present here a simple but frequently occurring case that explains why
the bi-directional rule for application of constructors enables to refine
many more terms w.r.t. the mono-directional algorithm. A more complicated
example was already discussed in the introduction and deals with dependent
data types to represent the syntax of languages with binders.

Consider the inductive type used to define the
existential quantification.
\[
\inductiveX{\Prod T : \Type.\Prod P : T \to \Prop.~}{~\mbox{Ex}}{\Prop}
{\\\qquad\mbox{Ex\_intro} : \Pi x : T.P~x \to \mbox{Ex}~T~P}
\]
\noindent Note that $T$ and $P$ are homogeneous arguments. 

\begin{example}[Use of $(\recE{}\mathrm{-appl-}k)$]
Consider the conjecture $\exists x:\nat{}. x > 0$,
encoded in CIC as 
   \[\mbox{Ex}~\nat{}~(\lambda x:\nat{}.x > 0) \]
\noindent Given a context $\xG{}$ containing the assumption $p$ stating
that $2 > 0$, one may want to use the following proof term to prove the
conjecture
   \[ t = \mbox{Ex\_intro}~?_T~?_P~?_x~p \]
\noindent A mono directional refiner encounters a hard unification problem
involving the type of $p$, $2 > 0$, and its expected type.
   \[ \{ \vdash~ ?_T ~:\; \Type;~ \vdash~ ?_P ~:\; ?_T \to \Prop;~ 
         \vdash~ ?_x ~:\; ?_T \}, 
      \emptyset, \xG{} \vdash 2 > 0 ~\UNI~ ?_P~?_x \]
\noindent Clearly, the desired solution is to instantiate $?_x$ with
$2$ and $?_P$ with $(\lambda x:\nat{}.x > 0)$ obtaining a proof
term of type $\exists x : \nat{}. x > 0$. 
Unfortunately, this is not the only
possible solution. An undesired solution, but as reasonable as the correct one, is 
   \[\xS{} = \{ ?_T := \nat{};~ ?_x := 0;~ ?_P := \lambda x:\nat{}.2 > x \}\]
\noindent under which the resulting proof term $\xS{(t)}$ has type 
$\exists x:\nat{}. 2 > x$, that is not the expected one. 
Why one should prefer the former to the
latter is also unclear from a computer perspective.
Thanks to the polymorphism of CIC, another undesired and less expected
solution is also possible:
   \[\xS{'} = \{ ?_T := \Prop; ?_x := 2 > 0; ?_P := \lambda x.x \}\]
\noindent The proof term $\xS{'(t)}$ would then be of type 
$\exists x:\Prop. x$, again different from the desired one.

Using the expected type, $?_T$ and $?_P$ are easily inferred 
looking at the homogeneous argument of the expected type. 
   \[\xS{''} = \{ ?_T := \nat{};~ ?_P := \lambda x:\nat{}.x > 0 \}\]
\noindent Then inferring $?_x$ is easy.
Applying $\xS{''}$ to the right hand side of the unification problem we obtain:
   \[ 2 > 0 ~\UNI~ (\lambda x:\nat{}.x > 0)~?_x \]
\noindent Then it is sufficient to reduce the right hand side and then perform
a simple, first order, unification algorithm to obtain the desired
instantiation for $?_x$.
\qed{}
\end{example}

The rule $(\recE{}\mathrm{-appl-}k)$ is only fired when the term to be refined
is syntactically the application of a constructor. Because of conversion,
the term under analysis could be reducible to an application of a constructor.
However, we cannot reduce the term first to try to match the rule.
The first motivation is that terms in the external syntax may
contain placeholders (see Section~\ref{sec:raw}) and may not be well typed. Duplication of placeholders
and substitution into them is not admitted. Moreover,
reducing an ill typed term may lead to divergence. Secondly, reduction of
proof terms correspond to cut elimination that is known to yield proofs terms
of arbitrary size.

\bugnelcodice{nella regola ho ottimizzato rispetto al codice.}

\bugnelcodice{nel codice invece di chiamare il refiner passando un tipo atteso
come nella regola chiamiamo con None e poi unifichiamo a mano: perch\`e'?}


\section{Extension to placeholders}\label{sec:raw}

We consider here the first extension of our external syntax, obtained
introducing linear placeholders for missing terms and for vectors of
missing terms of unknown length. The latter are only accepted in argument
position, even if we will enforce this only in the refinement algorithm and
not in the syntax.
The new syntax is obtained extending the one given in Table~\ref{tab:terms} with the new productions of Table~\ref{tab:imptermsx}. Placeholders are also
called implicit arguments in the literature, but that terminology is
ambiguous since it is also used for arguments that can be entirely
omitted in the concrete syntax.

In a concrete implementation, user defined notations are used to further enlarge
the external syntax. User defined notations behave as macros; macro expansion
gives back a term in the external syntax we consider here. In particular,
thanks to user defined notations, it is possible to entirely omit the typing
information in binders, like in calculi typed \`a la Curry. Omitted
types are turned into placeholders during the macro expansion phase.
Implicit arguments can also be simulated by defining notations that insert
into applications a fixed  number of placeholders or vectors of placeholders
in appropriate positions.

\begin{table}[t]
\begin{displaymath}
\begin{array}{llll}
 t & ::= & \ldots & \\
   &  |  & ? & \mbox{placeholder} \\
   &  |  & \mldots & \mbox{arbitrarily many placeholder arguments} \\
\end{array}
\end{displaymath}
\caption{
\label{tab:imptermsx}
CIC terms syntax II - Placeholders}
\end{table}

A placeholder $?$ differs from a metavariable $?_i$ in the fact that
it has no sequent associated to it. The intended associated sequent allows
in $?$ occurrences of all variables in the context of $?$. Moreover, the
type of $?$ is meant to be the one determined by the context. This
information is made explicit by the refinement algorithm that turns each
placeholder into a corresponding metavariable.

Placeholders occur only linearly in the term (i.e. every occurrence of a
placeholder is free to be instantiated with a different term). Non linear
placeholders are not allowed since two occurrences could be in contexts
that bind different set of variables and instantiation with terms that
live in one context would make no sense in the other one.

Similarly, substitution is not allowed on placeholders since a placeholder
occurrence does not have a corresponding explicit substitution.

For both previous reasons, reduction is not allowed on terms in the external
syntax that contain placeholders: the reduction, conversion and unification
judgements only make sense on refined terms.

Intuitively, a vector of placeholders can be instantiated by the refiner with
zero or more metavariables. In our algorithm we adopt a lazy semantics:
vectors of placeholders can only be used in argument position and a
vector is expanded to the minimal number of metavariables that make the
application well typed. Bi-directionality, i.e. the knowledge about the
expected type for the application, is required for the lazy semantics.
Indeed, without the expected type, an expansion could produce a locally
well typed application whose inferred type will not match later on with the
expected one.

We extend the \rec{}, \recE{}, \eatprodslabel{} and \unimany{} algorithms
with new rules for single placeholders and for vectors of placeholders.

\begin{thm}[Correctness]
The \coer{}, \forcetotypelabel{}, \rec{}, \recE{}, \eatprodslabel{}, and \unimany{} algorithms extended with the set of rules presented in this section obey
their specification for all admissible $\preccurlyeq$ that include the
$\preccurlyeq'$ relation defined as follows:
$t' \preccurlyeq' t$ when $t'$ is obtained from $t$ by replacing single placeholders with terms and vectors of placeholders with vectors of terms.
In particular, the algorithms are correct w.r.t. $\preccurlyeq'$.
\end{thm}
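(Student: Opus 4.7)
The plan is to extend, by structural induction on derivations, the correctness proof already established in Sections~\ref{sec:mono} and~\ref{sec:bi}. For every rule that is unchanged, the previous argument applies verbatim once I have checked that $\preccurlyeq'$ is admissible: this is immediate from the definition, because $t_1 \preccurlyeq' t_2$ means $t_2$ is obtained from $t_1$ by filling in some placeholders, and linear substitution of $t_1$ versus $t_2$ into a term $T$ propagates that filling locally, so $T[x/t_1] \preccurlyeq' T[x/t_2]$ holds. Moreover, every admissible $\preccurlyeq$ containing $\preccurlyeq'$ automatically contains the identity on internal syntax, since ground internal terms carry no placeholder, so the earlier theorems remain applicable when their $\preccurlyeq$ parameter is weakened.

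The real work is then the rule-by-rule analysis for the new cases. For a bare placeholder $?$, the refiner allocates two fresh metavariables $?_T$ and $?_t$, extending $\xP{}$ with $\xG{} \vdash ?_T : \Type_\top$ and $\xG{} \vdash ?_t : ?_T$, and returns $?_t$ of inferred type $?_T$. The added entries respect $\ll_{\xP{}}$, so $\WF{\xP{'}}$ holds; $\refines{\xP{'},\xS{'}}{\xP{},\xS{}}$ holds by inclusion; and the correspondence $?_t \preccurlyeq' ?$ is the base clause of $\preccurlyeq'$. In the type-forcing variant the same argument goes through after unifying $?_T$ with the expected type, using the specification of $\UNI$. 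For the rule consuming a vector $\mldots$ inside an application (and its twin in $\unimany{}$), I would argue that the rule fires only in argument position and, exploiting the availability of an expected type, expands $\mldots$ into the minimal number of plain placeholders that makes the application refineable; each newly generated placeholder is then handled by the previous case. The postcondition $v \preccurlyeq' t$ follows from the vector clause of $\preccurlyeq'$, and the typing postcondition is inherited from the induction hypotheses on $\eatprodslabel{}$ and $\rec{}$, which are invoked on strictly smaller inputs.

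The main obstacle is the $\mldots$ case: one must show that the lazy expansion is sound even though the length of the vector being inserted is not known in advance, and that the substitutions subsequently performed by $\eatprodslabel{}$ and by unification do not break the $\preccurlyeq'$ relation between the emitted term and the original one. This reduces to observing that $\preccurlyeq'$ is stable under substituting a vector position by any finite sequence of fresh metavariables---which is exactly how the relation was defined---and under the admissible substitutions that arise while refining the remaining arguments. Once this is in place, the induction closes and the trivial inclusion of $\preccurlyeq'$ in itself yields the final clause of the statement.
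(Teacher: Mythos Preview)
Your approach is essentially the paper's: observe that any admissible $\preccurlyeq$ containing $\preccurlyeq'$ also contains the identity on internal terms (so the earlier correctness results carry over unchanged), and then verify the new placeholder rules by direct inspection. One small slip worth fixing: you write that ``$t_1 \preccurlyeq' t_2$ means $t_2$ is obtained from $t_1$ by filling in some placeholders,'' but the paper's convention is the reverse---the refined (filled-in) term sits on the left, so $t' \preccurlyeq' t$ means $t'$ is obtained \emph{from} $t$ by replacing placeholders; your later base case $?_t \preccurlyeq'\,?$ is stated with the correct orientation, so this is only an expository inconsistency.
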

\begin{proof}
Every admissible $\preccurlyeq$ that includes $\preccurlyeq'$ also includes
the identity. Thus we do not need to re-establish the result on the rules
given in the previous sections.
Correctness of the new rules given in this section is established by rule
inspection.
\end{proof}

\[
\infrule[(\rec{}\mathrm{-placeholder})]
 { \xP{} \leadsto \xP{} \cup \{
     \xG{} \vdash ?_l :\; \Type_\top \;,\; 
     \xG{} \vdash ?_k :\; ?_l \;,\;
     \xG{} \vdash ?_j :\; ?_k \}}
 {\Refinex{}{}{}{?}{?_j}{?_k}{'}{}}
\]\bigskip
\[
\infrule[(\recE{}\mathrm{-placeholder})]
 { \xP{} \leadsto \xP{} \cup \{
     \xG{} \vdash ?_k :\; T \}}
 {\RefinexE{}{}{}{?}{T}{?_k}{'}{}}
\]\bigskip
\[
\infrule[(\eatT\mathrm{-\placeholder-0})]
 {\Eatprods{}{}{}{t~\vector{(x_r := v_r : T_r)} :\; T}{\vector{u_n}}{v}{V}{'}{'}}
 {\Eatprods{}{}{}{t~\vector{(x_r := v_r : T_r)} :\; T}{\mldots~\vector{u_n}}{v}{V}{'}{'}}
\]\bigskip
\[
\infrule[(\eatT\mathrm{-\placeholder+1})]
 {\Whd{}{}{}{T}{\Pi x_1:U_1.T_1}\\
  \Eatprods{}{}{}{t~\vector{(x_r := v_r : T_r)} :\; T}{?~\mldots~\vector{u_n}}{v}{V}{'}{'}}
 {\Eatprods{}{}{}{t~\vector{(x_r := v_r : T_r)} :\; T}{\mldots~\vector{u_n}}{v}{V}{''}{''}}
\]\bigskip
The rule $(\eatT\mathrm{-\placeholder-0})$ is meant to take
precedence over $(\eatT\mathrm{-\placeholder+1})$.
The second
is applied when the first one fails (local backtracking). 

\begin{thm}[Termination]
The \rec{} 
algorithm defined by the set of rules presented above with the 
addition of
$(\rec{}\mathrm{-placeholder})$,
$(\recE{}\mathrm{-placeholder})$,\\
$(\eatT\mathrm{-\placeholder-0})$ and
$(\eatT\mathrm{-\placeholder+1})$
terminates.
\end{thm}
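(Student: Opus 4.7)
The plan is to extend the earlier structural termination argument, analyzing only the four new rules. The rules $(\rec{}\mathrm{-placeholder})$ and $(\recE{}\mathrm{-placeholder})$ are immediate base cases: they just extend $\xP{}$ with fresh metavariable declarations and return, with no recursive premises. The rule $(\eatT\mathrm{-\placeholder-0})$ makes a single recursive call on the tail of the argument list with the leading $\mldots$ removed, so it fits directly in the earlier induction on the length of the argument list.

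The delicate case, which I expect to be the main obstacle, is $(\eatT\mathrm{-\placeholder+1})$: its recursive premise is invoked on $?~\mldots~\vector{u_n}$, which is one element longer than the input, so the induction on argument list length alone no longer suffices. The key observation I would exploit is that in that recursive call the first argument is $?$, hence not a $\mldots$, while $T$ whd-reduces to a product and therefore cannot match $(\eatT\mathrm{-flexible})$; hence the only applicable rule is $(\eatT\mathrm{-prod})$, which refines $?$ via $(\recE{}\mathrm{-placeholder})$ to a fresh metavariable $?'$ and then recurses on the state $(T_1[x/?'],\,\mldots~\vector{u_n})$. The combined two-step chain $(\eatT\mathrm{-\placeholder+1})$ followed by $(\eatT\mathrm{-prod})$ therefore leaves the argument list unchanged and only transforms $T$ into $T_1[x/?']$.

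To formalize this I would use a well-founded lexicographic measure $\mu(T, L) = (k(T),\,|L|)$, where $k(T)$ is the number of leading products in the whd-normal form of $T$ and $|L|$ counts the arguments (including each $\mldots$ as one symbol). All rules other than $(\eatT\mathrm{-\placeholder+1})$ strictly decrease $|L|$, while the two-step chain triggered by $(\eatT\mathrm{-\placeholder+1})$ strictly decreases $k(T)$ and keeps $|L|$ fixed. The subtle point is that $k$ could \emph{a priori} grow after a generic $(\eatT\mathrm{-prod})$ step, since substituting a non-metavariable argument for $x$ might unblock new $\beta$-redexes in $T_1$; however, in the chain forced by $(\eatT\mathrm{-\placeholder+1})$ the substituted term is always a fresh metavariable $?'$, which blocks reduction and so preserves the product spine count up to the single product just consumed. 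Strong normalization of CIC on well-typed types then ensures that $k(T)$ is a natural number at every state and that the lexicographic descent is well founded, from which termination of $\rec{}$, $\recE{}$ and $\eatT{}$ under the extended rule set follows.
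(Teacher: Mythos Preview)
Your argument matches the paper's almost exactly: the two placeholder rules are immediate; $(\eatT\mathrm{-\placeholder-0})$ shortens the argument list; and for $(\eatT\mathrm{-\placeholder+1})$ the key insight---shared with the paper---is that the forced follow-up $(\eatT\mathrm{-prod})$ substitutes a \emph{fresh uninstantiated} metavariable, which cannot unblock new reductions, so the product spine of $T$ strictly shrinks and bounds the number of consecutive $+1$ steps.

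There is one slip in your formalization. With the lexicographic measure written as $(k(T),\,|L|)$, a generic $(\eatT\mathrm{-prod})$ step may \emph{increase} the first component (a point you yourself raise), and a decrease in the second component does not compensate in that order. The ordering you need is $(|L|,\,k(T))$: every rule other than the two-step $+1$ chain strictly decreases $|L|$, while that chain keeps $|L|$ fixed and strictly decreases $k(T)$. With this swap your argument goes through and is essentially a tidier packaging of the paper's informal bound that ``the number of products in $T$ is an upper bound'' on consecutive applications of $(\eatT\mathrm{-\placeholder+1})$.
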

\begin{proof}
Rules $(\rec{}\mathrm{-placeholder})$ and $(\recE{}\mathrm{-placeholder})$
terminate. The proof that $\eatT$ terminates is, as before, by induction
on the list of arguments that follow $\eatsep$. The rule
$(\eatT\mathrm{-\placeholder-0})$ terminates by induction hypothesis.
The rule $(\eatT\mathrm{-\placeholder+1})$ deserves an accurate treatment.
The check over $T$, asking it to be a product, is to avoid divergence.
Since the input $T$ is a well typed type, also $T_1$ is, and thus it
admits a normal form $T_1'$ in which $x$ may occur. The recursive call
does necessarily trigger the rule $(\eatT{}\mathrm{-product})$ that
will substitute a metavariable $?_j$ for $x$ in $T_1$. Thanks to the reduction
rules of CIC, reported in the appendix, substituting a variable for a 
metavariable declared in \xP{} (and not in \xS{}) does not change the normal 
form, meaning that $T_1'[x/?j] \whdlabel T_1'[x/?j]$. Thus the rule
$(\eatT\mathrm{-\placeholder+1})$ can be applied only a finite number
of times, and the number of products in $T$ is an upper bound.
\end{proof}


\bugnelcodice{la $has\_some\_more\_pis$ nel codice in verit\`a non dice quello
 che il nome indica. Quella nell'articolo cambiata.}

The next two rules for the \unimany{} judgment follow the same schema of the
ones for \eatprodslabel.

\[
\infrule[(\eatt\mathrm{-\placeholder-0})]
 {\Unimanyx{}{}{}{\vector{t_m}}{\vector{v_n}}{\vector{t_n'}}{\vector{u_l}}{'}{'}}
 {\Unimanyx{}{}{}{\mldots~\vector{t_m}}{\vector{v_n}}{\vector{t_n'}}{\vector{u_l}}{'}{'}}
\]

\[
\infrule[(\eatt\mathrm{-\placeholder+1})]
 {\Unimanyx{}{}{}{?~\mldots~\vector{t_m}}{\vector{v_n}}{\vector{t_n'}}{\vector{u_l}}{'}{'}}
 {\Unimanyx{}{}{}{\mldots~\vector{t_m}}{\vector{v_n}}{\vector{t_n'}}{\vector{u_l}}{'}{'}}
\]\bigskip
\begin{thm}[Termination]
The \rec{} 
algorithm defined by the set of rules presented above with the 
addition of
$(\eatt\mathrm{-\placeholder-0})$ and
$(\eatt\mathrm{-\placeholder+1})$
terminates.
\end{thm}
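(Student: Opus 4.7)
The plan is to extend the existing termination argument for $\eatt$. That argument proceeded by induction on the length of the second list of arguments $\vec{v}_n$, but neither of the new rules decreases this length, so a richer well-founded measure is needed. I propose the lexicographic pair $(|\vec{v}_n|, h(\vec{t}))$, where $h(\vec{t}) = 1$ when the first element of $\vec{t}$ is a vector-placeholder $\mldots$ and $h(\vec{t}) = 0$ otherwise (in particular when $\vec{t}$ is empty or begins with an ordinary term, including a single placeholder $?$). The existing rules and the two new ones must each be shown to strictly decrease this measure, while the subordinate calls to \rec{} and $\UNI{}$ are handled as in the previous termination proofs.

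The case analysis I would carry out is as follows. The rule $(\eatt\mathrm{-empty})$ is an axiom. The rule $(\eatt\mathrm{-base})$ can only fire when the head of $\vec{t}$ is not $\mldots$, because its first premise $\Refinex{}{}{}{t_1}{t_1'}{T_1}{'}{'}$ invokes \rec{}, which is undefined on $\mldots$; so the premise starts at measure $(|\vec{v}|,0)$ and the recursive call on $\vec{t}_m$ against $\vec{v}_n$ has first component $|\vec{v}|-1$, yielding a strict lex decrease. The rule $(\eatt\mathrm{-\placeholder+1})$ rewrites $\mldots\,\vec{t}_m$ to $?\,\mldots\,\vec{t}_m$: the primary component is unchanged, but the head moves from $\mldots$ to $?$, so $h$ drops from $1$ to $0$ and the measure strictly decreases; the next call is then forced to take the $(\eatt\mathrm{-base})$ branch, as intended. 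For $(\eatt\mathrm{-\placeholder-0})$ the conclusion has measure $(|\vec{v}|,1)$ and the premise has measure $(|\vec{v}|,h(\vec{t}_m))$, which is strictly smaller precisely when $\vec{t}_m$ does not itself begin with another $\mldots$. The subordinate calls in $(\eatt\mathrm{-base})$ to \rec{} and $\UNI{}$ terminate by the previously established results (note that \rec{} applied to a $?$ resolves in one step via $(\rec{}\mathrm{-placeholder})$, the new base case added in this section), and the uses of $\whdlabel$ inherit termination as in the proof of the previous theorem.

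The main obstacle I anticipate is exactly the boundary case of consecutive $\mldots$ tokens in $\vec{t}$: the argument just given does not give a strict decrease under $(\eatt\mathrm{-\placeholder-0})$ if the vector-placeholder that is dropped is followed immediately by another vector-placeholder. The clean way to handle this is to observe that no rule introduces a $\mldots$ (the only rule that modifies the head adds a $?$, not a $\mldots$), so it suffices to assume that the external syntax does not contain adjacent $\mldots$ tokens; alternatively, one replaces $h(\vec{t})$ by the number of leading consecutive $\mldots$s in $\vec{t}$, which is always strictly decreased by $(\eatt\mathrm{-\placeholder-0})$ and dropped to $0$ by $(\eatt\mathrm{-\placeholder+1})$. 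Either way the lex measure is well-founded and the theorem follows.
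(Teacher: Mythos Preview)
Your proposal is correct once you adopt your second fix (counting leading $\mldots$ tokens rather than the $0/1$ flag), and the case analysis is sound. It is, however, organised differently from the paper's proof.

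The paper does not set up an explicit lexicographic measure. Instead it argues operationally: rule $(\eatt\mathrm{-\placeholder-0})$ consumes one $\mldots$ from the argument list and \emph{no rule of the refiner ever introduces a $\mldots$}, so that rule can fire at most as many times as there are $\mldots$ tokens in the original input; this is exactly your observation, but phrased via the \emph{total} count of $\mldots$ rather than the leading count, which makes the consecutive-$\mldots$ case a non-issue rather than a boundary case to be patched. For $(\eatt\mathrm{-\placeholder+1})$ the paper notes, as you do, that the recursive call has head $?$, so only $(\eatt\mathrm{-empty})$ or $(\eatt\mathrm{-base})$ can fire next; the latter then consumes one element of $\vec{v_n}$, so the pair of rules taken together strictly decreases $|\vec{v_n}|$. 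The paper even remarks that inlining $(\eatt\mathrm{-base})$ into $(\eatt\mathrm{-\placeholder+1})$ would make termination obvious.

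What your approach buys is an explicit well-founded measure, which is tidier if one wants a single induction; what the paper's approach buys is that the awkward adjacent-$\mldots$ case simply never arises, because the invariant ``total number of $\mldots$ is non-increasing'' is global and immediate.
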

\begin{proof}
The rule $(\eatt\mathrm{-\placeholder-0})$ makes a recursive call
on the same list of arguments $\vector{v_n}$ but consumes a $\mldots$,
and no other rule of the refiner adds one, so it can be repeated only
a finite number of times. The recursive call in the rule 
$(\eatt\mathrm{-\placeholder+1})$ can trigger only rules
$(\eatt\mathrm{-empty})$
and
$(\eatt\mathrm{-base})$. The former terminates immediately, the latter
will do a recursive call consuming one argument in $\vector{v_n}$, and thus
terminates.
\end{proof}
Note that inlining the latter would lead to a rule whose termination is
trivial to see, but we preferred to present the algorithm in a more
modular way.

\begin{example}[Vector of placeholders]
Assume a theorem $\tau \in \env{}$ shows that
$\forall x : \nat{}. P~x \to Q~x$.
The proof context may contain a natural number $y$ and optionally a 
proof $H$ that $y$ validates $P$. Different proofs or proof styles
may use the same theorem $\tau$ in different ways. 
For example, one may want to perform
forward reasoning, and tell the system to assume $(Q~y)$ providing the
following proof for it
\[ y : \nat{}; H : P~y \vdash \tau~H\]
Nevertheless, sometimes $H$ is not known, and the user may want to
tell the system he has intention to use the theorem $\tau$ on $y$,
and prove $(P~y)$ later.
\[ y : \nat{} \vdash \tau~y\]
While the latter application is well typed, the first is not, since
the first argument of $\tau$ must be of type \nat{}. Nevertheless, the type
of $H$ depends on $y$, thus the term $(\tau~?~H)$ would refine to the
well typed term $(\tau~y~H)$ of type $(Q~y)$.

The vector of placeholders enables the system
to accept both terms originally written by the user.
In the first case $(\tau\mldots H)$ would expand to $(\tau~?~H)$ thanks to 
$(\eatt\mathrm{-\placeholder+1})$, and refine to $(\tau~y~H)$.
In the second case $(\tau\mldots y)$ would refine to $(\tau~y)$
thanks to $(\eatt\mathrm{-\placeholder-0})$. This suggests
defining $(\tau\mldots)$ as a notation for the theorem
$\tau$, obtaining a cheap implementation of what other systems
call prenex implicit arguments: the first n arguments of an application
whose head has a dependent type like
$\Pi \vector{x_n : T_n}.\Pi \vector{y_m : P(\vector{x_n})}.T$ 
can be omitted, and are inferred thanks to the 
dependencies in the types of the m following arguments.
As a bonus, in case the user wants to pass one of the implicit
arguments there is no need to temporarily disable the mechanism, since
the expansion of $\mldots$ is computed on the fly and automatically adapts to
its context.
\qed{}
\end{example}

\section{Coercions}\label{sec:coercions}

Coercions are explicit type casts. While the
literature~\cite{coercivesubtyping} considers them mostly as a device to mimic
sub-typing in a calculus lacking it, they have other interesting applications.
The refiner of Matita inserts coercions in three locations: 
\begin{iteMize}{$\bullet$}
\item around the argument of an application
\item around the head of an application
\item around the type of an abstraction
\end{iteMize}
The first case is the most common one, and is the one that can easily be
explained in terms of sub-typing. For example, if one applies an operation
defined over integers $\mathbb{Z}$ to an argument lying in the type
of natural numbers $\nat{}$, the system injects the argument into the
right type by means of the obvious, user declared, cast function mapping
naturals into the non negative fragment of $\mathbb{Z}$.

The second case is handy in two situations. First when the head of the
application is implicit in the standard notation, like in $3x$ where the
intended head constant is the multiplication but in the input it happens to be
$3$. The second is when the head constant has a non ambiguous interpretation as
a function, but is not. For example a set may act as its characteristic
function.

The last case is recurrent when algebraic structures are encoded as 
dependently typed records~\cite{pollackFAC02} embedding the type (or carrier)
for the elements together with the operations and properties defining the
structure. In that case, one may want to state a theorem quantifying over a
structure $G$, say a group, and some elements in that group. However the
statement $\forall G:\mbox{Group}.\forall x,y:G.P(x,y)$ is ill-typed since $G$
is a term (of type $\mbox{Group}$) but is used as a type for $x$ and $y$. The
intended meaning is clear: $x$ and $y$ lie in the carrier type of $G$. The
system can thus insert around $G$ the projection for the carrier component of
the $\mbox{Group}$ record.

\begin{definition}[Coercion set ($\Delta$)] A coercion set $\Delta$
is a set of pairs $(c, k)$ where $c$ is a constant in \env{} and $k$ 
is a natural number smaller than the arity of $c$ (i.e. $k$ points to
a possible argument of $c$)
\end{definition}

In the literature the coercion set is usually represented as a graph.
Given a coercion $(c, k)$ such that
$(c:\Pi x_1 : T_1 \ldots \Pi x_k : T_k \ldots \Pi x_n : T_n. T) \in \env{}$,
$T_k$ and $T$ 
are nodes in the graph, and $c$ is an edge from $T_k$ to $T$.
Most coercion implementation, like the one of Coq, Lego and Plastic, assume 
$\Delta$ to be a graph validating a property called coherence. 
This property states that $\Delta$ is an acyclic graph with at most one path
linking every pair of nodes. This property enables to employ a straightforward 
algorithm to look for a sequence of coercions linking two non adjacent 
nodes in the graph. 

In Matita, for various reasons detailed in~\cite{tassi-phd}, $\Delta$ is
not a graph, but a set of arcs for the transitive closure of the graph.
Every time a coercion $c$ is declared by the user, and thus added to $\Delta$, 
the following set of automatically generated composite coercions is also 
added to $\Delta$.
\[
\{ c_i \circ c \circ c_j | c_i \in \Delta \land c_j \in \Delta\}
\cup
\{ c \circ c_j | c_j \in \Delta\}
\cup
\{ c_i \circ c | c_i \in \Delta\}
\]
Of course the $\circ$ operator here is partial, and only well typed
composite coercions are actually considered.
This design choice enables the coercion lookup operation to be single step,
since the set is already transitively closed. Moreover, since composite 
coercions are defined constants in \env{}, the term resulting after a cast
is smaller if compared with the one obtained inserting the corresponding
chain of user declared coercions. Last, allowing $k$ to differ from $n$ is 
a peculiarity of Matita. When $k \neq n$ the application of the coercion
creates new uninstantiated metavariables that correspond to proof
obligations. This will be detailed later on.

The last detail worth mentioning is that, all systems known to the authors with
the notable exception of Plastic~\cite{callaghan00coherence}, adopt some
approximated representation for the nodes in the coercion graph, usually the
name of the head constant of the source and target types.  This results in a faster
lookup in the coercion graph, but the coherence check is also strengthened.
In particular, in a calculus with dependent types, different, but similar, 
coercions may not be allowed to be declared. Matita drops
the coherence check, or better changes it into a warning, and enables the user
to attach to coercions a priority: coercions from and to the same approximation
of types are all tried according to user defined priorities.

\begin{specification}[Coercion lookup $(\rightarrowtail-\;\Delta)$]
Given a context $\xG{}$, substitution $\xS{}$ and proof problem
$\xP{}$, 
all assumed to be well formed, 
two types $T_1$ and $T_2$, this function 
returns an explicit cast $c~?_1~\ldots~?_k~\ldots~?_n$ for the metavariable
of index $k$ and its type $T'$.
It is denoted by:\\
\[\lookforcx{}{}{}{T_1}{T_2}{k}{c~?_1~\ldots~?_k~\ldots~?_n}{T'}{'}\]
Precondition (parametric in $\approx$):
\[\WF{\xPSG{}{}{}} \lland 
(c, k) \in \Delta \lland 
(c:\Pi x_1 : T_1 \ldots \Pi x_k : T_k \ldots \Pi x_n : T_n. T) \in \env{} \lland
\]
\[
T_k \approx T_1 \lland T \approx T_2
\]
Postcondition:
\[\WF{\xP{'}} \lland 
\xPSG{'}{}{} \vdash c~?_1~\ldots~?_k~\ldots~?_n : T'
\]
\end{specification}
\noindent We denoted by $\approx$ the approximated comparison test
used to select from $\Delta$ a coercion $c$ from 
$T_1$ to $T_2$. A proper definition of
$\approx$ is not relevant for the present paper, but we can anyway say that 
Matita compares the first order skeleton of types obtained by dropping bound 
variables, metavariables and higher order terms,
and that this skeleton can be made less precise 
on user request. We will give an account of this facility in the example
that will follow.

The new metavariables $?_1,\ldots,?_n$ generated by the lookup operation
are all declared in the new proof problem $\xP{'}$.
The number of metavariables to which $c$ is applied to is defined when the coercion is declared and may be less than the arity of $c$. In the latter case
$T$ is a product and the coercion casts its $k$-th argument to be a function.
The position $k$ of the casted argument is user defined as well.  The
coerced term has then to be later unified with~$?_k$. 

\begin{thm}[Correctness]
The \coer{}, \forcetotypelabel{}, \rec{}, \recE{}, \eatprodslabel{}, and \unimany{} algorithms extended with the set of rules presented in this section obey
their specification where $\preccurlyeq''$ is the following admissible order
relation: $t' \preccurlyeq'' t$ when $t'$ is obtained from $t$ by 
replacing single placeholders with terms, vectors of placeholders with vectors of terms, and terms $u_k$ with terms convertible to
$(c~u_1~\ldots~u_k~\ldots~u_n)$ where $c$ is a coercion declared in
$\Delta$ for its $k^{th}$ argument.
\end{thm}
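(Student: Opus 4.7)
The plan is to follow the same schema as the earlier correctness results: proceed by induction on the derivation of each judgment, and for every rule verify that the postcondition of the conclusion holds assuming the postcondition holds by induction hypothesis for each premise. Since $\preccurlyeq''$ contains the relation $\preccurlyeq'$ of Section~\ref{sec:raw}, which in turn contains the identity, and since all earlier correctness statements are parametric in any admissible $\preccurlyeq$ that contains the identity, the rules inherited from the previous sections remain correct as soon as $\preccurlyeq''$ is shown to be admissible. Hence the real work splits in two: check admissibility of $\preccurlyeq''$, and then handle only the genuinely new rules introduced in this section for coercion insertion.

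First I would verify that $\preccurlyeq''$ is indeed an admissible partial order. Reflexivity and transitivity follow by closing under the empty sequence of replacements and under composition of replacement sequences; antisymmetry is immediate because every admissible replacement strictly enlarges some syntactic measure (number of inserted coercions plus number of expanded placeholders plus total length of expanded vectors of placeholders). The admissibility condition $t_1 \preccurlyeq'' t_2 \Rightarrow T[x/t_1] \preccurlyeq'' T[x/t_2]$ for $x$ occurring linearly in $T$ is then proved by induction on $T$: the unique occurrence of $x$ is replaced on both sides by related terms, and the replacement recipe defining $\preccurlyeq''$ is applied locally at that position without interference with the surrounding context.

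Next I would inspect each new rule, namely the extended $\coer{}$ rule that looks up a coercion when unification fails, the analogous extension of $\forcetotypelabel{}$ that promotes a non-sort into a sort via a coercion, and the rules inserting a coercion around the head of an application (when the head is applied but has a non-functional type) or around the type of an abstraction. In each case the refined output has the shape $c~?_1~\ldots~?_k~\ldots~?_n$ with $?_k$ unified with the sub-term produced by the recursive refinement call. The well-typedness portion of the postcondition is delivered by the postcondition of the coercion-lookup specification together with the postcondition of the final unification call, while refinement of $(\xP{'},\xS{'})$ over $(\xP{},\xS{})$ follows by transitivity along lookup followed by unification. The ordering part $t' \preccurlyeq'' t$ is then an immediate consequence of the third clause in the definition of $\preccurlyeq''$, composed via admissibility with the sub-term relation produced by the induction hypothesis.

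The main obstacle will be the loss of syntactic directedness of the new rules: several coercions in $\Delta$ may be compatible with a given pair of types via the approximate comparison $\approx$, and they have to be tried in sequence with local backtracking whenever the subsequent unification with $?_k$ or with the expected type fails. For the induction to carry through, each failed branch must leave the input $(\xP{},\xS{})$ untouched, so that the eventually successful branch still produces a genuine refinement of the rule's input pair. A secondary subtlety is that the coercion lookup can itself open fresh proof obligations in the case $k < n$; I would check that these additions preserve well-formedness of $\xP{'}$ and that the order $\ll_{\xP{'}}$ remains a strict partial order, which is routine because the new metavariables depend only on the surrounding context and are typed by the already well-formed signature of $c$.
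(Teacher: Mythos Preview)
Your approach is essentially the same as the paper's: reduce correctness for the inherited rules to the previous sections via admissibility of $\preccurlyeq''$ and its inclusion of $\preccurlyeq'$, then verify the new rules by inspection. One minor correction: the section introduces only two new rules, $(\coer{}\mathrm{-coercion})$ and $(\eatT\mathrm{-coercion})$; there is no separate new $\forcetotypelabel{}$ rule, since coercions around abstraction types are handled automatically by the existing $(\forcetotypelabel{}\mathrm{-ok})$ calling the now-extended $\coer{}$.
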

\begin{proof}
We do not need to re-establish correctness for the rules given in the previous
sections since $\preccurlyeq''$ is admissible and includes $\preccurlyeq'$.
Correctness of the new rules given in this section is established by rule
inspection as usual.
\end{proof}

In the following rule the coercion $c$ is applied to its argument
$t$ unifying it with $?_k$. The returned term $t'$ can still contains 
metavariables: $?_1 \ldots ?_{k-1}$ may appear in the type of $?_k$, thus
unifying $?_k$ with $t$ may instantiate them\footnote{In the case of
dependent types the unification of the types is a
necessary condition for the unification of the two terms, as claimed
by Strecker~\cite{strecker}.}, but $?_{k+1} \ldots ?_n$ do not appear in
the type of $?_1 \ldots ?_k$, and thus cannot be all instantiated.
This rule is applied as a fall back in case $\coer{}\mathrm{-ok}$ fails.

\[
\infrule[(\coer{}\mathrm{-coercion})]
 {\Lookforcx{}{}{}{T_1}{T_2}{k}{c~\vector{?_m}~?_k~\vector{?_n}}{T_2'}{'}\\
  \Unifx{_=}{'}{}{}{?_k}{t}{''}{'} \\
  \Unifx{}{''}{'}{}{T_2'}{T_2}{'''}{''}}
 {\Unifcoercex{}{}{}{t}{T_1}{T_2}{c~\vector{?_m}~?_k~\vector{?_n}}{'''}{''}}
\]\bigskip
\begin{proof}[Correctness of $(\coer{}\mathrm{-coercion})$]
Since $?_k$ is unified with $t$ in the second premise of the rule,
by definition of unification we have $\xPSG{}{}{} \vdash ?_k \downarrow t$, 
and thus $c~\vector{?_m}~?_k~\vector{?_n}\preccurlyeq'' t$.
Moreover, the postconditions of coercion lookup 
$\stackrel{\Delta}{~\leadsto~}$ grant that 
$c~\vector{?_m}~?_k~\vector{?_n}$ has type 
$T_2'$ that is later unified with $T_2$. Thus the postconditions of the
unification algorithm allow us to prove that 
$c~\vector{?_m}~?_k~\vector{?_n}$ has a type convertible with $T_2$.
\end{proof}

The $\coer{}\mathrm{-coercion}$ rule automatically takes care of the insertion
of coercions around arguments of an application and around the types of an
abstraction.

The following extension to $\eatprodslabel$ take cares of insertion
around the head of an application.
%
\[
\infrule[(\eatT\mathrm{-coercion})]
 {\Extendx{}{}{?_{1'}}{\Type_\top}{}{?_{1}}{?_{1'}}{'} \\
  \Extendx{}{;x_1:?_{1}}{?_{2'}}{\Type_\top}{;x_1:?_{1}}{?_{2}}{?_{2'}}{'} \\
  \Unifcoercex{'}{}{}{t~\vector{v_r}}{T}{\Pi x:?_1.?_2}{c~\vector{w_s}}{''}{'}\\
  \xG \vdash w_i : W_i \quad i \in \{1 \ldots s\}\\
  \Eatprods{''}{'}{}{c~\vector{(x_s := w_s : W_s)} :\; \Pi x:?_1.?_2}{u_1~\vector{u_n}}{v}{V}{'''}{''}}
 {\Eatprods{}{}{}{t~\vector{(x_r := v_r : T_r)}:\; T}{u_1~\vector{u_n}}{v}{V}{'''}{''}}
\]\bigskip

\begin{proof}[Correctness of $\eatT\mathrm{-coercion}$]
Its correctness follows trivially from the correctness of 
$(\coer{}\mathrm{-coercion})$.
\end{proof}

\begin{thm}[Termination]
The \rec{} 
algorithm defined by the set of rules presented above with the 
addition of
$(\coer{}\mathrm{-coercion})$ and
$(\eatT\mathrm{-coercion})$
terminates.
\end{thm}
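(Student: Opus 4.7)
The proof extends the inductive argument used for previous termination theorems. As before, the proof is by structural induction on the term being refined, relying on the induction hypothesis that the unification algorithm $\UNI{}$ and weak-head reduction $\whdlabel{}$ terminate, and that the coercion lookup $\stackrel{\Delta}{~\leadsto~}$ terminates because $\Delta$ is finite.

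First I would dispatch $(\coer{}\mathrm{-coercion})$, which is the easy case: the rule makes no recursive calls to any of the refinement judgments, but only invokes $\stackrel{\Delta}{~\leadsto~}$ once and $\UNI{}$ twice. Since each of these terminates, so does the rule. As a consequence, the specification of the \coer{} algorithm is preserved, and we may continue to treat \coer{} as a terminating procedure when reasoning about all the calls to it elsewhere (in particular, in the new $(\eatT\mathrm{-coercion})$ rule and in the $(\recE{}\mathrm{-default})$ and $(\forcetotypelabel{}\mathrm{-ok})$ rules that act as aliases for \rec{}).

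The main obstacle is $(\eatT\mathrm{-coercion})$, because its recursive call to $\eatprodslabel$ is on the \emph{same} list of arguments $u_1~\vector{u_n}$, so we cannot invoke the induction hypothesis on a shorter list as in the proof of termination of the other $\eatT$ rules. The key observation is that, after the coercion is inserted, the type passed to the recursive call of $\eatprodslabel$ is syntactically $\Pi x:?_1.?_2$. Hence the $\whdlabel{}$ step in any subsequent invocation of $\eatprodslabel$ immediately exposes a product, so only rules $(\eatT\mathrm{-prod})$ or $(\eatT\mathrm{-flexible})$ can match; in particular, $(\eatT\mathrm{-coercion})$ cannot fire again since its role is precisely to handle the case in which $T$ does not expose a product. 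Thus $(\eatT\mathrm{-coercion})$ can be applied at most once before control passes to a rule that consumes an argument from $u_1~\vector{u_n}$, and by induction hypothesis the resulting recursive call terminates.

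To package this argument cleanly I would define a well-founded measure on the state of $\eatprodslabel$, lexicographic in (i) the length of the remaining argument list and (ii) a flag that is $1$ when the type has not yet been exposed as a product and $0$ otherwise. The rules $(\eatT\mathrm{-prod})$, $(\eatT\mathrm{-flexible})$ and the $\mldots$ rules strictly decrease the first component, while $(\eatT\mathrm{-coercion})$ leaves it unchanged but strictly decreases the second. The remaining obligations are to verify that $\Pi x:?_1.?_2$ is a well-typed type (immediate, since $?_1$ is declared of sort $\Type_\top$ and CIC is a full PTS, exactly as for $(\eatT\mathrm{-flexible})$) and that the auxiliary terms $W_i$ and metavariables introduced by the \Extendx{}{}{}{}{}{}{}{} steps are well-formed in the extended proof problem; both are routine.
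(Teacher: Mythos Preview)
Your proof takes essentially the same approach as the paper: $(\coer{}\mathrm{-coercion})$ terminates trivially, and after $(\eatT\mathrm{-coercion})$ the type passed to the recursive $\eatprodslabel$ call is the explicit product $\Pi x:?_1.?_2$, so control immediately passes to a rule that consumes $u_1$. The paper phrases this last point as ``the only rule that can be triggered is $(\eatT\mathrm{-prod})$'' and remarks that inlining $(\eatT\mathrm{-prod})$ would make the decrease manifest; your lexicographic measure is a more explicit packaging of the same idea.

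One small correction: you list $(\eatT\mathrm{-flexible})$ among the rules that could fire on the recursive call, but that rule requires the type to weak-head reduce to a metavariable (possibly applied), whereas here the type is already the product $\Pi x:?_1.?_2$; only $(\eatT\mathrm{-prod})$ matches on the head type. Also, your blanket claim that ``the $\mldots$ rules strictly decrease the first component'' is not literally true of $(\eatT\mathrm{-\placeholder+1})$, whose termination was argued separately via the bound on the number of products; but this does not affect the argument for the two new rules, which is what the theorem is about.
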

\begin{proof}
Rule $(\coer{}\mathrm{-coercion})$ clearly terminates.
The rule $(\eatT\mathrm{-coercion})$ issues a recursive call to
$\eatprodslabel$ without consuming $u_1$, but the only rule
that can be triggered is $(\eatT\mathrm{-product})$, that will immediately
consume $u_1$.
\end{proof}
Inlining $(\eatT\mathrm{-product})$ would
result in a rule that consumes some input and thus clearly terminates, 
but would be way less readable.

\subsection{Implementation remarks}

Since we allow coercion arguments not to be inferred automatically (like proof
obligations) their type may depend on the coerced term (e.g. the proof that the
coerced integer is greater than zero has an instance of the coerced integer in
its type, and the corresponding metavariable will have index greater than $k$).

\begin{example}[Coercion with side conditions]
Consider the following coercion set, declaring the coercion 
$\mathrm{v\_to\_nel}$ from vectors to non empty lists.
\[
\Delta = \{ (\mathrm{v\_to\_nel, 3}) \}
\]
The environment holds the following type for the coercion:
\[
(\mathrm{v\_to\_nel}:
 \Pi A : \Type.\Pi n : \nat{}.\Pi v : \mathrm{Vect}~A~n.n > 0 \to
 \exists l : \mathrm{List}~A, \mathrm{length}~l > 0 ) \in \env{}
\]
Now consider the term $t = (\mathrm{Vcons}~\nat{}~0~(\mathrm{Vnil}~\nat{})~2)$ 
and the following coercion problem:
\[
\frac{
 \begin{array}{l}
 \Lookforcx{}{}{}
   {\mathrm{Vect}~\nat{}~(0+1)}
   {(\exists l : \mathrm{List}~\nat{},\mathrm{length}~l > 0)}
   {\\\qquad 3}{\mathrm{v\_to\_nel}~?_1~?_2~?_3~?_4}
   {(\exists l : \mathrm{List}~?_1,\mathrm{length}~l > 0)}{'}\\
 \Unifx{_=}{'}{}{}{?_3}{t}{''}{'} \\
 \Unifx{}{''}{'}{}
    {(\exists l : \mathrm{List}~?_1,\mathrm{length}~l > 0)}
    {(\exists l : \mathrm{List}~\nat{},\mathrm{length}~l > 0)}{'''}{''}
  \end{array}
}{
\Unifcoercex{}{}{}{t}
  {\mathrm{Vect}~\nat{}~(0+1)}
  {(\exists l : \mathrm{List}~\nat{},\mathrm{length}~l > 0)}
  {\mathrm{v\_to\_nel}~?_1~?_2~?_3~?_4}{'}{'}
}
\]
\noindent
where the final proof problem and substitutions are:
\[
\xP{} = \{ \xG{} \vdash ?_4 :\; ?_2 > 0\}
\]
\[
\xS{} = \{ \xG{} \vdash ?_1 := \nat{} : \Type,~ 
           \xG{} \vdash ?_2 := 0+1 : \nat {},~
           \xG{} \vdash ?_3 := t : \mathrm{Vect}~\nat{}~(0+1)
\}
\]
Note that $?_4$ is still in \xP{}, thus it represent a proof obligation
the user will be asked to solve.
Also note that the following coercion could be declared as well,
with a higher precedence. It is useful since it does not
open a side condition when the type of the coerced vector 
is explicit enough to make the proof that it is not empty
constant (not depending on \xG{} nor on the vector but just on its type) 
and thus embeddable in the body of
the coercion.
\[
(\mathrm{nev\_to\_nel},\; 
 \Pi A : \Type.\Pi n : \nat{}.\Pi v : \mathrm{Vect}~A~(n+1).
 \exists l : \mathrm{List}~A, \mathrm{length}~l > 0 ) \in \env{}
\]
The system would thus try $\mathrm{nev\_to\_nel}$ first, and fall back
to $\mathrm{v\_to\_nel}$ whenever needed.
\qed{}
\end{example}

Also note that this last coercion can be indexed as a cast from
$(\mathrm{Vect}~\_~(\_+1))$ to 
$(\exists l : \mathrm{List}~\_, \mathrm{length}~l > 0)$ or in a 
less precise way.
For example the approximation of the source type could be relaxed
to $(\mathrm{Vect}~\_~\_)$. This will force the system to try to apply
this coercion even if the casted term is a vector whose length
is not explicitly mentioning $+1$, but is something that unifies with $?_j+1$.
For example the length $1*2$ would unify, since its normal form is $(0+1)+1$.


\section{Comparison with related work on Type Inference}
Type inference is a very widely studied field of computer science.  Nevertheless
to the authors' knowledge there is no precise account of a type inference
algorithm for the full CIC calculus in the literature. 

The extension to the typing algorithm of CIC with explicit casts in~\cite{saibi-inheritance} 
follows the same spirit of our refinement
algorithm for raw terms. However the work by Saibi does not handle placeholders
nor metavariables, and the presentation is in fact quite distant from the 
actual implementation in the Coq interactive prover. 

Another work in topic is \cite{phd-norell} where Norell describes the
bi-directional type inference algorithm implemented in the Agda interactive
prover. He presents the rules for a core dependently typed calculus
enriched with dependent pairs. Unfortunately he omits the rules 
for its extension with inductive types.
It is thus hard to tell if Agda exploits the type expected by
the context to type check inductive constructors as in rule
$(\recE{}\mathrm{-appl-}k)$.\\
Agda does not provide an explicit $\mldots$ placeholder but uses the expected
type to know when it is necessary to pad an application with meta variables
in order to reduce the arity of its type.
In our setting this is equivalent to the following transformation: 
every application $(f~\vector{a})$ is turned into $(f~\vector{a}~\mldots)$
whenever its expected type is known (i.e. not a metavariable).

One aspect that allows for a direct comparison with Coq and Agda is the
handling of implicit arguments. In both Agda and Coq, abstractions corresponding
to arguments the user can freely omit are statically labelled as such.
The systems automatically generate fresh metavariables as arguments to
these binders and the type inference algorithm eventually instantiates them.
Both systems give the user the possibility to locally override the implicit
arguments mechanism.  In Coq the user can prefix the name of a constant with
the @ symbol, while in Agda the user can mark actual arguments as implicit
enclosing them in curly braces.  This escaping mechanism is required because
many lemmas admit multiple and incompatible lists of implicit arguments.\\
As an example, consider a transitivity lemma
$\mbox{eqt} : \forall x,y,z. x = y \to y = z \to x = z$.
When used in a forward proof step the user is likely to pass as arguments
a proof $p$ that $a=b$ and a proof $q$ that $b=c$ like in $(\mbox{eqt}~p~q)$
to put in his context the additional fact $a=c$.
In that case values for $x, y$ and $z$ are determined by the types of
$p$ and $q$. On the contrary if the lemma is used in backward proof step
to prove that $a=c$, no value for $y$ can be inferred, thus the user 
is likely to use the lemma as in $(\mbox{eqt}~b)$ and expect the system
to open two new goals: $a=b$ and $b=c$. 
The two different uses of \mbox{eqt} make it impossible to 
statically attach to it a single list of implicit arguments and 
at the same time to never resort to an escape mechanism to temporarily
forget that list.\\
In Matita the user can simply use the $\mldots$ placeholder, thus no escaping
mechanism is required. In fact the type inference algorithm described in this
paper lets the user write $(\mbox{eqt}~\mldots~p~q)$ in the first case as well
as $(\mbox{eqt}~\mldots~c)$ in the second one.\footnote{A trailing $\mldots$ is
automatically added to any term used in backward proof step.}\\

The lack of a complete and formal study of type inference for raw CIC terms is
probably due to the many peculiarities of the CIC type system, in particular
inductive and dependent types, explicit polymorphism and the fact that type
comparison is not structural, but up to computational equivalence. We thus try
to position our work with respect to some of the main approaches adopted by
type inference algorithms designed for programming languages.

\subsection{Greedy versus delayed constraint solving}
The most notable example of type inference algorithm based on constraint
solving is the one adopted for the Agda system~\cite{phd-norell}. Agda 
is based on a dependently typed programming language quite similar to CIC, 
but is designed for programming and not for writing proofs. The type 
inference algorithm collects constraints and checks for their satisfiability.
Nevertheless, their solution is not recorded in the terms. This enables
the user to remove an arbitrary part of an already type checked term
and have the typing of its context not influenced by the term just removed.
While this ``compositionality'' property is desirable for
programming in a language with dependent types, it is not vital for proof 
systems, where one seldom edits by hand type checked terms. 

A strong characteristic of constraint based type inference is precise
error reporting, as described in~\cite{StuckeySW06}. Even if it the heuristics
adopted in Matita~\cite{mcs2008} to discard spurious error reports are slightly
more complex than the ones proposed by Stuckey, we believe that they provide a
similar precision.

Greedy algorithms~\cite{dunfieldgreedy}, like the one presented in this paper,
are characterized by a very predictable behavior, at the cost of being forced
to take early decisions leading to the rejection of some possibly well typed
terms. Also remember that unification has to take computation into account,
and user provided functions are known to be total only if they are well typed.
Thus the resolution of type constraints cannot be delayed for long.  According
to our experience, predictability compensates for the extra type annotations the
user is sometimes required to produce to drive the greedy algorithm towards a
solution.

\subsection{Unification based versus local constraint solving}
Many algorithms to infer a polymorphic type for a program prefer
to avoid the use of unification~\cite{piercelocaltype} since
unification variables may represent type constraints coming from 
distant, loosely related, sub-terms. Moreover a bi-directional 
approach pushes the type constraints of the context towards sub-terms,
making it effectively possible to drop unification altogether.
These approaches also scaled up to types with some sort of dependency over 
terms, as in~\cite{pfenning3}. 

Interactive provers based on type theory are for (good and) historical reasons
based on the two twin approaches. A small kernel based on decision procedures
type checks (placeholder free) terms, and a refiner based
on heuristics deals with terms with holes performing
type inference. Since the kernel is the key component of the system, the one
that must be trusted, the language is designed to allow the type checking 
algorithm to be as simple as possible. Explicit polymorphism makes type
checking CIC terms decidable while allowing the same degree of polymorphism
as $\mathcal{F}^\omega$. These explicit type annotations are
usually left implicit by the user and represent long-distance constraints.
In this context unification seems to be a necessary device. Moreover,
the most characterizing feature of CIC 
is that types are compared taking computation into account, and that types
can contain terms, in particular functions applications. Thus the kernel
is equipped with a quite elaborate machinery to compute recursive functions
and unfold definitions. Type inference has to provide a similar machinery,
and possibly extend it to handle types containing metavariables.
This extension is commonly named higher order unification, and it is 
a really critical component of an interactive prover. Recent important
developments~\cite{canonical-structures} heavily rely on a user-extensible 
unification algorithm~\cite{unification-hints}, using it as a predictable
form of Prolog-like inference engine. In other words, unification can
be employed to infer terms (content) while type inference is employed
to infer types and type annotations in the case of explicit polymorphism.
For these many reasons, we believe that developing type inference on top of
unification is a sound decision probably necessary to scale to a rich
type system like CIC.

\section{Conclusion}

In this paper we studied the design of an effective refinement algorithm
for the Calculus of (Co)Inductive Constructions. Its effectiveness has been
validated in all the formalizations carried on using the Matita interactive 
theorem prover~\cite{TS11,armentano,asperti-ricciotti,matitapoplmark}, 
whose refiner is based on the algorithm described in this
paper. Once again we stress that the refiner component, while not being
critical for the correctness of the prover, is the user's main interlocutor,
and is thus critical for the overall user's experience.

This algorithm is also the result of the complete rewrite the Matita ITP
underwent in the last couple of years. The refiner algorithm described in this
paper amounts to approximatively 1600 lines of OCaml code, calling the higher
order unification algorithm that amounts to a bit less than 1900 lines. To give
a term of comparison to the reader, the kernel of Matita, written by the same
authors, amounts to 1500 lines of data structures definitions and basic
operations on them, 550 lines of conversion algorithm and 1400 of type
checking. More than 300 lines of the type checking algorithm are reused
by the refiner for checking inductive types positivity conditions and
recursive or co-recursive functions termination or productivity. 

%
%
%
%
%

On top of this refinement algorithm all
primitive proof commands have been reimplemented.  In the old implementation
they were not taking full advantage of the refiner, partially for historical
reasons, partially because it was lacking support for placeholder vectors and
bi-directionality was not always exploited. The size of the code is now
48.9\% of what it used to be in the former implementation.  In particular, it
became possible to implement many proof commands as simple ``notations'' for
lambda-terms in external syntax.

A particularity of this work is that the presented algorithm deals with
completely raw terms, containing untyped placeholders, whose only precondition
is to be syntactically well formed. In addition it also supports a very
general form of coercive sub-typing, where inserting the explicit cast may leave
uninstantiated metavariables to be later filled by the user. This eased the
implementation of subset coercions in the style of~\cite{russell}, but that
topic falls outside the scope of the present paper and is thus not discussed.

The algorithm could be enhanced adding more rules, capable of propagating more 
typing information. For instance, a specific type forcing rule for 
$\beta$-redexes (suggested by a referee) could be in the form

\[
\infrule[(\recE{}\mathrm{-beta})]
 {\Forcetotype{}{}{}{U}{U'}{s}{'}{'} \\
  \RefinexE{'}{'}{}{u}{U'}{u'}{''}{''}\\
  \RefinexE{''}{''}{; x:U'}{t}{E[u'/x]}{t'}{'''}{'''}}
 {\RefinexE{}{}{}{(\lambda x:U.t)\; u}{E}{(\lambda x:U'.t')\; u'}{'''}{'''}}
\]\smallskip

\noindent enabling the system to propagate the expected type to the abstraction 
(compare this rule with ($\recElabel\mathrm{-letin}$)). In practice the 
advantage of this rule is limited, since it is quite infrequent for a user to 
write $\beta$-redexes. A type forcing rule for pattern matching based on the 
same principles, propagating the expected type to the return type of the 
\mbox{\verb[language=grafite]+match+}, could be of greater value, since this 
construct is more likely to come from the user input. We will consider adding
such rules in a future implementation.

The refinement algorithm we presented already validates many desired 
properties, like correctness and termination. 
Nevertheless we did not even state
the relative completeness theorem. In a simpler framework, admitting
most general unifiers, one could have stated that given an oracle for 
unification, the algorithm outputs a well typed refinement every time it
is possible and that any other refinement is less general than the
produced one. Unluckily CIC is higher order and does not admit most
general unifiers. To state the relative completeness theorem one
has to make the oracle aware of the whole refinement procedure and the oracle
has to guess a unifier (or all of them) such that the remaining
refinement steps succeed. This makes the theorem way less interesting.
Alternatively one would have to add backtracking to compensate for errors made
by the oracle, and make the algorithm distant from the implemented one,
that is essentially greedy and backtracking free.

The algorithm presented in the paper is clearly not relatively complete.
For example, the rules given in Section~\ref{sec:mono} do not accept
the term $f~c$ where $c: \mathbb{N}$ and $f: \match{?_1}{\mathbb{N}}{\lambda x.\Type}[O \Rightarrow \mathbb{N} ~|~ S~(x: \mathbb{N}) \Rightarrow \mathbb{N} \to \mathbb{N}]$. The term is however refineable, for instance by instantiating
$?_1$ with $S~O$. To obtain a relatively complete algorithm, we could add
additional rules based on the invocation of the unifier on difficult problems.
For instance, for the example just shown it would be sufficient to unify
$\mathbb{N} \to ?_2$ with $\match{?_1}{\mathbb{N}}{\lambda x.\Type}[O \Rightarrow \mathbb{N} ~|~ S~(x: \mathbb{N}) \Rightarrow \mathbb{N} \to \mathbb{N}]$ for
a fresh metavariable $?_2$. However, we know in advance that the efficient
but incomplete algorithm implemented in Matita always fails on such difficult
unification problems. The same holds for the similar algorithm implemented
in Coq. Therefore a relatively complete version of the algorithm would remain
only of theoretical interest.

The following weaker theorem, which establishes completeness on well typed terms
only, can be easily proved by recursion over the proof tree and by inspection
of all cases under the hypothesis that every pair of convertible terms are
unified by the identity metavariable instantiation.
\begin{thm}[Completeness for well typed terms]For all well formed $\xPSG{}{}{}$ and for all $t$ and $T$ such
that $\xPSG{}{}{} \vdash t:T$ we have 
$\RefinexE{}{}{}{t}{T}{t}{'}{'}$. \qed
\end{thm}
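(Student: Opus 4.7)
The plan is to establish the result by mutual induction on the kernel typing derivation $\xPSG{}{}{} \vdash t:T$ for both $\rec{}$ and $\recE{}$. Specifically, I would prove the strengthened pair of statements: whenever $\xPSG{}{}{} \vdash t:T$, (i) there exist $\xP',\xS',T'$ with $\refinex{}{}{}{t}{t}{T'}{'}{'}$ and $\xPSG{'}{'}{} \vdash T' \downarrow T$, and (ii) for such a $T$, also $\refinexE{}{}{}{t}{T}{t}{'}{'}$. Claim (ii) follows from (i) by the rule $(\recE{}\mathrm{-default})$, since $\coer{}\mathrm{-ok}$ applies to any pair of convertible types by the stated hypothesis on unification (identity instantiation suffices on convertible terms).

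The induction then proceeds case by case, matching each kernel rule in the Appendix with the corresponding $\rec{}$-rule. The base cases (variables, constants, sorts, metavariables) are immediate: the rule premises coincide with those of the kernel. The rules for $\lambda$-abstraction, $\Pi$-abstraction and let-in reduce to the auxiliary claim that $\forcetotypelabel{}$ succeeds on well-typed types, which follows from $(\forcetotypelabel{}\mathrm{-ok})$ together with the inductive hypothesis on $\rec{}$. The applicative case is reduced to an auxiliary induction on the list of arguments showing that $\eatprodslabel{}$ succeeds: rule $(\eatT\mathrm{-prod})$ is triggered by reducing the head type to a product (which exists, by well-typing of the full application), and each sub-refinement succeeds by (ii) applied to the next argument, whose expected type is convertible to its actual type thanks to the kernel derivation.

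The delicate points concern the bi-directional ad-hoc rules that take precedence over $(\recE{}\mathrm{-default})$, in particular $(\recE{}\mathrm{-lambda})$ and $(\recE{}\mathrm{-appl-}k)$. For the lambda case, if the expected type $E$ is well-typed and $\xPSG{}{}{} \vdash \lambda x:T.t : E$, then $E$ is convertible to a product $\Pi x:E_1.E_2$ with $E_1 \downarrow T$ and the body well-typed, so each premise discharges by induction and by the unification hypothesis. For $(\recE{}\mathrm{-appl-}k)$, the kernel derivation of $k~\vec{t}$ gives that the inferred type has the form $I_l~\vec{v_l}~\vec{w_n}$ where the first $l$ arguments are exactly the uniform parameters, up to conversion; hence $\unimany{}$ succeeds via the induction hypothesis on each argument plus the unification hypothesis, and the residual call to $\eatprodslabel{}$ is handled by the auxiliary induction above.

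The main obstacle will be the pattern matching rule $(\rec{}\mathrm{-match})$: one has to show that all the fresh metavariables $\vec{?u_l}, \vec{?v_r}, ?_1$ get unified correctly with the arguments dictated by the kernel derivation, that $(s,\xS(?_1)) \in \elim(\pts)$ holds by the kernel's elimination restriction, and that each branch's expected type $T'~\vec{{M'}^j_r}~(k_j~\vec{?u_l}~\vec{y^j_{n_j}})$ is convertible to the type assigned by the kernel to $t_j$. All of these follow from inspection of the kernel rule for pattern matching, but bookkeeping the telescopes and the substitution of the uniform parameters is where the proof needs genuine care; once one has shown that the unification hypothesis handles all the equality constraints between kernel-inferred and refiner-introduced terms, the remaining premises are all discharged by the inductive hypotheses on the branch bodies.
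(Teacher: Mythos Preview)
Your proposal is correct and follows exactly the approach the paper indicates: the paper's proof is a one-line sketch stating that the result is ``easily proved by recursion over the proof tree and by inspection of all cases under the hypothesis that every pair of convertible terms are unified by the identity metavariable instantiation,'' and your plan is precisely a fleshed-out version of this, strengthening the inductive hypothesis to cover both $\rec{}$ and $\recE{}$ and discharging each case via the corresponding kernel rule together with the unification hypothesis.
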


%



\subsection*{Acknowledgments}
We deeply thank Jacques~Carette and the anonymous referees for their many 
observations and corrections.

\appendix
\label{sec:app}
\section{Syntax-directed type-checking rules}

The following appendix is an extract of the paper~\cite{ck-sadhana}
in which the reader can find all the details of the type checking
algorithm implemented in the Matita interactive prover.
A few aesthetic changes have been made to the adopted syntax to increase its
consistency with respect to the syntax adopted in this paper. The main
differences are summarised in the following list:
\begin{iteMize}{$\bullet$}
\item We use the membership relation over the \pts{} set to type sorts and
      products
\item The check for the consistency of the metavariable local substitution
      has been inlined in the rule
\item A new generic judgement $(r : T) \in \env$ has been introduced to
      provide a more compact syntax for the lookup of the type of a 
      generic object into the environment
\item We inlined several auxiliary functions that were used in the
      presentation of the typechecking rule for case analysis.

      This was made
      possible by the following abuse of notation:
      $\env,\Sigma,\Phi,\emptyset \vdash t_1 \mwhd \Pi \vector{x_i:T_i}.t_{n+1}$
      is a shortcut to mean that for all $i \in \{1 \ldots n\}$ \\
      $\whd{}{}{; x_1 : T_1; \ldots ;x_{i-1} : T_{i-1}}{t_i}{\Pi x_i:T_i.t_{i+1}}$ and $\whd{}{}{}{t_{n+1}}{t_{n+1}}$.

      Moreover, the rule presented in~\cite{ck-sadhana} is more liberal than the
      one presented here that just uses the test $(s,s') \in \elim(\pts)$ to
      check that a non informative data is never analyzed to obtain an
      informative one. The actual rules used in the kernel and the refiner of
      Matita also allow
      in every situation the elimination of inhabitants of singleton inductive
      types, whose definition is given in~\cite{ck-sadhana}.
\end{iteMize}

\label{sec:formal}
\noindent In this section, $\mathcal{I}$ will be short for
\[
\begin{array}{l}
\Prod \vector{x_l:U_l}. \textrm{\verb[language=grafite]+inductive+}\\
~ \qquad \qquad I^1_l : A_1 := k^1_1 : K^1_1 \ldots k^{m_1}_1 :
K^{m_1}_1\\ \qquad \textrm{\verb[language=grafite]+with+}~\ldots\\ \qquad
\textrm{\verb[language=grafite]+with+}
~I^n_l : A_n := k^1_n : K^1_n \ldots k^{m_n}_n : K^{m_n}_n
\end{array}
\]

\subsection{Environment formation rules}
Environment formation rules (judgement $\env \vdash WF$, function \verb+typecheck_obj+)

$$\infrule
 {}
 {\emptyset \vdash WF}
$$

$$\infrule
 {\env \vdash WF \andalso
  \mbox{$d$ undefined in $\env$} \andalso
  \env,\Sigma \vdash WF \andalso
  \env,\Sigma,\Phi \vdash WF \andalso\\
  \env,\Sigma,\Phi,\emptyset \vdash T: S \andalso 
  \env,\Sigma,\Phi,\emptyset \vdash S \mwhd S'~\mbox{where $S'$ is a sort}\\
  \env,\Sigma,\Phi,\emptyset \vdash b: T' \andalso
  \env,\Sigma,\Phi,\emptyset \vdash T \downarrow T'
 }
 {\env \cup ( \Sigma,\Phi,\mbox{\verb[language=grafite]+definition+}~d:T:=b) \vdash WF}
$$

$$\infrule
 {\env \vdash WF \andalso
  \mbox{$d$ undefined in $\env$} \andalso
  \env,\Sigma \vdash WF \andalso
  \env,\Sigma,\Phi \vdash WF \andalso\\
  \env,\Sigma,\Phi,\emptyset \vdash T: S \andalso
  \env,\Sigma,\Phi,\emptyset \vdash S \mwhd S' ~\mbox{ where $S'$ is a sort}
 }
 {\env \cup ( \Sigma,\Phi,\mbox{\verb[language=grafite]+axiom+}~d:T) \vdash WF}
$$

$$\infrule
 {\env \vdash WF \andalso
  \mbox{$\vector{f_n}$ undefined in $\env$} \andalso
  \env,\Sigma \vdash WF \andalso
  \env,\Sigma,\Phi \vdash WF \andalso\\
  \env,\Sigma,\Phi,\emptyset \vdash T_i: S_i \andalso
  \env,\Sigma,\Phi,\emptyset \vdash S_i \mwhd S'_i~\mbox{ where $S'_i$ is a sort}\\
  T_i = \Pi \vector{x^i_{p_i} : T^i_{p_i}}.T^i_{p_i + 1} \\
  \left.
  \begin{array}{l}
  \env,\Sigma,\Phi,[f_1:T_1;\ldots;f_n:T_n;\vector{x^i_{p_i} : T^i_{p_i}}] \vdash t_i: {T'}^i_{p_i + 1} \\
  \env,\Sigma,\Phi,[f_1:T_1;\ldots;f_n:T_n;\vector{x^i_{p_i} : T^i_{p_i}}] \vdash T^i_{p_i + 1} \downarrow {T'}^i_{p_i + 1}
  \end{array}
  \right\} i \in \{1 \ldots n\} \\
  \vector{t_n}~\mbox{guarded by destructors (\cite{ck-sadhana}, Sect. 6.3)}
 }
 {\env \cup \left( 
  \begin{array}{l}
   \Sigma,\Phi,\\
   \letr[\\]{f_1 (\vector{x^1_{p_1} : T^1_{p_1}})}
        {T^1_{p_1+1}}{t_1}
        {f_n (\vector{x^n_{p_n} \!:\! T^n_{p_n}})}
        {T^n_{p_n+1}}{t_n}
  \end{array}
  \right)
  \vdash WF}
$$

$$\infrule
 {\env \vdash WF \andalso
  \mbox{$\vector{f_n}$ undefined in $\env$} \andalso
  \env,\Sigma \vdash WF \andalso
  \env,\Sigma,\Phi \vdash WF \andalso\\
  \env,\Sigma,\Phi,\emptyset \vdash T_i: S_i \andalso
  \env,\Sigma,\Phi,\emptyset \vdash S_i \mwhd S'_i~\mbox{ where $S'_i$ is a sort}\\
  T_i = \Pi \vector{x^i_{p_i} : T^i_{p_i}}.T^i_{p_i + 1} \\
  \left.
  \begin{array}{l}
  \env,\Sigma,\Phi,[f_1:T_1;\ldots;f_n:T_n;\vector{x^i_{p_i} : T^i_{p_i}}] \vdash t_i: {T'}^i_{p_i + 1} \\
  \env,\Sigma,\Phi,[f_1:T_1;\ldots;f_n:T_n;\vector{x^i_{p_i} : T^i_{p_i}}] \vdash T^i_{p_i + 1} \downarrow {T'}^i_{p_i + 1}
  \end{array}
  \right\} i \in \{1 \ldots n\} \\
  \vector{t_n}~\mbox{guarded by constructors (\cite{ck-sadhana}, Sect. 6.3)}
 }
 {\env \cup \left( 
  \begin{array}{l}
   \Sigma,\Phi,\\
   \letcr[\\]{f_1 (\vector{x^1_{p_1} : T^1_{p_1}})}
        {T^1_{p_1+1}}{t_1}
        {f_n (\vector{x^n_{p_n} \!:\! T^n_{p_n}})}
        {T^n_{p_n+1}}{t_n}
  \end{array}
  \right)
  \vdash WF}
$$

$$\infrule
 {\env \vdash WF \andalso
  \mbox{$I^1_l,\ldots,I^n_l,k_1^1,\ldots,k_n^{m_n}$ undefined in $\env$} \andalso
  \env,\Sigma \vdash WF \andalso
  \env,\Sigma,\Phi \vdash WF \andalso\\
  \mbox{all the conditions in \cite{ck-sadhana}, Sect. 6.1 are satisfied}
 }
 {\env \cup ( \Phi, \Sigma, \mathcal{I} ) \vdash WF}
$$

\subsection{Metasenv formation rules}
Metasenv formation rules (judgement $\env, \Sigma  \vdash WF$, function \verb+typecheck_metasenv+)

$$\infrule
 {}
 {\env,\emptyset \vdash WF}
$$

$$\infrule
 {\env,\Sigma \vdash WF \andalso
  \mbox{$?_i$ undefined in $\Sigma$} \andalso
  \env,\Sigma,\emptyset,\Gamma \vdash WF \\
  \env,\Sigma,\emptyset,\Gamma \vdash T:S \andalso
  \env,\Sigma,\emptyset,\Gamma \vdash S \mwhd S' \mbox{where $S'$ is a sort}}
 {\env,\Sigma \cup ( \Gamma \vdash ?_i: T) \vdash WF}
$$

\subsection{Subst formation rules}
Subst formation rules (judgement $\env, \Sigma, \Phi  \vdash WF$, function \verb+typecheck_subst+)

$$\infrule
 {}
 {\env,\Sigma,\emptyset \vdash WF}
$$

$$\infrule
 {\env,\Sigma,\Phi \vdash WF \andalso
  \mbox{$?_i$ undefined in $\Sigma$ and in $\Phi$} \andalso
  \env,\Sigma,\Phi,\Gamma \vdash WF \andalso\\
  \env,\Sigma,\Phi,\Gamma \vdash T:S \andalso
  \env,\Sigma,\Phi,\Gamma \vdash S \mwhd S'~\mbox{where $S'$ is a sort} \\
  \env,\Sigma,\Phi,\Gamma \vdash t:T' \andalso
  \env,\Sigma,\Phi,\Gamma \vdash T \downarrow T'}
 {\env,\Sigma,\Phi \cup ( \Gamma \vdash ?_i: T := t) \vdash WF}
$$

\subsection{Context formation rules}
Context formation rules (judgement $\env, \Sigma, \Phi, \Gamma  \vdash WF$, function \verb+typecheck_context+)

$$\infrule
 {}
 {\env,\Sigma,\Phi,\emptyset \vdash WF}
$$

$$\infrule
 {\env,\Sigma,\Phi,\Gamma \vdash WF \andalso
  \mbox{$x$ is undefined in $\Gamma$} \\
  \env,\Sigma,\Phi,\Gamma \vdash T: S \andalso
  \env,\Sigma,\Phi,\Gamma \vdash S \mwhd S'~\mbox{where $S'$ is a sort}}
 {\env,\Sigma,\Phi,\Gamma \cup ( x:T ) \vdash WF}
$$

$$\infrule
 {\env,\Sigma,\Phi,\Gamma \vdash WF \andalso
  \mbox{$x$ is undefined in $\Gamma$} \\
  \env,\Sigma,\Phi,\Gamma \vdash T: S \andalso
  \env,\Sigma,\Phi,\Gamma \vdash S \mwhd S'~\mbox{where $S'$ is a sort}\\
  \env,\Sigma,\Phi,\Gamma \vdash t: T' \andalso
  \env,\Sigma,\Phi,\Gamma \vdash T \downarrow T' \andalso
  }
 {\env,\Sigma,\Phi,\Gamma \cup ( x:T := t ) \vdash WF}
$$

\subsection{Term typechecking rules}

Term typechecking rules (judgement $\env,\Sigma,\Phi,\Gamma \vdash t:T$, function \verb+typeof+)

$$\infrule[(\mathcal{K}\mathrm{-variable})]
{( x:T ) \in \Gamma \quad \mbox{ or } \quad
( x:T := t ) \in \Gamma}
{\env,\Sigma,\Phi,\Gamma \vdash x:T}
\quad
\infrule[(\mathcal{K}\mathrm{-sort})]
{ ( s_1, s_2 ) \in \pts}
{\env,\Sigma,\Phi,\Gamma \vdash s_1 : s_2}
$$

$$\infrule[(\mathcal{K}\mathrm{-meta})]
{ ( x_1:T_1;\ldots ; x_n:T_n \vdash ?_i : T ) \in \Sigma \quad\mbox{or}\quad
  ( x_1:T_1;\ldots ; x_n:T_n \vdash ?_i : T := t ) \in \Phi \\
  \env,\Sigma,\Phi,\Gamma \vdash t_i : T_i[\vector{x_{i-1}/t_{i-1}}] \quad i \in \{1 \ldots n\}\\
}
{\env,\Sigma,\Phi,\Gamma \vdash ?_i[t_1;\ldots;t_n]:T[\vector{x_n/t_n}]}
$$

$$\infrule[(\mathcal{K}\mathrm{-constant})]
{(r : T) \in \env{}}
{\env,\Sigma,\Phi,\Gamma \vdash r : T}
$$

$$\infrule[(\mathcal{K}\mathrm{-definition})]
 {( \Sigma',\Phi',\mbox{\verb[language=grafite]+definition+}~d:T:=b) \in \env \quad\mbox{or}\quad
 ( \Sigma',\Phi',\mbox{\verb[language=grafite]+axiom+}~d:T)
 \in \env \\
 \Sigma' = \emptyset \andalso \Phi' = \emptyset }
 {(d : T) \in \env}
$$

$$\infrule[(\mathcal{K}\mathrm{-letrec})]
 {\left( 
  \begin{array}{l}
  \Sigma',\Phi',\\
   \letr[\\]{f_1 (\vector{x^1_{p_1} : T^1_{p_1}})}
        {T^1_{p_1+1}}{t_1}
        {f_n (\vector{x^n_{p_n} \!:\! T^n_{p_n}})}
        {T^n_{p_n+1}}{t_n}
  \end{array}
  \right) \in \env \\
  \Sigma' = \emptyset \andalso \Phi' = \emptyset \andalso 1 \leq i \leq n}
 {(f_i : \Pi \vector{x^i_{p_i} : T^i_{p_i}}.T^i_{p_i + 1}) \in \env}
$$

$$\infrule[(\mathcal{K}\mathrm{-letcorec})]
 {\left( 
  \begin{array}{l}
  \Sigma',\Phi',\\
   \letcr[\\]{f_1 (\vector{x^1_{p_1} : T^1_{p_1}})}
        {T^1_{p_1+1}}{t_1}
        {f_n (\vector{x^n_{p_n} \!:\! T^n_{p_n}})}
        {T^n_{p_n+1}}{t_n}
  \end{array}
  \right) \in \env \\
  \Sigma' = \emptyset \andalso \Phi' = \emptyset \andalso 1 \leq i \leq n}
 {(f_i : T_i) \in \env}
$$

$$\infrule[(\mathcal{K}\mathrm{-inductive})]
 {( \Sigma',\Phi',\mathcal{I}) \in \env\\ 
    \Sigma' = \emptyset \andalso \Phi' = \emptyset \andalso 1 \leq p \leq n}
 {(I^p_l : \Prod \vector{x_l:U_l}.A_p) \in \env}
$$

$$\infrule[(\mathcal{K}\mathrm{-constructor})]
 {( \Sigma',\Phi',\mathcal{I}) \in \env\\ 
    \Sigma' = \emptyset \andalso \Phi' = \emptyset \andalso 1 \leq p \leq n
    \andalso 1 \leq j \leq m_p}
 {(k^j_p : \Prod \vector{x_l:U_l}.K^j_p) \in \env}
$$

$$\infrule[(\mathcal{K}\mathrm{-lambda})]
{\env,\Sigma,\Phi,\Gamma \vdash T : S \\
 \env,\Sigma,\Phi,\Gamma \vdash S \mwhd S' \andalso \mbox{$S'$ is a sort or a meta} \\
 \env,\Sigma,\Phi,\Gamma \cup ( n:T ) \vdash u : U}
{\env,\Sigma,\Phi,\Gamma \vdash \lambda n:T.u : \Prod n:T.U}
$$

$$\infrule[(\mathcal{K}\mathrm{-product})]
{\env,\Sigma,\Phi,\Gamma \vdash T:s_1 \\
 \env,\Sigma,\Phi,\Gamma \cup ( n:T ) \vdash U:s_2 \\
 (s_1, s_2, s_3) \in \pts}
{\env,\Sigma,\Phi,\Gamma \vdash \Prod n:T.U : s_3}
$$

$$\infrule[(\mathcal{K}\mathrm{-letin})]
{\env,\Sigma,\Phi,\Gamma \vdash t : T' \\
 \env,\Sigma,\Phi,\Gamma \vdash T : S \andalso \env,\Sigma,\Phi,\Gamma \vdash T \downarrow T'  \\
 \env,\Sigma,\Phi,\Gamma \cup ( x:T := t ) \vdash u : U}
{\env,\Sigma,\Phi,\Gamma \vdash \mbox{\verb[language=grafite]+let+}~(x:T) :=
t ~\mbox{\verb[language=grafite]+in+}~u : U [x/t]}
$$

$$\infrule[(\mathcal{K}\mathrm{-appl-base})]
{\env,\Sigma,\Phi,\Gamma \vdash h:\Prod x:T.U \\
 \env,\Sigma,\Phi,\Gamma \vdash t:T'
 \andalso \env,\Sigma,\Phi,\Gamma \vdash T \downarrow T'}
{\env,\Sigma,\Phi,\Gamma \vdash h \; t : U [x/t]}
$$

$$\infrule[(\mathcal{K}\mathrm{-appl-rec})]
{\env,\Sigma,\Phi,\Gamma \vdash (h \; t_1) \; t_2 \cdots t_n : T}
{\env,\Sigma,\Phi,\Gamma \vdash h \; t_1 \; t_2  \cdots t_n : T}
$$

$$\infrule[(\mathcal{K}\mathrm{-match})]
{( \Sigma', \Phi',\mathcal{I} ) \in \env \andalso
 \Sigma' = \emptyset \andalso \Phi' = \emptyset \andalso
 \env,\Sigma,\Phi,\Gamma \vdash t : T \\
 \env,\Sigma,\Phi,\Gamma \vdash T \mwhd I^p_l \; \vector{u_l} \; \vector{u'_r} \\
 A_p [\vector{x_l/u_l}] = \Prod \vector{y_r:Y_r}.s \andalso
 K^j_p [\vector{x_l/u_l}] =
 \Prod \vector{x^j_{n_j} : Q^j_{n_j}}.I^p_l~\vector{x_l}~\vector{v_r} \qquad j = 1\ldots m_p\\
 \env,\Sigma,\Phi,\Gamma \vdash U : V \andalso
 \env,\Sigma,\Phi,\Gamma \vdash V \mwhd \Prod \vector{z_r:Y_r}.\Prod z_{r+1}:I^p_l~\vector{u_l}~\vector{z_r}.s' \\
 (s , s') \in \elim{}(\pts{})\\
 \env,\Sigma,\Phi,\Gamma \vdash \lambda \vector{x^j_{n_j} : P^j_{n_j}}.t_j : T_j \qquad \hfill j = 1, \ldots, m_p\\
 \env,\Sigma,\Phi,\Gamma \vdash T_j \downarrow \Pi \vector{x^j_{n_j} : Q^j_{n_j}}. U~\vector{v_r}~(k^p_j~\vector{u_l}~\vector{x^j_{n_j}}) \qquad \hfill j=1,\ldots,m_p}
{\begin{array}{rl}
 \env,\Sigma,\Phi,\Gamma \vdash &
 \match{t}{I^p_l}{U} \\
 &
 \;\; [ k^p_1~(\vector{x^1_{n_1} : P^1_{n_1}}) \Rightarrow t_1~|\ldots~|
        k^p_{m_p}~(\vector{x^{m_p}_{n_{m_p}} : P^{m_p}_{n_{m_p}}}) \Rightarrow t_{m_p}~]
 : U~\vector{u'_r}~t
 \end{array}
}
$$

\subsection{Term conversion rules}
Term conversion rules (judgement
$\env,\Sigma,\Phi,\Gamma \vdash T \downarrow T'$, function
\verb+are_convertible+; $\downarrow_=$ means \verb+test_eq_only = true+;
$\downarrow_\bullet$ means that the current rule must be intended as two
rules, one with all the $\downarrow_\bullet$ replaced by $\downarrow$, the
other with all the $\downarrow_\bullet$ replaced by $\downarrow_=$)

$$\infrule { 
 \env,\Sigma,\Phi,\Gamma \vdash T =_{\alpha} T'
}{
 \env,\Sigma,\Phi,\Gamma \vdash T \downarrow_= T'
}
$$

$$\infrule {
\env,\Sigma,\Phi,\Gamma \vdash T \downarrow_= T'
}{
\env,\Sigma,\Phi,\Gamma \vdash T \downarrow T'
}
$$

$$\infrule { 
 \Type_u \leq \Type_v \andalso \Type_v \leq \Type_u \andalso \textrm{are declared constraints
 (\cite{ck-sadhana}, Sect. 4.3)}
}{
 \env,\Sigma,\Phi,\Gamma \vdash \Type_u \downarrow_= \Type_v
}
$$

$$\infrule { 
 \Type_u \leq \Type_v \andalso \textrm{is a declared constraint
 (\cite{ck-sadhana}, Sect. 4.3)}
}{
 \env,\Sigma,\Phi,\Gamma \vdash \Type_u \downarrow \Type_v
}
$$

$$\infrule { 
}{
 \env,\Sigma,\Phi,\Gamma \vdash \Prop \downarrow \Type_u
}
$$

$$\infrule{
lc = t_1,\ldots,t_n \andalso lc' = t'_1,\ldots,t'_n \\
\mbox{for all $i = 1,\ldots,n$} \qquad \env,\Sigma,\Phi,\Gamma \vdash t_i \downarrow_\bullet t'_i
}{
\env,\Sigma,\Phi,\Gamma \vdash ?_j[lc] \downarrow_\bullet ?_j[lc']
}
$$

$$\infrule { 
 \env,\Sigma,\Phi,\Gamma \vdash T_1 \downarrow_= T_1'
 \andalso
 \env,\Sigma,\Phi,\Gamma \cup ( x : T_1 ) \vdash T_2
 \downarrow_\bullet T_2'
}{
 \env,\Sigma,\Phi,\Gamma \vdash \Prod x:T_1.T_2 \downarrow_\bullet \Prod x:T_1'.T_2'
}
$$

$$\infrule { 
 \env,\Sigma,\Phi,\Gamma \cup ( x : T ) \vdash t
 \downarrow_\bullet t'
}{
 \env,\Sigma,\Phi,\Gamma \vdash \lambda x:T.t \downarrow_\bullet \lambda
 x:T'.t'
}
$$
In the rule above, no check is performed on the source of the
abstractions, since we assume we are comparing well-typed terms whose types are
convertible.

$$\infrule{
\env,\Sigma,\Phi,\Gamma \vdash h \downarrow_\bullet h' \\
\mbox{for all $i = 1,\ldots,n$} \qquad \env,\Sigma,\Phi,\Gamma \vdash t_i
\downarrow_= t'_i
}{
\env,\Sigma,\Phi,\Gamma \vdash h \; \vector{t_n} \downarrow_\bullet h' \;
\vector{t'_n}
}
$$

$$\!\!\!\!\!\!\!\!\infrule{
\env,\Sigma,\Phi,\Gamma \vdash t \downarrow_\bullet t' \andalso
\env,\Sigma,\Phi,\Gamma \vdash U \downarrow_\bullet U' \\
\mbox{for all $i=1,\ldots,m_p$} \quad 
\env,\Sigma,\Phi,\Gamma \vdash \lambda \vector{x^i_{n_i} : P^i_{n_i}}.t_i
\downarrow_\bullet \lambda \vector{x^i_{n_i} : {P'}^i_{n_i}}.t'_i}
{ \begin{array}{ll}\env,\Sigma,\Phi,\Gamma \vdash\!\! &
 \match{t}{I^p_l}{U} [ k^p_1~(\vector{x^1_{n_1} : P^1_{n_1}}) \Rightarrow t_1
 | \ldots | k^p_{m_p}~(\vector{x^{m_p}_{n_{m_p}} : P^{m_p}_{n_{m_p}}}) \Rightarrow t_{m_p} ]~
 \downarrow_\bullet \\ &
 \match{t'}{I^p_l}{U} [ k^p_1~(\vector{x^1_{n_1} : {P'}^1_{n_1}}) \Rightarrow t'_1
 | \ldots | k^p_{m_p}~(\vector{x^{m_p}_{n_{m_p}} : {P'}^{m_p}_{n_{m_p}}}) \Rightarrow t'_{m_p}]
 \end{array}}
$$

$$\infrule{
\env,\Sigma,\Phi,\Gamma \vdash t \mwhd t' \andalso
\env,\Sigma,\Phi,\Gamma \vdash u \mwhd u' \andalso
\env,\Sigma,\Phi,\Gamma \vdash t' \downarrow_\bullet u'}{
\env,\Sigma,\Phi,\Gamma \vdash t \downarrow_\bullet u}
$$

\noindent
In the previous rule, $t'$ and $u'$ need not be weak head normal forms:
any term obtained from $t$ (respectively, $u$) by reduction (even non-head
reduction) will do. Indeed, the less reduction is performed, the more
efficient the conversion test usually is.
~\\

\subsection{Term reduction rules}Term reduction rules.

\[
\env,\Sigma,\Phi,\Gamma \vdash (\lambda x:T.u)~t ~\rhd_\beta~ u\subst{t}{x}
\]

\[
 \env,\Sigma,\Phi,\Gamma \vdash \mbox{\verb[language=grafite]+let+}~(x:T) :=
 t ~\mbox{\verb[language=grafite]+in+}~u 
~\rhd_\zeta~ u\subst{t}{x}
\]

$$\infrule{
( \emptyset,\emptyset,\mbox{\verb[language=grafite]+definition+}~d:T:=b )
\in \env}
{\env,\Sigma,\Phi,\Gamma \vdash d ~\rhd_\delta~ b}
$$

$$\infrule{
( \Gamma' \vdash ?_i : T := t ) \in \Phi
}
{\env,\Sigma,\Phi,\Gamma \vdash ?_i [u_1\;;\;\ldots\;;\;u_n]~\rhd_\delta~t[dom(\Gamma')/\vector{u_n}]}
$$


\[
 \begin{array}{rl}
 \env,\Sigma,\Phi,\Gamma \vdash &
 \match{k^p_i~\vector{t_l}~\vector{t'_{n_i}}}{I^p_l}{U} \\
 &
 [ k^p_1~(\vector{x^1_{n_1} : P^1_{n_1}}) \Rightarrow u_1 | \ldots |
   k^p_{m_p}~(\vector{x^{m_p}_{n_{m_p}} : P^{m_p}_{n_{m_p}}}) \Rightarrow u_{m_p} ]
 ~\rhd_\iota~u_i [\vector{x^i_{n_i}}/\vector{t'_{n_i}}]
 \end{array}
\]

$$\infrule{
\left( 
 \begin{array}{l}
 \emptyset, \emptyset, \\
   \letr[\\]{f_1 (\vector{x^1_{p_1} : T^1_{p_1}})}
        {T^1_{p_1+1}}{t_1}
        {f_n (\vector{x^n_{p_n} \!:\! T^n_{p_n}})}
        {T^n_{p_n+1}}{t_n}
 \end{array} \right) \in \env \\
 k \in \{1\ldots n\}}
{\env,\Sigma,\Phi,\Gamma \vdash f_k~u_1 ~ ... (k^i_j~\vector{v_{n_j}}) ... ~ u_m 
 ~\rhd_\mu~
 t_k[\vector{x^k_m} / u_1,\ldots,(k^i_j~\vector{v_{n_j}}),\ldots,u_m]}
$$
\noindent
Notice that $(k^i_j~\vector{v_{n_j}})$ must occur
in the position of the recursive argument of $f_k$. This implies that, for
this reduction to be performed, $f_k$ must be applied at least up to its
recursive argument.

$$\infrule{
\left( 
 \begin{array}{l}
 \emptyset, \emptyset, \\
   \letcr[\\]{f_1 (\vector{x^1_{p_1} : T^1_{p_1}})}
        {T^1_{p_1+1}}{t_1}
        {f_n (\vector{x^n_{p_n} \!:\! T^n_{p_n}})}
        {T^n_{p_n+1}}{t_n}
 \end{array} \right) \in \env \\
 k \in \{1\ldots n\}}
{ \begin{array}{ll}\env,\Sigma,\Phi,\Gamma \vdash &
 \match{f_k~\vector{u_q}}{I^p_l}{U} \\
 &
 [ k^p_1~(\vector{y^1_{n_1} : P^1_{n_1}}) \Rightarrow v_1 | \ldots |
   k^p_{m_p}~(\vector{y^{m_p}_{n_{m_p}} : P^{m_p}_{n_{m_p}}}) \Rightarrow v_{m_p} ]
 ~\rhd_\nu \\ &
 \match{t_k[\vector{x^k_q}/\vector{u_q}]}{I^p_l}{U} \\
 &
 [ k^p_1~(\vector{y^1_{n_1} : P^1_{n_1}}) \Rightarrow v_1 | \ldots |
   k^p_{m_p}~(\vector{y^{m_p}_{n_{m_p}} : P^{m_p}_{n_{m_p}}}) \Rightarrow v_{m_p} ]
\end{array}}
$$

\noindent
Notice that here $q$ can be zero.
\bibliographystyle{plain}
\bibliography{../BIBTEX/helm}

    \insert\copyins{\hsize.57\textwidth
\vbox to 0pt{\vskip12 pt%
      \fontsize{6}{7 pt}\normalfont\upshape
      \everypar{}%
      \noindent\fontencoding{T1}%
  \textsf{This work is licensed under the Creative Commons
  Attribution-NoDerivs License. To view a copy of this license, visit
  \texttt{http://creativecommons.org/licenses/by-nd/2.0/} or send a
  letter to Creative Commons, 171 Second St, Suite 300, San Francisco,
    CA 94105, USA, or Eisenacher Strasse 2, 10777 Berlin, Germany}\vss}}

\unskip

\end{document}